\DeclareMathOperator*{\argmax}{arg\,max}
\newtheorem{property}{Property}
\newtheorem{notation}{Notation}
\newtheorem{corollary}{Corollary}
\newtheorem{method}{Method}
\newtheorem{lemma}{Lemma}
\newtheorem{theorem}{Theorem}
\newtheorem{remark}{Remark}
\newtheorem{definition}{Definition}
\newcommand{\ZZ}{\mathbb{Z}}
\def\QED{\mbox{\rule[0pt]{1.5ex}{1.5ex}}}
\def\proof{\hspace*{-1.4em}{{\itshape Proof: }}} 
\def\endproof{\hspace*{\fill}~\QED\par\endtrivlist\unskip}
\newcommand{\minipagesize}{\textwidth}
\newcommand{\mybox}[1]{
\vspace{1mm}\fbox{\begin{minipage}{\minipagesize}#1\end{minipage}}\vspace{1mm}}
\newcommand{\tred}[1]{{\color{red}{#1}}}
\newcommand{\REPLACE}[2]{{#2}}
\newcommand{\ForConf}[1]{{}}
\begin{document}

\makeRR   
\tableofcontents
\newpage


\part{General Introduction and System Model}
\section{Introduction}
\subsection{Context}
TDMA scheduling is one of the energy efficient techniques used in WSNs. Each node is allocated a time slot for data transmissions. The main drawback of this pure TDMA is the underutilization of the bandwidth. To limit this problem, many solutions providing bandwidth spatial reuse are proposed (STDMA: Spatial Reuse TDMA))~\cite{STDMA1,STDMA2}. One technique to perform such spatial reuse is the graph coloring. Generally those solutions work as follows. The graph is colored such that no two nodes that have the same color interfere. As an example, let us consider the linear network depicted in Figure~\ref{fig:ntwk}. Assuming that interferences are limited to $2$ hops, the same color is assigned to nodes $A$ and $D$, while a second color is assigned to nodes $B$ and $E$ and a third color is assigned to nodes $C$ and $F$ because these couples of nodes are $3$ hops away. Consequently, they can transmit in the same slot without interfering (see cycle in Figure~\ref{fig:cycle}).
\begin{figure}[!h]
	\centering
		\subfigure[Sample linear network.]{\includegraphics[width=2in]{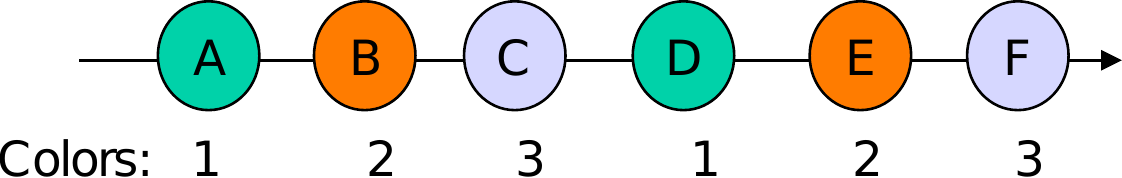}\label{fig:ntwk}}\hspace{11pt}
		\subfigure[STDMA cycle.]{\includegraphics[width=1.7in]{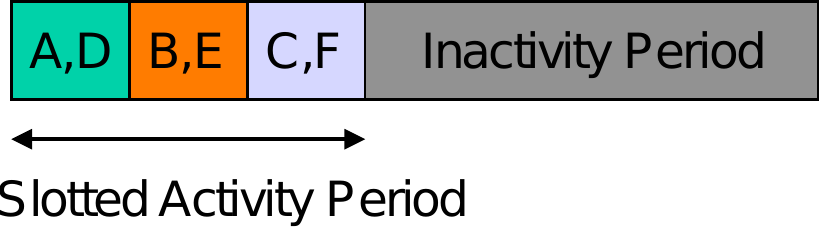}\label{fig:cycle}}
		\caption{Example of a network and the STDMA cycle associated.\label{fig:intro}}
\end{figure}

The slots form the activity period in the STDMA cycle (also called superframe). The remaining part of the STDMA cycle corresponds to the inactivity period where all nodes can turn off their radio and save energy. Consequently, this scheme avoids energy wasted in idle listening and collisions.

\subsection{Sources of Delays in STDMA}
While the main advantage of STDMA in WSNs is energy saving, it is also deterministic. Indeed, contention-based protocols suffer from collisions especially under high data rate. This factor makes the end-to-end delays unpredictable. 
For delay sensitive applications like fire detection, this is not acceptable. 
However, it does not suffice to assign slots to nodes in an arbitrary way to obtain minimum end-to-end delays. Two reasons may increase delays:
\begin{itemize}
\item \textbf{First reason}: Due to the fact that interfering nodes cannot have the same slot, the cycle obtained may be very long. A schedule of minimum length can be achieved by maximizing the reuse of time slots. Therefore, most existing algorithms aim at maximizing the number of transmissions scheduled in the same slot and enable spatial reuse by devising strategies to eliminate interferences~\cite{FASTDATA_Collec,Ergen10}. 

\item \textbf{Second reason}: While routing protocols address the delay issues by choosing the shortest path, the order of medium access by nodes on this path is also crucial. 
Let us consider the example of Figure~\ref{fig:ntwk} and the two schedules depicted in Figures~\ref{fig:badAlloc} and~\ref{fig:goodAlloc}. In the second schedule, colors purple and orange are associated with slots $2$ and $3$ respectively instead of slots $3$ and $2$ in the first schedule. Notice that, consequently, slots assigned to $E$ and $F$ are switched.

\begin{figure}[!h]
\centering
		\subfigure[First ordering: green$\rightarrow$slot1, purple$\rightarrow$slot2, orange$\rightarrow$slot3.]{\includegraphics[width=3.3in]{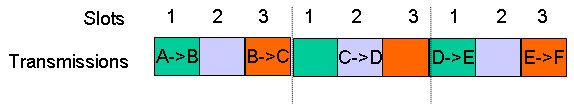}\label{fig:badAlloc}}
		\subfigure[Second ordering: green$\rightarrow$slot1, orange$\rightarrow$slot2, purple$\rightarrow$slot3.]{\includegraphics[width=2.5in]{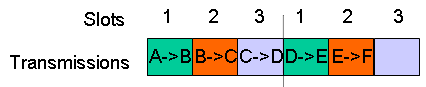}\label{fig:goodAlloc}}
		\caption{Examples of two color ordering for the same network.\label{fig:goodBad}}
\end{figure}
Following the first scheduling (Figure~\ref{fig:badAlloc}), node $A$ requires $9$ time slots to transmit data to node $F$. However, if nodes follow the second schedule (Figure~\ref{fig:goodAlloc}), only $5$ slots are required. 
The reason is that, in the first scheduling, node $B$ receives the packet from $C$ after the end of its slot. However, in the second schedule, slots are sorted according to the same sequence as nodes appear in the route from $A$ to $F$.
\end{itemize}

This result means that each time there is an inversion in the slot ordering compared to the appearance order of nodes in the path, delays are increased by one cycle. 
Consequently, to minimize the delays, in addition to minimizing the cycle length, finding a smart scheduling of the transmissions between any couple of transmitter and receiver is required. This is the main idea of this work. We focus on delay optimization of the scheduling done by a coloring algorithm called VCM~\cite{VCM, rr-VCM}. VCM is dedicated to grid networks and provides a periodic coloring obtained by repeating a specific color pattern. 

\subsection{Contributions}
We treat the aforementioned problem following two directions:

\begin{itemize}
\item \textbf{ First}, we study the performance of our baseline, \textbf{a random} assignment of slots (or colors) in the cycle, and we estimate \textbf{the average normalized delay per radio range}. 
We evaluate this delay by simulation (based on shortest-delay routing and greedy routing), and by a stochastic model (Section~\ref{sec:randomIRCO}).

\item \textbf{Second}, we propose a solution for delay and energy optimization. Our contributions are summarized as:\\
\indent $\circ $ In general, the delay optimization is either managed by routing or by scheduling separately. In our work, we integrate both approaches using cross-layering (i.e. jointly).\\ 
\indent $\circ $ Usually researchers consider that minimizing delays in STDMA is done by minimizing the STDMA cycle length. We believe that, in addition, slot misordering must be investigated. Hence, in our work, we consider both problems.\\
\indent $\circ $ We design a solution called \textbf{ORCHID ("Optimized Routing and sCHeduling in grID wireless sensor networks")} (see Section~\ref{sec:overview}) which includes: 
\begin{enumerate} 
\item A hierarchical routing architecture where sensor nodes route data to a predefined set of data aggregators and these aggregators route data to the data sink.
\item A coloring of nodes and aggregators that allows sensor nodes to collect data to the closest aggregator in a single cycle. 
\item An orchestration of the medium access of sensors and aggregators in the STDMA cycle, that is the colors ordering in the cycle.
\end{enumerate}

\item The produced delays are optimized: ORCHID guarantees data collection in a single cycle.
\item ORCHID supports link failure, multiple sinks and mobile sinks:
indeed, any node is able to reach any aggregator in a single cycle. Hence, this node can route data to any of these aggregators. This also makes ORCHID fault tolerant.
\item Paths between aggregators are also fast: less than or equal to one cycle. 
\item Simulations show that ORCHID outperforms shortest-delay path routing in terms of delays and energy. 
\item \textbf{An adaptation of ORCHID to general graphs under the SINR model} is also presented.
\end{itemize}

\subsection{Organization of the Report}
This report contains $6$ parts. 
\begin{enumerate}
\item The first part is a general introduction describing the context, the problem and the contributions.
Section~\ref{sec:background} presents the background of this work. In Section~\ref{sec:genAssum}, we present the general assumptions and the system model that we consider in this work. 
\item The second part deals with the random ordering of colors in the cycle. We evaluate the performance of the system under this assumption by stochastic model (see Section~\ref{sec:shortModel}) and by simulation (see Section~\ref{sec:simRandom}). 
\item The third part focuses on the proposed solution ORCHID. Section~\ref{sec:overview} gives an overview of the proposed solution and its main principles. Sections~\ref{sec:aggregators}, \ref{sec:ordering}, \ref{sec:highways} and \ref{sec:orchestration} detail the components of ORCHID. 
In Section~\ref{sec:results}, we present the performance evaluation of ORCHID.
\item The fourth part proposes a model (see Section~\ref{sec:model}) for the determination of the distribution of the average delay per range. 
\item The fifth part presents ORCHID for general graphs under the SINR model. 
\item The sixth part concludes the report.
\item The seventh part is an annex providing figures and curves illustrating some obtained results, in addition to those presented in the previous parts.
\end{enumerate}

\section{Background}\label{sec:background}
This section provides a state of the art about (1) Delay efficient TDMA, (2) Geographic routing and (3) VCM: Vector-Based Coloring Method since it is the coloring method that ORCHID uses.


\subsection{Delay Efficient TDMA}
Most delay-aware TDMA scheduling solutions for WSNs attempt to minimize the schedule length~\cite{FASTDATA_Collec, Ergen10}. Other works rather focus on delays induced by the misordering of slots. In general, this latter class includes two categories: the first one assumes that routing is given and optimizes the scheduling accordingly. Inversely, the second one deduces the routing from a given scheduling.
Examples of the first  category include~\cite{cheng2011,cheng2013,chatterje2009,DMAC04,serenaIJDIWC}.
In~\cite{cheng2011}, authors present an integer linear programming (ILP) formulation of the minimum latency link scheduling problem. Assuming that routes are given, the model is able to find the slot used by each link while minimizing the latency. 
We believe that the linear programming approach is not suitable for large networks. 
Solutions~\cite{chatterje2009,DMAC04,serenaIJDIWC} assume that a data gathering tree is given. They ensure that any node has a slot earlier than its parent in the routing tree. Hence only one cycle is required for data aggregation. 
For instance, DMAC~\cite{DMAC04} allocates slots to nodes based on their positions in the data gathering tree. Nodes at the deepest levels in the tree are assigned the earliest slots. In SERENA~\cite{serenaIJDIWC} which is a scheduling algorithm based on graph coloring, each node has a color higher than the color of its parent in the data gathering tree. Consequently, when slots are assigned according to the decreasing order of colors, any parent node has a slot that follows the slots of all its children. We have shown in~\cite{serenaIJDIWC} that the cost of such ordering is the increase of the cycle length. Unlike DMAC, SERENA is conflict free while ensuring the spatial reuse. 


In the second category, 
we can cite Yu et al.~\cite{FAN2007}. Indeed, each link $(i,j)$ is assigned a cost that is equal to the time difference between slots assigned to nodes $i$ and $j$ respectively. The advantage of this solution is that all routes (from any node to another) are possible without scheduling change, but on the other hand nothing makes them energy-efficient. CBS~\cite{Deng13} is based on a tree formed by cluster heads that relay data from other nodes. Both intra and inter-cluster communications are TDMA based. Furthermore, to minimize delays, each node has a slot after its parent in the tree. Also, the authors propose an asynchronous method for cluster heads communications.

\subsection{Geographic Routing}\label{sec:ref-geo-routing}
Geographic routing algorithms use position information to perform  packet forwarding. 
 Most geographic routing algorithms use a greedy strategy that tries to approach the destination in each step, e.g. by selecting the neighbor closest to the destination as a next hop. However, greedy forwarding fails in local minimum situations, i.e. when reaching a node that is closer to the destination than all its neighbors. 
Furthermore, there are other several greedy routing strategies that select the relay based on the distance and the direction towards the destination~\cite{surveyGeoR1}.~\cite{surveyGeoR1} and~\cite{surveyGeoR2} provide a good survey of geographic routing protocols in ad hoc networks. TTDD~\cite{TTDD} and PGP~\cite{PGP} support sink mobility and propose a hierarchical routing structure. This structure is based on a virtual grid which divides the sensor field into many cells. Then the sensors on the grid points will serve as dissemination nodes. PGP optimizes the induced overhead of TTDD. 


Geographic routing has also been proposed as a good approximation to minimize end-to-end delays. The idea~\cite{GERAF} is to carry packets to their destination in as few hops as possible, by making as large progress as possible at each relaying node. 
In this context, related works turn around, among others, the determination of the number of hops of the optimal path~\cite{k-hopStat} or the optimal transmission range as in~\cite{kleinrock} or the optimal delay per hop~\cite{CMAC}. 
Ma et al.~\cite{k-hopStat} proposed a geometric model to compute the conditional probability that a destination node has hop-count $h$ with respect to a source node given that the distance between the source and the destination is $d$. 

Not very far from our context,~\cite{naveen, CMAC, GERAF} proposed geographic routing to minimize delays considering a duty cycle based medium access. The heuristic that~\cite{GERAF, k-hopStat} use is the progress towards the destination.  
Unlike~\cite{GERAF, k-hopStat}, we consider that a good heuristic to compute one-hop optimal delays has to take into account not only the progress towards the destination but also the transmission latency. This is the heuristic that we use in the design of our model for delay per range determination (see Section~\ref{sec:model}). 


\subsection{VCM: Vector-based Coloring Method}\label{sec:vcm}
VCM~\cite{VCM,rr-VCM} is a graph coloring algorithm dedicated to dense grid networks. Indeed, the grid topology is one of the best methods to ensure full area coverage in surveillance applications~\cite{BFKKP10,CIQC02}. From the research point of view, working on grids can be a first step towards general graphs as it is easier especially for theoretical studies. \\
The intuitive idea of VCM is as follows. As the grid topology presents a regularity in terms of node positions, VCM produces a similar regularity in terms of colors and \textbf{generates a color pattern that can be periodically reproduced to color the whole grid}. An example of the coloring provided by VCM is given in Figure~\ref{fig:vcm}.
\begin{figure}[!h]
\centering
\includegraphics[width=0.5\linewidth]{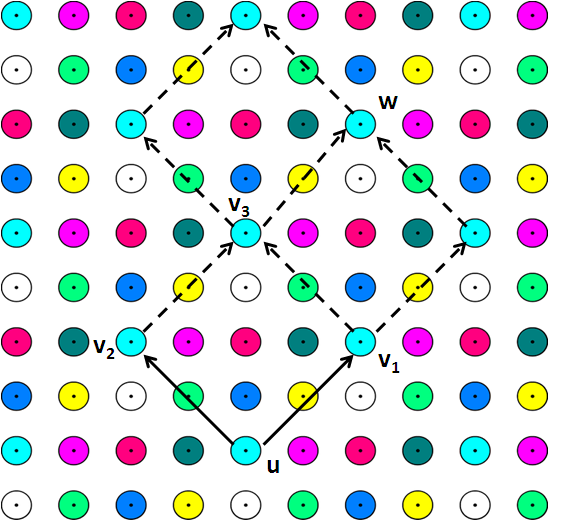}
\caption{An example of $3$-hop coloring provided by VCM (no 1 or 2 or 3-hop neighbors share the same color.)}\label{fig:vcm}
\end{figure}

The color pattern is defined by two vectors ($\vec{uv_1}$ and $\vec{uv_2}$ in Figure~\ref{fig:vcm}) called \textit{VCM generator vectors}. VCM finds a color pattern that minimizes the number of colors used and hence minimizes the length of the TDMA cycle. VCM coloring has the following properties:
\begin{enumerate}
\item Nodes having the same colors form a lattice of generator vectors $\vec{uv_1}$ and $\vec{uv_2}$. Moreover, nodes with the same color are nodes with the same relative coordinates according to the parallelogram they belong to. Consequently, color computation is based on these relative coordinates. 
\item All colors produced by VCM are the colors inside one parallelogram: The number of colors produced by VCM to color the whole grid for any transmission range $R$ is equal to the scalar product of the two generator vectors.
\item The number of colors produced by VCM does not depend on the size of the grid, but on the transmission range of sensor nodes.
\end{enumerate}
In this work, we adopt this algorithm because: \begin{enumerate}
\item Nodes compute their colors using a simple formula without requiring message exchange with neighbors. 
\item VCM uses a protocol-based interference model, but can be extended to SINR model (see Part~\ref{sec:orchid++}).
\item The schedule that VCM provides is optimal in terms of length when the transmission range tends to infinity. Indeed, we have proved in~\cite{rr-VCM}, that the number of colors is given by Property~\ref{theo:nbColor}.
\begin{property}\label{theo:nbColor} 
(From \cite{rr-VCM}) The number of colors $n_c(R) $ of an optimal periodic $h$-hop coloring, including VCM, for a fixed $h$
verifies:
$$n_c(R) = \theta R^2(1 + O(\frac{1}{R})) \mathrm{,~when~} R \rightarrow \infty \mathrm{,~with~} \theta = \frac{\sqrt{3}}{2}h^2.$$
\label{th:asymptotic}
\end{property}

\end{enumerate}

\section{System Model and General Assumptions}\label{sec:genAssum}
For the remaining of this report, we consider the following system model. \\
$\bullet $ \textbf{Network}: Fixed nodes form a grid of a grid step $=1$. The idea is to study our general strategy of coloring (ORCHID), and its performance on such grid networks, to provide insights on the performance of such specific networks. Later in Part~\ref{sec:orchid++}, we will consider extension to general graphs. \\
$\bullet$ \textbf{Sinks:}
Sinks can occupy a predefined set of positions of the grid (described later) and may move from one position to another.  \\
$\bullet $  \textbf{Neighborhood}: As in several wireless network studies, we adopt a unit disk communication model~\cite{gsw1998}. Any node can communicate directly with all nodes within a disk centered at itself with a communication range denoted $R$.\\
$\bullet$ \textbf{Application:} We are motivated by military surveillance applications where sensors monitor a region and transmit periodic measures or alarms to the sink. Long range communications are assumed; multi-hop routing is needed to deliver data to the sink. We suppose that data can be aggregated at any intermediate node and that packets are small. Hence, it becomes realistic to assume that any node is assigned a single time slot and that this time slot suffices for all its transmissions.\\
$\bullet $  \textbf{Transmission and Interference Models}: As in several wireless network studies, we adopt a unit disk communication model~\cite{gsw1998}. Any node can communicate directly with all nodes within a disk centered at itself with a communication range denoted $R$. The interference model is given by the coloring algorithm VCM, by default protocol-based: in Section~\ref{sec:orchid++}, the SINR interference model is considered and we adapt the proposed solution to this assumption.\\
$\bullet $ \textbf{Medium access}: is based on STDMA where each node has a slot to transmit data, and more than one node can share the same slot. These slots are granted based on colors computed by VCM. \\
$\bullet $ \textbf{STDMA cycle inactivity}:
in performance evaluations, for simplicity, we assume a STDMA cycle without inactivity period, and the delay is expressed as the number of slots necessary for a packet
to travel from a source to a destination\footnote{Notice that for a large delay $D$ of spanning several cycles of length $C$, the number of inactivity periods is directly estimated as $\lfloor D/C \rfloor$ with an error of at most $1$. Hence even with inactivity periods, for large $D/C$ mostly the value of the delay $D$ is of interest.} (as in Equation~\ref{eq:delay}).\\
\\
$\bullet $ \textbf{One-Hop Delay}: Let 
$D{(i,j)}$ be the induced one-hop delay for a packet of node $i$ that is further relayed by node $j$. Assume that node $i$ is assigned the slot $s_i$ while $j$ is assigned the slot $s_j$. Hence, $D{(i,j)}$ is equal to the difference between the slot where $i$ starts transmitting  and the slot $s_j$ where the node $j$ relays the last bit of the packet of $i$. Let $S$ be the total number of slots. Like in~\cite{FAN2007} and assuming that there is no inactivity period, we have:
\begin{equation}\label{eq:one-hop-delay}
D{(i,j)} = \left\lbrace
	\begin{array}{rl}%
  		s_j-s_i &\mbox{ if $s_j>s_i$} \\
  		S+s_j-s_i &\mbox{ if $s_j<s_i$}%
      \end{array} \right.%
\end{equation}
\\
$\bullet $ {\bf Route Delay:} on a route (path) $\mathcal{P}$ of successive nodes $(v_1,v_2,v_3, \ldots, v_k)$, the delay
is computed as the sum of successive one-hop delays except for the last node 
(which is the destination and does not relay the packet).
\begin{equation}\label{eq:delay}
D(\mathcal{P}) = D{(v_1,v_2)}+D{(v_2,v_3)}+\ldots+D{(v_{k-2},v_{k-1})}
\end{equation}
\\

\clearpage
\part{Random Color Ordering}
This part of the report assumes an arbitrary assignment of colors in the cycle. The objective is to evaluate the performance of the delay under such assumptions.


\section{Overview and Problem Statement}\label{sec:randomIRCO}\label{sec:IRCO}
In addition to the assumptions presented in Section~\ref{sec:genAssum},  we consider a grid colored with VCM and we assume that \textbf{the colors are randomly associated with slots}. This assumption let us define a basic method for building the global cycle. We denote this method \textbf{IRCO (``Iterated Random Color Ordering'')}. IRCO consists in randomly ordering colors of a single VCM cycle and then repeating successively this \emph{base VCM cycle} several times in a row before the inactivity period. For instance, one IRCO cycle is a repetition of several identical VCM cycles necessary to allow all data from any source to reach its destination. Notice that IRCO concerns colors ordering and is independent from the routing adopted. In the following, we will evaluate IRCO with shortest-path routing and greedy routing. 

The aim of this section is to answer the question: \textit{What is the performance of routing with a random coloring, in terms of delay?} 

We answer this question by designing a stochastic analytical model (brief description in Section~\ref{sec:shortModel} and more details in Section~\ref{sec:model}) and 
by simulation in Section~\ref{sec:simRandom}. Results of the two techniques are also compared at the end of Section~\ref{sec:simRandom}.

\section{Performance Analysis with a Stochastic Model}\label{sec:shortModel}
We design a model to estimate the average normalized delay per range. Unlike previous works (such as~\cite{GERAF, k-hopStat}), this model selects as relay the neighbor yielding the best normalized delay per range. We have proved the following property, in Section~\ref{sec:modelProof}:
\begin{property}\label{prop:eqModel}\label{prop:NDR-IRCO}
An estimate of the normalized delay per range is $\frac{3}{2}\theta + \frac{3}{4}\pi$, when the radio range $R$ grows and the number of colors is $\theta R^2$, where $\theta = \frac{\sqrt{3}}{2}h^2$ (for a h-hop coloring\footnote{a h-hop coloring prevents nodes up to $h$ hops to use the same color.}).
\end{property}
The implication of this result is rather strong: despite that fact that the number of colors grows as $R^2$, with random color ordering, to travel a distance 
of $K$ hops, the delay in ``number of slots'' stays equal to a constant ($\frac{3}{2}\theta + \frac{3}{4}\pi$)
multiplied by $K$.
For example, for a $3$-hop coloring with VCM, the estimate of the normalized delay per range is $\approx 14.05...$ which is rather low.

\section{Performance Analysis with Simulations}\label{sec:simRandom}
Assuming IRCO, the routing method that we use for simulations is one of:

\subsection{Shortest Delay Routing} 

Because the delay as defined in Section~\ref{sec:genAssum} is an additive
metric, it is possible to compute the ``shortest path'' using classical
algorithms (Bellman-Ford, Dijkstra) with this metric, as proposed for instance by~\cite{kim}.
The paths that are found are the shortest paths in terms  of delay (expressed as in Equation~\ref{eq:delay}) necessary to reach the destination. Note that equivalently, they yield the lowest normalized
delay par range computed over the whole route.

\subsection{Greedy Routing Heuristic}\label{sec:local-heur}



One possible strategy for routing on a colored grid, is the following: at every hop, decide what would be the next hop, based only on local information (and not global information). This is a strategy derived from geographic routing (see Section~\ref{sec:ref-geo-routing}).
For delay-aware routing, it is better to select a relay node presenting the best compromise between delay and progress. Such a \emph{greedy} heuristic is proposed, for instance, in CMAC~\cite{CMAC}: the selected next-hop is the one that yields the lowest ratio ``delay''/''progress towards the source''. 
We can adopt this heuristic in this context of grid network and predefined
STDMA. 
Note that this is equivalent to yielding the lowest normalized delay per range computed on one hop.
%
%
%
%
%
\section{Experimental Results}\label{sec:RCO-result}
%
The objective of this section is to provide statistics about the average normalized delays. The parameters of simulation are summarized in  Table~\ref{tab:sim-randomparam}.
The two simulated strategies are the greedy routing and the shortest-delay path denoted respectively ``\emph{Sim. greedy end-to-end}'' and ``\emph{Sim. shortest-delay path end-to-end}'' in Figure~\ref{fig:random-sim}.
For each strategy, source nodes were selected according to Table~\ref{tab:sim-randomparam}. The path from each source to the center point $(0,0)$ was computed. We first compute the  total end-to-end delay (expressed in slots). Then, we obtain the normalized delay per range by dividing this delay by 
the geometric distance (normalized by radio range). An average was taken on all selected nodes, and on multiple simulations.
\begin{table}[!h]%
\centering%
\begin{tabular}{|p{3cm}|p{5cm}|} \hline
Nb simulations & 100 random orderings with same VCM vectors \\ \hline
Radio range & from $1$ to $7$ with step $1/4$ \\ \hline
Routing strategy & shortest-delay path or greedy heuristic \\ \hline
Size of the grid & $L \times L$ with $L = 601$ \\ \hline \hline
Sources for Greedy routing & 100 random nodes at distance $D$ 
with $0.9 L \le D \le L$ \\ \hline
Sources for Shortest-Delay Path routing & all nodes at distance $D$
 with $0.9 L \le D \le L$ \\ \hline
\end{tabular}%
\caption{Simulation parameters\label{tab:sim-randomparam}}%
\end{table}%

For reference, we also computed the asymptotic estimate of the model given by Property~\ref{prop:eqModel} (plot with caption ``\emph{Model}'' in Figure~\ref{fig:random-sim}). 
%
\begin{figure}[!h]%
\centering%
\includegraphics[width=0.75\linewidth]{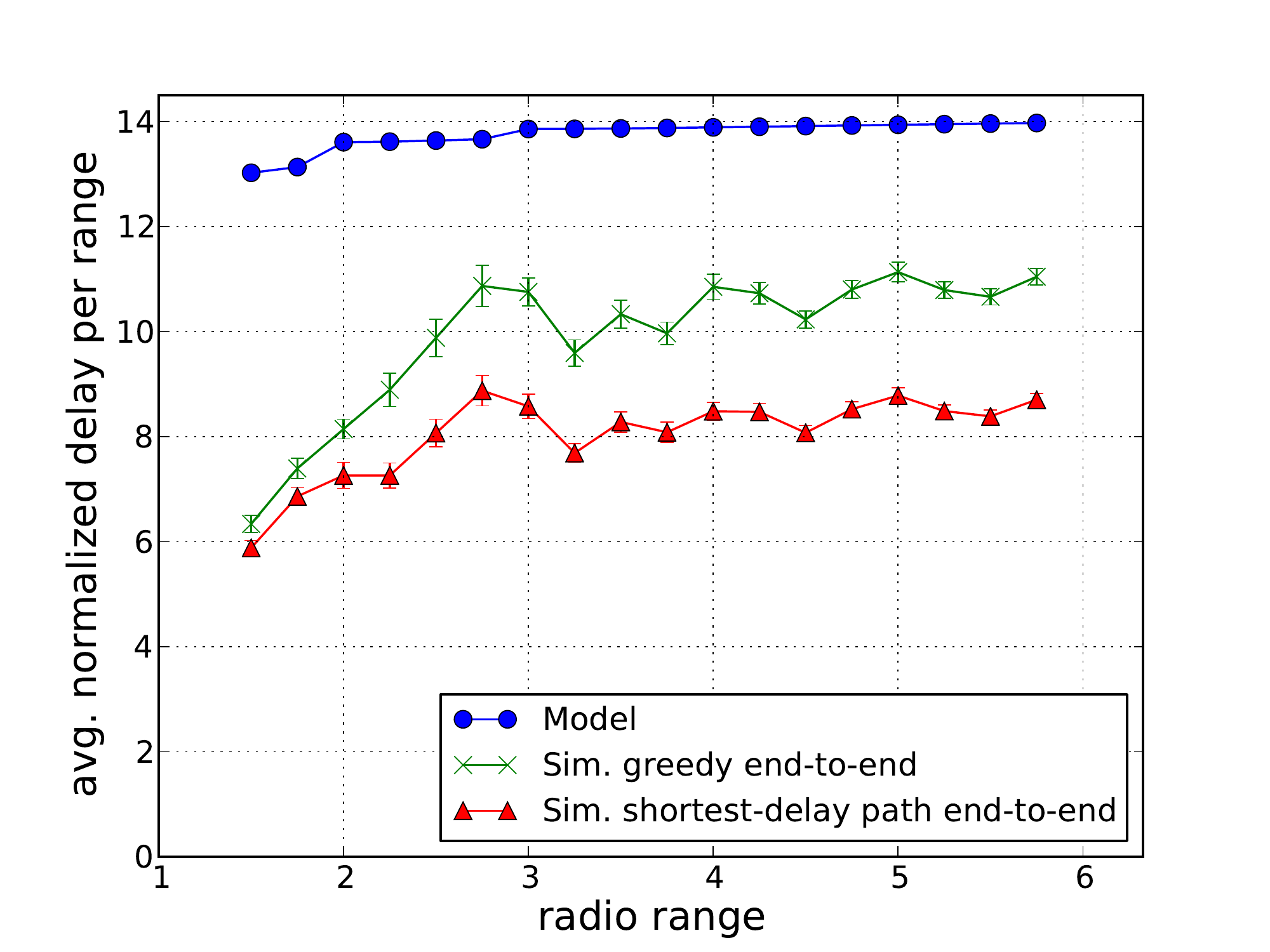}\vspace{-3mm}%
\caption{Normalized delay for random color orderings.}\label{fig:random-sim}%
\end{figure}%
The results are represented in Figure~\ref{fig:random-sim}.\\
The main result is that the normalized delay per range increases only slowly with the radio range (although the number of colors increases as the square of the range) and is quickly plateauing (as predicted by the model in Section~\ref{sec:model}).
This illustrates that the random ordering of colors yields a bounded normalized delay even when the number of colors increases to infinity. In other words, random ordering of colors is an optimal ordering when radio range increases.

Furthermore, for both cases of greedy routing and shortest-delay path,
the normalized delay per range is below $12$ (and the asymptotic estimate
is $14.0..$). Knowing that a loose lower bound is $1$, this shows that
the obtained delay is within an order of magnitude of an optimal STDMA.

Comparing shortest-delay path and greedy routing, we see that shortest-delay
path gives approximately $25\%$ better performance than greedy routing. The conclusion is that greedy routing performs quite well, if we take
into account the fact that it does require only local information. 
Indeed, shortest-delay path, requires, of course, a shortest-path computation
of every destination, which is cumbersome for WSNs, in a scenario where the destination(s) are not known in advance.
Notice here the irregularity of the curves. This is explained by the fact that colors by radio range area vary following the same irregularity. Indeed, when we plot the number of colors divided by the area of the disk of radio range ($\pi R^2$), we also see the same irregularity.

\section{Discussion about IRCO Performance}
The expected performance of IRCO is rather good, as deduced from results. However, random ordering suffers from drawbacks. The dimensioning of the entire cycle necessary to allow all data from any source to reach its destination is critical. It depends on the set of destinations and could be estimated from experimental results seen above. The drawback of IRCO is that either dimensioning is only estimated, potentially losing determinism (guarantees of arrival within one IRCO cycle), or the dimensioning requires pre-computing in advance all routes. In addition, all nodes are potential relays, and therefore should be awake on their neighbors slots. 
In the following, we investigate solutions without these drawbacks.


\clearpage
\part{ORCHID: ``Optimized Routing and sCHeduling in grID wireless sensor networks''}
\section{Overview about ORCHID}\label{sec:overview}
In this section, we will describe the problem statement and give an overview about the principles of the proposed solution ORCHID ("Optimized Routing and sCHeduling in grID wireless sensor networks").

\subsection{Assumptions}\label{sec:assumpProblem}
In addition to the assumptions considered in Section~\ref{sec:genAssum}, we present some other assumptions. We assume a data gathering application where sensors monitor a region and transmit periodic measures or alarms to the sink. We assume also that aggregation is possible: any sensor is able to aggregate data it receives with the data it generates. This assumption is realistic especially for low traffic or when sensor nodes perform some operations like computing the average value or the maximum value of the received data. We assume that each node is assigned a single time slot that suffices for all its transmissions. 

\subsection{Problem Statement}
When the scheduling is based on colors, ordering the slots of a given schedule is equivalent to associating each color with a time slot in the right order (cf. Figure~\ref{fig:intro}).\\

\textit{The problem statement: 
Given a grid colored by the periodic coloring VCM, we want to build routes taking into account these colors and then to order these colors in a schedule ensuring that data reach the sink in one cycle. 
}

We will see in the next sections that solving this problem is achieved by a cross layering between the scheduling and the routing.

\subsection{Discussion about Intuitive Solutions}
Before describing the proposed solution, we explain the different steps that lead us to find the solution that we will propose under the aforementioned assumptions.

\subsubsection{First solution: OPERA~\cite{operaSite}}
We have previously proposed the OPERA software composed of the routing EOLSR~\cite{operaSite} and the scheduling OSERENA~\cite{oserena} adapted to general topologies. EOLSR builds a routing spanning tree and OSERENA colors the tree such that the color of any node is higher than the color of its parent. Colors are ordered in the cycle according to the decreasing order. Consequently, this coloring constraint allows the data aggregation in one cycle. However, for dense and large grid networks, VCM is more optimized than OSERENA in terms of number of colors. 
Moreover, applying a routing algorithm that relies on a spanning tree like EOLSR and integrating it with a periodic coloring like VCM raises some issues as explained in the next section.

\subsubsection{Second solution: Spanning Tree with VCM}
This solution consists in building a spanning tree of the colored grid.
\begin{enumerate}
\item Assume first that we keep a random order of VCM colors on the STDMA cycle. This random order might result in slot inversion and hence additional delays as explained in the scenario of Figure~\ref{fig:goodBad}. Hence, what we should do is to order colors of nodes in the routes.
\item Having a spanning tree that contains all nodes raises the following constraints:
\begin{enumerate}
\item Due to the periodicity of VCM colors, it is impossible that any tree branch contains any color just once for large grids. In this case, delays might exceed one cycle.
\item Let a node $N$ and the route of this node towards the sink via the tree. We order the colors of intermediate nodes in this route such that the color of the first node that appears first in this route is associated to the first slot.
This ensures one STDMA cycle for data transmission for this node $N$. However, we cannot do this for all nodes, for two reasons. First, colors are shared between routes (color repetitions with VCM). Second, nodes are located at different geographic locations relative to the data sink. So, optimizing the order of these colors for some nodes according to one direction might make this order not suitable for other directions. 
\end{enumerate}
\end{enumerate}
 
The previous constraints lead us to define the following principles. 
Instead of building a unique spanning tree including all nodes where it is impossible to have delay optimal routes towards the sink for all nodes, we build several trees. Roots of these trees are particular nodes called aggregators. Then, we order the colors of nodes on these trees. Furthermore, each color does not appear more than once in any tree branch. Any aggregator routes data to another aggregator until data is received by the data sink. In the following two sections, a detailed description of the proposed solution ORCHID is given.

\subsection{Principles of ORCHID}
Figure~\ref{fig:overview} illustrates an example of a grid colored by VCM. $u_1$ and $u_2$ are the generator vectors of VCM.
\begin{enumerate}
\item 
Over this grid, ORCHID builds a hierarchical routing: nodes transmit data to a set of aggregators which are the vertices of VCM pattern, and these aggregators route data to the sinks (such as node $A$ in Figure~\ref{fig:overview}). The sinks should be placed themselves at aggregators.
\item Routing and scheduling must ensure that: 
	\begin{itemize}
	\item Any node reaches the closest aggregator in a single cycle. The path followed by this node is denoted '\textit{Route}'.
	\item The number of nodes that are able to reach more than one aggregator in a single cycle must be maximized. Hence, if a node cannot reach the closest aggregator because of link failure, it can route over another aggregator to reach the data sink. Also, this is useful also when there are multiple sinks. In this case, any node can select the closest aggregator for every sink.
	\item Paths between aggregators should be fast. They are denoted \textit{`Highways'} in the remaining of this report.
		\end{itemize}

\item $Routes$ and $Highways$ are activated successively in the STDMA cycle as depicted in Figure~\ref{routeHighways}. 
\begin{figure}[!h]
\centering
\includegraphics[width=1.8in]{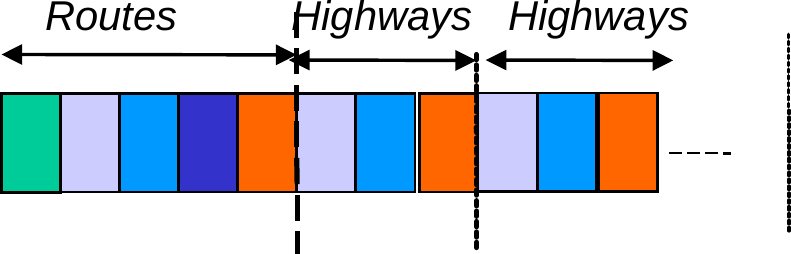}
\caption{Orchestration of $Routes$ and $Highways$ slots in the STDMA cycle.\label{routeHighways}}
\end{figure}
The first activity period of the STDMA cycle is dedicated to $Routes$: activation of VCM colors. Each VCM color occurs only once. 
Unlike the first period which appears only once in the global STDMA cycle, the second period corresponds to the repetition of the $Highway$ cycle a number of times sufficient to allow data sent by aggregators to reach the sink. We will see later that this cycle is divided into sub-cycles. Notice that the number of scheduled colors in this period is smaller than in the first one.
\item The scheduling is based on VCM colors. More precisely, no new colors are used and the spatial reuse obtained by VCM must be guaranteed. Furthermore, these colors must be ordered in the STDMA cycle for both $Routes$ and $Highways$ periods, in such a way that the end-to-end delays are minimized.

\item Any node is active only during the slots associated with its color and the colors of its potential transmitters and can sleep the remaining time, in both $Routes$ and $Highways$ periods.
\end{enumerate}
\subsection{Overview and Steps of ORCHID}
This section provides a more detailed description of ORCHID. Figure~\ref{fig:steps} illustrates its different steps and components.

\begin{figure}[H]
\centering
\includegraphics[width=3.5in]{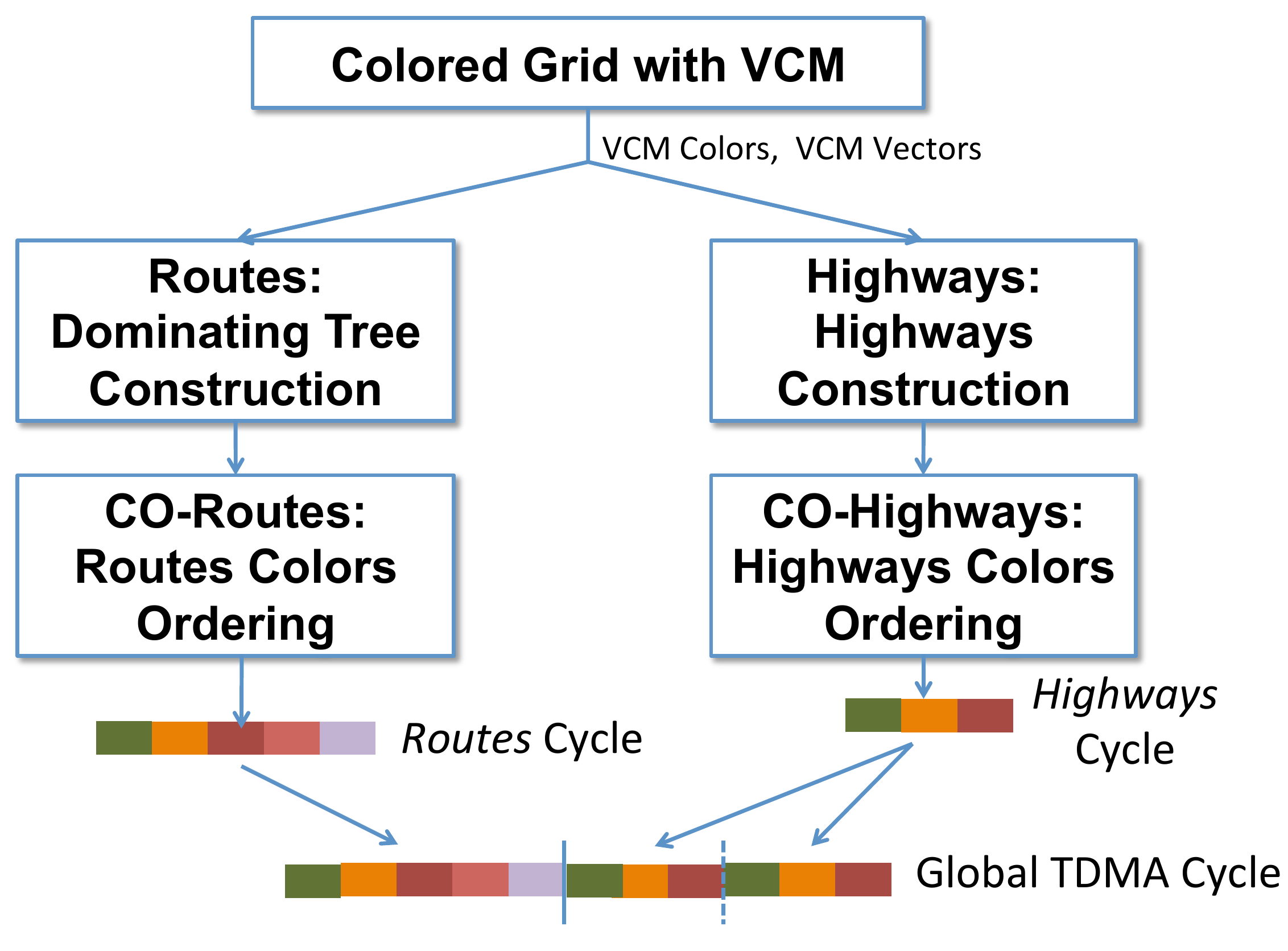}
\caption{Steps of ORCHID.\label{fig:steps}}
\end{figure}
First, the solution is applied to a colored grid of sensors where each node has a color computed by VCM. The principles of the solution can be presented in the following components:

\begin{enumerate}
\item \textbf{Hierarchical routing structure}:
Figure~\ref{fig:overview} illustrates the hierarchical routing proposed. It is composed of $Routes$ (dominating trees) and $Highways$. $u_1$ and $u_2$ are the VCM generator vectors.
\begin{figure}[H]
\centering
\REPLACE{\includegraphics[width=3.2in]{overviewNew}}{\includegraphics[width=3.5in]{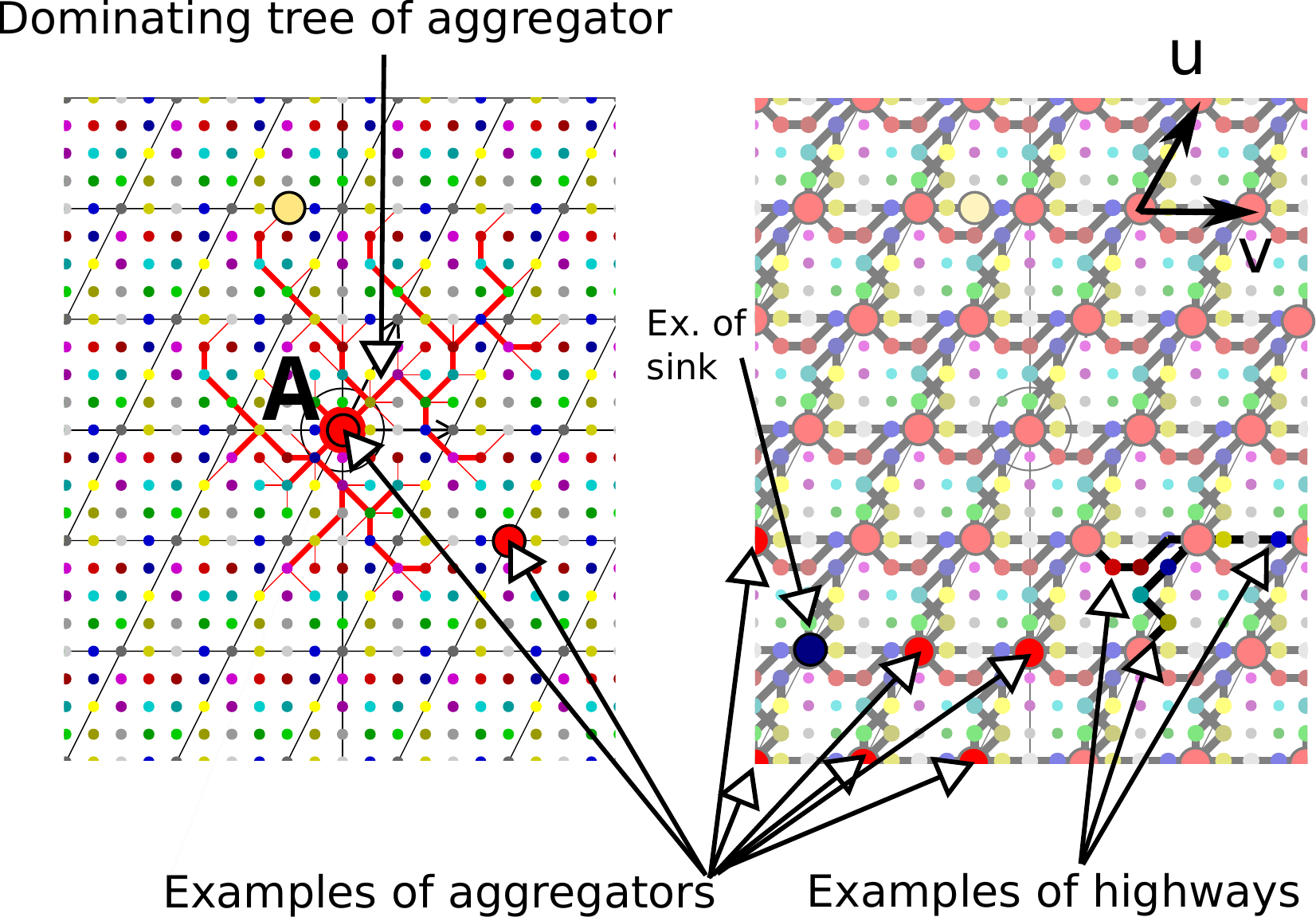}}
\caption{Hierarchical routing structure (left: routing to one aggregator on its dominating tree -- right: mesh routing between aggregators).\label{fig:overview}}
\end{figure}
\begin{enumerate}
\item \textbf{\textit{Routes}}:
\begin{itemize}

\item A set of aggregators is chosen: the vertices of VCM patterns (see Figure~\ref{fig:overview} as an example).
\item Each aggregator $A$ is the root of a dominating tree $\mathcal{T}_{A}$. A dominating tree is defined by Definition~\ref{def:domTree}.
\begin{definition} \label{def:domTree}
A tree $\mathcal{T}$ dominates a set of nodes if any node of this set is either in the tree or has at least one neighbor belonging to this tree. Such a neighbor is called $dominator$.
\end{definition}

In other words, we build a connected dominated set (CDS) constructed as a tree which is rooted at the aggregator. The tree rooted at any aggregator $A_2$ is a copy of the tree rooted at another aggregator $A_1$ after translation by the vector $\vec{A_1A_2}$. 

\item For any aggregator $A$, any tree $\mathcal{T}_{A}$ is as large as possible as long as any color appears only once in this tree branch. 
\item Any node in the grid is necessarily dominated by the tree rooted at the closest aggregator and possibly by trees rooted at other aggregators. As previously said, these other trees are copies of the tree rooted at the closest aggregator after being translated.
\end{itemize}

\item \textbf{\textit{Highways}}: are used by aggregators to join the data sink.
Except aggregators in the border of the grid, any other aggregator has 
$4$ neighboring aggregators (in the grid formed by aggregators). We build $4$ shortest paths between this aggregator and each of these $4$ neighboring aggregators. These shortest paths are called $Highways$.

\begin{remark}
$Routes$ and $Highways$ construction takes into account the colors of the nodes and is based on the same algorithm for all nodes. Hence, the obtained trees and $Highways$ are the same for all aggregators (as illustrated in Figure~\ref{fig:overview}). For this reason, in the following, we only reason about one aggregator.
\end{remark}

\item \textbf{Routing}: 
For \textit{routing}, any node should maintain the list and positions of the aggregators that it can reach in a single cycle. Given a sink with a known position, this node determines the closest aggregator to this sink and routes data over the dominating tree rooted at this aggregator. Hence, it transmits data either to its parent if it belongs to the tree, or its dominator otherwise.
Then, we use geographic routing to determine for each aggregator the following aggregator over $Highways$.
\end{enumerate}
\item \textbf{Ordering colors}:
Recall that the slot assignment is based on colors computed by VCM. To ensure minimized delays avoiding slot misordering, ORCHID performs \textbf{Color Ordering} of $Routes$ and $Highways$. The Color Ordering is defined as:
\begin{definition}[Color Ordering]
Given a set of colored nodes in a given path, each node color is associated with a time slot in such a way that the color of the node that appears first in the path (the farthest node from the destination) is associated with the first slot. 
\end{definition}
Color Ordering ensures that nodes along a path are activated in the STDMA cycle in the same order of their appearance in this path. Hence, slot misordering delays are avoided and data are delivered within the same cycle.
\textit{Color Ordering} of $Routes$ and $Highways$ consists in:
\begin{enumerate}
\item \textbf{\textit{CO-Routes (Color Ordering of $Routes$)}}: 
The color of the first node linked to the tree is associated with slot $1$, the color of the second node is associated with color $2$, etc. Then, colors of remaining nodes are associated with remaining slots.
This method ensures that the slot granted to a node increases with the node level in the tree. Also, any dominated node has a slot after its $dominator$.

\item \textbf{\textit{CO-Highways (Color Ordering of $Highways$)}}: Similarly, for $Highways$ the order of colors is the same as the order of the nodes on the routes from one aggregator to the next one. 
\end{enumerate}

\item \textbf{Orchestration}:
The global STDMA cycle is composed of the $Routes$ cycle and the $Highways$ cycle. Notice that the $Highways$ cycle is smaller. The $Highways$ cycle itself is composed of sub-cycles repeated a number of times sufficient to allow the farthest aggregator to reach the data sink.

\end{enumerate}

In the following, we will detail the steps of ORCHID.

\section{$Routes$: Dominating Tree Construction}\label{sec:aggregators}
Aggregators are nodes with coordinates $\alpha u+\beta v$ where $u$ and $v$ are VCM generator vectors and $\alpha$ and $\beta \in \ZZ$. Each aggregator is the root of a dominating tree.
Hence, ORCHID deals with the dominating tree construction problem. This problem~\cite{CDSExample1, CDSExample2} is similar to the CDS (Connected dominating Set) problem~\cite{CDSSurvey} but the obtained structure must be a tree. Both of these problems are NP-complete~\cite{CDSSurvey, CDSExample1}. The dominating tree problem is defined in~\cite{CDSExample1} as follows: given a graph $G$ with weighted edges, DomTree problem asks to find a minimum total edge weight tree $T$ such that each vertex is either in $T$ or has a neighbor in $T$. In our case, the objective of ORCHID is to connect as many as possible sensors to their closest aggregator. Furthermore, constraints concern essentially colors and behind them the induced delays. More specifically, we define the problem of \textbf{\textit{Dominating Tree Construction}} denoted \textbf{\textit{DomTree}} as: \\
~~\\
\mybox{
\textit{\textbf{DomTree Problem}: Finding a dominating tree with:
\begin{itemize}
\item The objectives:
\begin{enumerate}
\item[\textbf{O1.}] Each node dominated by a tree must reach the root of this tree in one cycle. 
\item[\textbf{O2.}] The number of aggregators that any node can reach in a single cycle is maximized. 
\item[\textbf{O3.}] Any tree dominates the largest possible number of nodes in the grid.
\end{enumerate}
\item The constraints:
\begin{enumerate}
\item[\textbf{C1.}] Each color appears only once in any branch of the dominating tree. 
\item[\textbf{C2.}] Between any two branches of the tree, any two colors appear always in the same order when going on a path towards the root.
\end{enumerate}
\end{itemize}
}
}
The constraints \textbf{C1.} and \textbf{C2}. make the problem different
from (and harder than) a classical connected dominating set/dominating tree 
problem: in fact, they force the tree to be bounded (unlike any 
classical CDS). \\


In this section, we describe our algorithm to answer this problem. 
We start by introducing the following notation where $u$ and $v$ are VCM generator vectors.
\begin{notation}[Parallelogram $\mathcal{P}_{A}$]
By abuse of notation, for a vector $w$ of coordinates $x_w,y_w$, we denote $\frac{w}{2}$ the vector of coordinates $\lfloor \frac{x_w}{2}\rfloor$ and $\lfloor \frac{y_w}{2}\rfloor$\footnote{$\lfloor . \rfloor$ is the symbol for integer part.}.
Then, we define for any aggregator $A$ of coordinates $(0,0)$  the parallelogram $\mathcal{P}_{A}$ with vertices $(0,u,v,u+v)$ translated by $-(\frac{u}{2}+\frac{v}{2})$. 
\end{notation}


\subsection{Principles of the Construction of a Dominating Tree}
We consider the aggregator with coordinates $(0,0)$. The obtained tree will be repeated for all other aggregators (see Remark~1). We propose a connected dominating tree construction algorithm based on a ``greedy" heuristic, as follows:

\begin{enumerate}
\item The tree is built in $2$ steps: for the aggregator $A$ of coordinates  $(0,0)$, we first build a connected dominating tree in $\mathcal{P}_{A}$ only. Then, we extend the obtained tree via the domination of nodes outside this area. The same algorithm is used for the interior and the exterior of this area.
\item The first node added to the tree is its root. All its neighbors are then dominated. 
\item More generally, when any node is added to the tree, all its undominated neighbors are marked as dominated. The algorithm will not attempt to dominate them later.
\item Nodes are dominated in the order of their increasing number of hops towards the root. For two nodes that are at the same number of hops from the root, the node with the smallest geographic distance to the root is dominated first.
\item Nodes belonging to $\mathcal{P}_{A}$ must have dominators inside $\mathcal{P}_{A}$.

\item \textbf{Nodes having VCM colors already used in the tree (color is identical to the color of a node already in the tree) cannot be dominated and cannot be used as dominators.}


\item \textbf{When any node $N$ is linked to the tree after having dominated one node, other nodes that have the same color and that are already dominated, are also linked to the tree.}

\item Selection of dominators is based on a priority denoted $priority$. To define this priority, we use a heuristic. The priority of any node is the total number of the undominated neighbors that will be dominated if this node is added to the tree (with all other candidates of the same color).
\item Steps {\bf 6.} and {\bf 7.} are the greatest difference with classical CDS.
\end{enumerate}

 
\subsection{Algorithm of the Construction of a Dominating Tree}
Given the aggregator $A$ of coordinates $(0,0)$, Algorithms~\ref{algo:domTree},~\ref{algo:addNode}  and~\ref{algo:domNode} illustrate the construction of the dominating tree $\mathcal{T}_A$ rooted at $A$.
\begin{algorithm}
\caption{\textsc{Dominating-Tree-Construction}}\label{algo:domTree}
\begin{algorithmic}[1]
\begin{scriptsize}
\STATE Input: the grid $G$, the aggregator $A$ and the $priority$ (the heuristic for the selection of the $dominator$)
\STATE Output: the dominating tree $\mathcal{T}$  
\STATE $dominatedNodes$ = $\{A\}$
\STATE $treeColors$ = $\{color~of~A\}$ 
\STATE Add all the neighbors of $A$ to $dominatedNodes$
\STATE $insideNodes$ = list of nodes in $\mathcal{P}_{A}$ sorted according to the increasing distance in number of hops relative to the root of the tree. In case of equality, consider the node with the smallest geographic distance to the sink.
\STATE $inside=True$
\FORALL {($toBeDominated$ $\in$ $insideNodes$) /*Dominate nodes inside $\mathcal{P}_{A}$*/
}
	\STATE $dominator$ = \textsc{Dominate-Node} ($\mathcal{T}$, $toBeDominated$, $inside$,  $dominatedNodes$, $treeColors$, $priority$)
	\IF {($dominator \neq NULL$)} 
	\STATE \textsc{Add-Nodes} ($dominator$, $\mathcal{T}$, $dominatedNodes$, $treeColors$) /*Adding dominated nodes with the same color as $dominator$*/
	\ENDIF
\ENDFOR
\STATE $outsideNodes$ = list of nodes $\notin$ $\mathcal{P}_{A}$ sorted according to the increasing distance in number of hops relative to the root of the tree. In case of equality, consider the node with the smallest geographic distance to the sink.

\STATE $inside=False$ 

\FORALL {($toBeDominated$ $\in$ $outsideNodes$)/*Dominate nodes outside $\mathcal{P}_{A}$*/}
	\STATE $dominator$ = \textsc{Dominate-Node} ($\mathcal{T}$, $toBeDominated$, $inside$, $dominatedNodes$, $treeColors$, $priority$)
	\IF {($dominator \neq NULL$)} 
	\STATE \textsc{Add-Nodes} ($dominator$, $\mathcal{T}$, $dominatedNodes$, $treeColors$) /*Adding dominated nodes with the same color as $dominator$*/
	\ENDIF
\ENDFOR
\end{scriptsize}
\end{algorithmic}
\end{algorithm}
\vspace{-11pt}
\begin{algorithm}
\caption{\textsc{Add-Nodes} ($dominator$, $\mathcal{T}$, $dominatedNodes$, $treeColors$)}\label{algo:addNode}
\begin{algorithmic}[1]
\begin{scriptsize}
\FORALL {($node$ $\in$ the grid $G$ with the same color as $dominator$)}
	\IF {($node$ $\in$ $dominatedNodes$)}
	\STATE hasDominated = False	
	\FORALL {($neigh$ neighbor of $node$)}
		\IF {($neigh \notin dominatedNodes$) and (color of $neigh \notin treeColors$)} 
		\STATE hasDominated = True		
		\STATE Add $neigh$ to $dominatedNodes$ 		
		\STATE The dominator of $neigh$ = $node$ 		
		\ENDIF	
	\ENDFOR
	\ENDIF	
	\IF {(hasDominated == True)}
		\STATE Add $node$ to $\mathcal{T}$ \label{line:addT2}
		\STATE The parent of $node$ = the dominator of $node$
	\ENDIF	
\ENDFOR
\end{scriptsize}
\end{algorithmic}
\end{algorithm}


\begin{algorithm}
\caption{\textsc{Dominate-Node} ($\mathcal{T}$, $toBeDominated$,  $inside$, $dominatedNodes$, $treeColors$, $priority$)}\label{algo:domNode}
\begin{algorithmic}[1]
\begin{scriptsize}

\IF {($toBeDominated$ $\in$ $\mathcal{T}$) OR ($toBeDominated$ $\in$ $dominatedNodes$) OR (color of $toBeDominated$ $\in$ $treeColors$) } \label{algo-line:nodeom1}
\STATE return NULL /*This node cannot be dominated*/ \label{algo-line:nodeom2}
\ENDIF
\STATE /*Try to dominate node $toBeDominated$ */
\FORALL {($neighbor \in$ neighbors of node $toBeDominated$)}
	\IF {($inside == True$ and $neighbor \notin \mathcal{P}_{A}$)}
	\STATE continue /*next neighbor*/	
	\ENDIF
	\IF {(($neighbor$ $\notin$ $\mathcal{T}$) OR ($neighbor$ $\notin$ $dominatedNodes$) OR (color of $neighbor$ $\in$ $treeColors$))}\label{line:noDom}
	\STATE continue \label{line:continue}/*next neighbor*/	
    \ELSE 
    \STATE {Add $neighbor$ to $possibleDominators$} \label{line:add}
    \ENDIF 
\ENDFOR 
\IF {($possibleDominators$ is empty)}
\STATE return NULL /*Cannot dominate this node*/
\ENDIF 
\STATE Sort $possibleDominators$ according to $priority$
\STATE $dominator$ = first node sorted in the list $possibleDominators$
\STATE Add $dominator$ to $\mathcal{T}$ \label{line:addT}
\STATE The parent of $dominator$ = the dominator of $dominator$
\STATE Add color of $dominator$ to $treeColors$
\STATE The dominator of $toBeDominated$ = $dominator$
\STATE return $dominator$
\end{scriptsize}
\end{algorithmic}
\end{algorithm}


\newpage
\subsection{Properties of a Dominating Tree}\label{sec:propoDomTree}
In this section, we present some results concerning the dominating tree. 

\subsubsection{General Properties}
\begin{theorem}
Algorithm~\ref{algo:domTree} generates a tree.
\end{theorem}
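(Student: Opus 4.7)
The plan is a straightforward induction on the sequence of insertions into $\mathcal{T}$. Throughout the execution of Algorithm~\ref{algo:domTree}, I would maintain the following invariant:

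\textbf{(Inv)} $\mathcal{T}$, together with the parent pointers set by the algorithm, forms a rooted tree with root $A$; moreover, every $u \in dominatedNodes$ already carries a \emph{dominator} attribute pointing to a node currently in $\mathcal{T}$.

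For the base case, $\mathcal{T}$ is implicitly initialized as $\{A\}$ (the root), and after the initial block one has $dominatedNodes = \{A\} \cup \Neigh{A}$, where each neighbor of $A$ has its dominator set (at least conceptually) to $A$. The invariant is then trivial since $A \in \mathcal{T}$ is a single-vertex tree.

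For the inductive step, I would observe that $\mathcal{T}$ grows only at line~\ref{line:addT} of Algorithm~\ref{algo:domNode} and line~\ref{line:addT2} of Algorithm~\ref{algo:addNode}; in both cases the newly added node $v$ has its parent immediately set to \emph{the dominator of $v$}. Three things must be checked. First, $v$ is a genuinely new member of $\mathcal{T}$: in \textsc{Dominate-Node} the chosen $dominator$ has a color not in $treeColors$, whereas every member of $\mathcal{T}$ has had its color added to $treeColors$ when it entered, so $v \notin \mathcal{T}$; in \textsc{Add-Nodes} the added $node$ shares the color of the just-promoted $dominator$, and this color was absent from $treeColors$ immediately before that call, so no earlier tree member carries that color either. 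Second, the parent just assigned to $v$ already lies in $\mathcal{T}$: this is exactly the second clause of (Inv) applied to $v$, since $v \in dominatedNodes$ at the moment of its promotion. Third, because $v$ attaches to a strictly earlier member of $\mathcal{T}$, no cycle is created and the connectivity of $\mathcal{T}$ to $A$ via parent pointers is preserved. To close the inductive step I would also note that the new neighbors $neigh$ that \textsc{Add-Nodes} inserts into $dominatedNodes$ receive their dominator attribute pointing to $node$, which is in $\mathcal{T}$ after the promotion, re-establishing (Inv).

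The main obstacle I anticipate is the dual role of the dominator attribute: it simultaneously witnesses domination and becomes the parent pointer once the dominated node is promoted to the tree. The key fact that makes the argument go through is that dominator pointers are written once (when a node first enters $dominatedNodes$) and are never modified, while $\mathcal{T}$ is monotonically growing; hence the pointer a node carries at the time of its promotion still references a current member of $\mathcal{T}$. Once (Inv) is maintained throughout execution, the final $\mathcal{T}$ is a connected acyclic graph rooted at $A$ under the parent relation, i.e., a tree, which concludes the proof.
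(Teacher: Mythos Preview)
Your proof is correct and follows the same underlying idea as the paper: every node entering $\mathcal{T}$ receives a parent pointer to a node that is already in $\mathcal{T}$, so the structure remains a rooted tree throughout. The paper's proof is a two-line remark (``a new link is added between this node and its dominator, hence no disconnected parts''), whereas you flesh this out into a proper inductive invariant; in particular, your second clause of (Inv)---that every member of $dominatedNodes$ carries a dominator pointer into the current $\mathcal{T}$---is exactly the bookkeeping the paper omits but tacitly relies on, and your use of $treeColors$ to argue that each promoted node is genuinely new is likewise absent from the paper's argument.
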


\proof
When any node is dominated, a new link is added to the tree between this node and its dominator. Consequently, there is no disconnected parts in the grid. Hence the result.
\endproof

\begin{lemma}\label{lemma:domNeigh}
At each iteration, the closest undominated node $N$ to the root $A$ has necessarily at least one neighbor that is dominated by $\mathcal{T}_{A}$.
\end{lemma}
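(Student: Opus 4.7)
The plan is to prove the statement by a shortest-path argument based on the hop-distance from the root. I will use the fact that the dominating tree construction always starts with $A$ in the tree and $A$'s neighbors marked as dominated, so in particular the set of dominated nodes at any iteration is connected to $A$ via short paths in the grid.

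First I would fix an arbitrary iteration and let $N$ be the closest undominated node to $A$, measured in number of hops (which matches the ordering used by Algorithm~\ref{algo:domTree} in \texttt{insideNodes} / \texttt{outsideNodes}). Let $d \geq 1$ be the hop-distance from $A$ to $N$, noting $d \geq 2$ since all nodes within one hop of $A$ are inserted into $dominatedNodes$ during initialization (Line~5 of Algorithm~\ref{algo:domTree}). Then I would pick any shortest path $A = v_0, v_1, \ldots, v_{d-1}, v_d = N$ in the grid; such a path exists because $N$ is at hop-distance $d$ from $A$.

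Next I would argue that $v_{d-1}$, being at hop-distance $d-1 < d$ from $A$, cannot itself be undominated — otherwise it would contradict the minimality of $d$, since $N$ was chosen as the \emph{closest} undominated node to $A$. Hence $v_{d-1}$ belongs to $dominatedNodes$ (either because $v_{d-1} \in \mathcal{T}_A$ or because some node of $\mathcal{T}_A$ already dominates it). Since $v_{d-1}$ and $v_d = N$ are consecutive on a shortest path, they are neighbors in the grid, which yields the claim: $N$ has at least one dominated neighbor.

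The main subtlety to handle is the boundary case when $N$ lies inside $\mathcal{P}_A$ but a dominated neighbor $v_{d-1}$ might lie outside $\mathcal{P}_A$ (relevant for the \texttt{inside} phase of Algorithm~\ref{algo:domTree}). I would address this by noting that the lemma as stated only requires the existence of a dominated neighbor in the grid, not that such a neighbor can actually serve as a $dominator$ for $N$ under the additional constraints \textbf{C1}, \textbf{C2} and Line~6 of the algorithm. In other words, the lemma is a purely structural statement about the geometry of hop-distances in the grid and the invariant that $dominatedNodes$ is ``hop-closed'' around $A$; the harder operational question (whether $N$ actually gets dominated at the current iteration given the color constraints) is separate and not claimed here.
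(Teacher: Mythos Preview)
Your proof is correct and follows essentially the same idea as the paper's own argument: both take a shortest path in the grid from $N$ to $A$ and use the minimality of $N$ to conclude that the penultimate node on that path must already be dominated. The paper phrases this as a contradiction argument (assume $N$ has no dominated neighbor, look at the first node $N_1$ on a shortest path $N,N_1,\ldots,A$, and derive a contradiction in either case), whereas you give the direct version and add the observation that $d\ge 2$ and a remark on the inside/outside $\mathcal{P}_A$ subtlety; these are refinements rather than a different approach.
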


\proof
Assume by contradiction that $N$ has no dominated neighbors. Since nodes form a grid (a connected graph), this assumption means that there exists a shortest path $N, N_1, \ldots, N_k, A$ between $N$ and the aggregator $A$, where $k$ is a positive integer. 
Two cases are possible: 
\begin{itemize}
\item either $N_1$ is dominated, then there is a contradiction because we could select it as a dominator for $N$;
\item  either $N_1$ is not dominated and it is then the closest node to the root and not $N$ as assumed. This results in a contradiction. Hence the proof.
\end{itemize}

\endproof

\begin{lemma}\label{lemma:undominated}
At any iteration, the algorithm attempts to dominate the closest node $N$ to the root aggregator that is not dominated. Two cases are possible,
\begin{itemize}
\item Case 1: $N$ cannot be dominated because:
	\begin{itemize}
	\item its color is used by another node already in the tree;
	\item all of its dominated neighbors have colors already in the tree;
	\end{itemize}

\item Case 2: $N$ can be dominated and a possible dominator exists.
\end{itemize}
\end{lemma}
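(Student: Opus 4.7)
The plan is to show that the two enumerated cases are indeed exhaustive, relying on Lemma~\ref{lemma:domNeigh} and a direct reading of the failure/success conditions in Algorithm~\ref{algo:domNode} (\textsc{Dominate-Node}). Since $N$ is by hypothesis the closest undominated node to $A$, Lemma~\ref{lemma:domNeigh} guarantees that $N$ has at least one neighbor that already belongs to $\mathcal{T}_{A}$ (and is therefore in $dominatedNodes$). Thus, the only way the call to \textsc{Dominate-Node} on $N$ can return \textbf{NULL} is either through the early exit at line~\ref{algo-line:nodeom1} (color conflict on $N$ itself), or through the post-loop test where $possibleDominators$ turns out to be empty.

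First, I would dispose of the color conflict on $N$. If the color of $N$ already appears in $treeColors$, then the guard at line~\ref{algo-line:nodeom1} triggers immediately and \textsc{Dominate-Node} exits without trying to find a dominator; this is the first sub-case of Case~1. Observe that this failure is independent of the neighborhood of $N$, so it is a clean separate branch.

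Next, assuming the color of $N$ is \emph{not} in $treeColors$, I would analyze the $possibleDominators$ list that the inner \textbf{for}-loop builds. A neighbor $v$ of $N$ is inserted in $possibleDominators$ iff it is already in $\mathcal{T}_A$ (hence in $dominatedNodes$) \emph{and} its color is not in $treeColors$, with the additional restriction that $v \in \mathcal{P}_A$ when $inside = \mathit{True}$. The list is empty exactly when every dominated neighbor of $N$ inside the relevant region has its color already in $treeColors$. I would check briefly that the $inside$ restriction is compatible with Lemma~\ref{lemma:domNeigh}: since $N$ itself is being processed in the current phase (inside or outside $\mathcal{P}_A$), the Lemma applied within that sub-grid (or the full grid in the outside phase) still guarantees the existence of at least one dominated neighbor to consider. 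Thus emptiness of $possibleDominators$ reduces precisely to the second sub-case of Case~1. If instead $possibleDominators$ is non-empty, the algorithm proceeds to sort it by $priority$ and pick the top element as $dominator$, yielding Case~2.

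The only mildly delicate point I anticipate is the handling of the $inside$ flag: one has to argue that restricting candidate dominators to $\mathcal{P}_A$ does not open a third failure mode that is not captured by the two sub-cases listed. I would handle this by noting that, in the inside phase, the statement of the lemma should be read relative to the induced sub-grid on $\mathcal{P}_A$, for which Lemma~\ref{lemma:domNeigh} applies verbatim (the aggregator $A$ lies in $\mathcal{P}_A$ and the algorithm processes nodes in order of hop-distance from $A$ within that region), so the dichotomy ``color of $N$ conflicts $\vee$ every eligible dominated neighbor has a conflicting color'' remains the complete description of failure. Finally, the two cases are mutually exclusive by construction of the control flow of \textsc{Dominate-Node}, which closes the proof.
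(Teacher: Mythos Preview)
Your approach is essentially the same as the paper's: both proofs invoke Lemma~\ref{lemma:domNeigh} to secure at least one dominated neighbor of $N$, then read off the two failure modes (color of $N$ already in $treeColors$; every dominated neighbor filtered out by its color) directly from lines~\ref{algo-line:nodeom1}--\ref{algo-line:nodeom2} and lines~\ref{line:noDom}--\ref{line:add} of \textsc{Dominate-Node}, with the remaining branch giving Case~2. Your extra discussion of the $inside$ flag is additional care the paper omits; note, however, a small slip in your citation of Lemma~\ref{lemma:domNeigh}: it guarantees a neighbor that is \emph{dominated} (i.e., in $dominatedNodes$), not one that already belongs to $\mathcal{T}_A$---the paper's proof uses the former reading, and so should you.
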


\proof
First, at any iteration, Algorithm~\ref{algo:domTree} considers the closest node $N$ to the root and attempts to dominate this node if and only if this node is not dominated and this node has not a color used by a node belonging to the tree (see lines~\ref{algo-line:nodeom1} and~\ref{algo-line:nodeom2} in Algorithm~\ref{algo:domNode}).
Second, from Lemma~\ref{lemma:domNeigh}, $N$ has at least one neighbor that is dominated. If this neighbor has a color that appears in the tree, it is not kept (see line~\ref{line:continue} in Algorithm~\ref{algo:domNode}). Otherwise, it is a possible dominator for $N$ (see line~\ref{line:add} in Algorithm~\ref{algo:domNode}). Hence the Lemma.
\endproof


\begin{theorem}\label{theo:trueAlgo}
Algorithm~\ref{algo:domTree} meets the constraints of \textit{DomTree} problem.
\end{theorem}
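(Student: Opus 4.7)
The plan is to prove \textbf{C1} and \textbf{C2} by tracking, for every color $c$, the iteration $t(c)$ at which it is first inserted into $treeColors$; once inserted, $c$ never leaves this set. Three guards in the algorithm jointly forbid any further node of color $c$ from entering $\mathcal{T}$ after iteration $t(c)$: the exit test at line~\ref{algo-line:nodeom1} of \textsc{Dominate-Node} (which blocks any $toBeDominated$ whose color is already recorded), the rejection of candidate dominators at line~\ref{line:noDom} of the same procedure, and the symmetric color check in the inner loop of \textsc{Add-Nodes}. From these I would conclude, as a first lemma, that the entire ``color-$c$ population'' of $\mathcal{T}$ is introduced during a single round, namely the call to \textsc{Dominate-Node} that first selects a dominator of color $c$, together with the \textsc{Add-Nodes} expansion it triggers.

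For \textbf{C1}, I would then argue that the nodes $u_0, u_1, \ldots, u_k$ of color $c$ that enter $\mathcal{T}$ in this single round cannot lie on a common root-to-leaf branch. The parent in $\mathcal{T}$ assigned to each $u_i$ (whether through \textsc{Dominate-Node} for $u_0$ or through \textsc{Add-Nodes} for the others) is its \emph{previous} dominator, which by construction was already in $\mathcal{T}$ before the round begins and hence bears a color different from $c$. Iterating this observation upward, any proper ancestor of any $u_i$ was present in $\mathcal{T}$ strictly before this round and therefore cannot coincide with another $u_j$. Combined with the preceding lemma, which forbids later arrivals of color $c$, this shows that at most one node of color $c$ appears on any branch.

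For \textbf{C2}, the plan reduces to a short inductive lemma: along any root-to-leaf path of $\mathcal{T}$, the sequence $t(\mathrm{color}(v))$ is strictly increasing from the root toward the leaf. The base case is the root $A$, whose color has the smallest birth time by construction. For the inductive step, the parent of any node $v$ added to $\mathcal{T}$ is already present in $\mathcal{T}$ when $v$ is processed, so its color was inserted at a strictly earlier iteration. The monotonicity then yields \textbf{C2} at once: whenever two colors $c_x, c_y$ both occur on two branches, their relative order on each branch is dictated by the global comparison between $t(c_x)$ and $t(c_y)$, which is branch-independent.

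The delicate step I expect is \textbf{C1}, precisely because \textsc{Add-Nodes} enrolls a whole batch of same-color nodes simultaneously; a careless induction could allow one batch member to become a descendant of another through a later extension of the tree. The crucial observation closing this gap is the ``previous dominator'' identity above: no node inside the batch is ever the parent pointer target of another batch member, since those pointers reach back to nodes that belonged to $\mathcal{T}$ strictly before the batch was formed, and no subsequent iteration can reopen the color $c$.
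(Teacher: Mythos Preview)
Your proposal is correct and follows essentially the same approach as the paper: both arguments hinge on the observation that all tree nodes of a given color are inserted in one batch (the \textsc{Dominate-Node} call plus its companion \textsc{Add-Nodes}), and that every parent pointer reaches back to a node already present before the batch, so the tree ``grows from the root'' and color insertion times are strictly increasing along any branch. Your write-up is in fact more careful than the paper's, which waves at the batch issue with the phrase ``because the tree is grown from the root''; your explicit treatment of why two same-color batch members cannot become ancestor and descendant (parents point strictly into the pre-batch tree, and parent pointers are never revised) fills exactly the gap the paper leaves implicit.
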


\proof
According to Algorithms~\ref{algo:domNode} and~\ref{algo:addNode}, any color is linked to the tree in two occasions:
\begin{itemize}
\item First, to dominate any node, its dominator is added to the tree only if its color does not already appear in the tree (see lines~\ref{line:noDom} and~\ref{line:addT} in Algorithm~\ref{algo:domNode}). Hence, no color repetition is produced. Let $dominator$ denotes this dominator.
\item Second, once $dominator$ is linked to the tree, other nodes sharing the same color are added to the tree at the same time if they are already dominated and if they have undominated neighbors (see line~\ref{line:addT2} in Algorithm~\ref{algo:addNode}). 
This is the only time when nodes with this color are added to the tree. Because the tree is "grown" from the root, the same color cannot be repeated twice in the branch.
\end{itemize}
Now, to prove that Constraint \textbf{C2.} is met, let us assume that two colors $c_1$ and $c_2$ are on the same branch. Assume that $c_1$ is closer to the root than $c_2$. This means that all nodes with color $c_1$ had been added in the tree before all nodes with the color $c_2$ and are all closer to the root. Hence, the ordering of colors on the branches must be identical.
\endproof

\begin{theorem}\label{theo:cannotDom}
A dominating tree addressing the $DomTree$ problem cannot dominate all nodes for all sizes of the grids.
\end{theorem}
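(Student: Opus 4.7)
The plan is to combine the per-branch color constraint (C1) with the unit-disk communication model to derive a finite geographic reach for any tree, and then observe that a grid can be made arbitrarily large.

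First, I would bound the depth of the tree. Constraint \textbf{C1.} says every branch (path from the root $A$ to a leaf) contains pairwise distinct VCM colors. Since VCM produces a fixed finite palette of $n_c(R)$ colors (Property~\ref{theo:nbColor}), every such branch contains at most $n_c(R)$ nodes, and therefore the depth of $\mathcal{T}_A$ is at most $n_c(R)-1$. This is where the construction parts company with a classical CDS: the tree is forced to be bounded.

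Next, I would translate this combinatorial bound into a geometric one using the unit disk assumption of Section~\ref{sec:genAssum}. Every tree edge joins two nodes within communication range $R$, so it has Euclidean length at most $R$. By induction on depth, every node $N \in \mathcal{T}_A$ satisfies $\lVert N - A \rVert \le (n_c(R)-1) \cdot R$. A dominated node is, by Definition~\ref{def:domTree}, a neighbor of some tree node, hence within an additional distance $R$, giving the bound
\begin{equation*}
\lVert N - A \rVert \le n_c(R) \cdot R \qquad \text{for every node } N \text{ dominated by } \mathcal{T}_A.
\end{equation*}

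Finally, I would conclude by scaling the grid. Fix a radio range $R$, so that $n_c(R)$ is a constant. For any grid of side length $L$ with $L > 2\, n_c(R) \cdot R$, there exist grid points whose Euclidean distance to the root aggregator $A$ exceeds $n_c(R) \cdot R$ (for instance, the corner opposite to $A$). By the previous displayed inequality, such a point can neither belong to $\mathcal{T}_A$ nor be dominated by it, which contradicts the hypothesis of full domination. Hence no dominating tree satisfying the constraints of the \emph{DomTree} problem can cover every node once the grid is sufficiently large, which is exactly the claim.

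The step I expect to be the main obstacle is formalising the depth-to-geographic-reach translation cleanly: one must be careful that Constraint~\textbf{C1.} really bounds branch \emph{length} (number of edges) rather than merely the number of distinct colors along some other path, and that the unit disk model, rather than, say, the Manhattan grid distance, is what justifies the factor $R$ per hop. Both points are immediate once spelled out, so the argument is essentially a clean packaging of a pigeonhole observation followed by a geometric scaling argument.
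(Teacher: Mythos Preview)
Your argument is correct and is in fact more rigorous than the paper's own proof. The paper argues in two sentences: it invokes Lemma~\ref{lemma:undominated} to note that domination fails precisely when color unicity would be violated, observes that the palette is finite, and concludes that ``some nodes might not be dominated.'' This is really a qualitative sketch; it does not exhibit an undominated node or tie the failure to the size of the grid.

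Your route is different in that it is quantitative: you extract from Constraint~\textbf{C1.} a hard bound $n_c(R)-1$ on the depth of $\mathcal{T}_A$, convert it via the unit-disk assumption into a Euclidean radius $n_c(R)\cdot R$ beyond which no node can be dominated, and then exhibit a grid large enough to contain nodes outside that radius. This buys you an explicit threshold on the grid side length at which domination must fail, and it makes the statement ``for all sizes of the grids'' literally true rather than merely plausible. The paper's version, by contrast, leans on the reader to fill in the pigeonhole-plus-geometry step you spell out. Your concern about the depth-to-distance translation is well placed but, as you note, both points (that \textbf{C1.} bounds the number of \emph{nodes} on a root-to-leaf path, and that each tree edge has Euclidean length at most $R$) are immediate from the definitions in Section~\ref{sec:genAssum} and the statement of the \emph{DomTree} problem.
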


\proof
From Lemma~\ref{lemma:undominated}, dominating one node becomes not possible when the color unicity of the tree might be not guaranteed. However, the number of colors is finite. Consequently, to guarantee the color unicity in the tree, some nodes might not be dominated. 
\endproof

\subsubsection{Properties Concerning Nodes in $\mathcal{P}_{A}$}

\begin{lemma}\label{lemma:diffColors}
Nodes in $\mathcal{P_A}$ have all different colors.
\end{lemma}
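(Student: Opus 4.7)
The plan is to exploit the lattice structure of the VCM coloring recalled in Section~\ref{sec:vcm}. By the first property of VCM listed there, two grid nodes $x$ and $y$ share the same color if and only if $x - y$ belongs to the lattice $\Lambda = \{\alpha u + \beta v : \alpha,\beta \in \ZZ\}$ generated by the two VCM generator vectors $u$ and $v$. Hence, proving Lemma~\ref{lemma:diffColors} reduces to showing that no two distinct grid points of $\mathcal{P}_A$ differ by a nonzero element of $\Lambda$; in other words, $\mathcal{P}_A$ is a fundamental domain for $\Lambda$ (modulo its boundary).

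The first step is geometric: the parallelogram with vertices $(0,u,v,u+v)$ is the closure of the standard fundamental parallelogram $\{s\,u + t\,v : s,t \in [0,1)\}$, and its translation by $-(\frac{u}{2}+\frac{v}{2})$ just re-centers it around the aggregator $A=(0,0)$. I would then suppose, for contradiction, that two distinct grid points $x,y \in \mathcal{P}_A$ satisfy $x-y = \alpha u + \beta v$ with $(\alpha,\beta)\neq(0,0)$. Writing both points in the $(u,v)$ basis relative to the center, one has $x = s_1 u + t_1 v -(\frac{u}{2}+\frac{v}{2})$ and $y = s_2 u + t_2 v -(\frac{u}{2}+\frac{v}{2})$ with $s_i, t_i \in [0,1]$, so $s_1-s_2 = \alpha$ and $t_1 - t_2 = \beta$; since $|s_1-s_2|<1$ and $|t_1 - t_2|<1$ strictly in the interior, the only possibility is $\alpha=\beta=0$, a contradiction.

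The second step is the accounting argument, which also serves as a sanity check via the VCM property recalled in Section~\ref{sec:vcm}: the number of colors produced by VCM equals the area of the fundamental parallelogram (the determinant $|u\times v|$), which is exactly the number of grid points in $\mathcal{P}_A$ when counted with the standard half-open convention induced by the centering. Combining this with the injectivity obtained in the first step yields the lemma.

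The main obstacle is the boundary of $\mathcal{P}_A$: a naïve closed parallelogram could contain two equivalent points on opposite edges (e.g. $x$ and $x+u$). This is precisely why the statement uses the translation by $-(\frac{u}{2}+\frac{v}{2})$ together with the floor-based definition of $w/2$ given in the Notation: this choice selects exactly one representative per $\Lambda$-orbit among the grid points. I would therefore make the boundary handling explicit by adopting, as usual for fundamental domains, the half-open convention $[0,1)u+[0,1)v$ when needed, and remark that this is consistent with the integer-part definition of $\frac{u}{2}$ and $\frac{v}{2}$ in the Notation. Once the boundary is handled, the contradiction in the first step goes through without change.
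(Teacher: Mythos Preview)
Your argument is correct and rests on the same underlying fact as the paper's, but you take a longer route. The paper's proof is a two-line observation: $\mathcal{P}_A$ is nothing other than the VCM color pattern (the parallelogram with vertices $0,u,v,u+v$) translated by $-(\tfrac{u}{2}+\tfrac{v}{2})$, and by the very construction of VCM recalled in Section~\ref{sec:vcm} (``All colors produced by VCM are the colors inside one parallelogram'') that pattern contains each color exactly once; since two nodes share a color iff their difference lies in the lattice, any translate of the pattern inherits this property automatically. You instead start one level lower, from the lattice characterization of same-colored nodes, and re-prove that a fundamental parallelogram meets each lattice orbit at most once. This buys you a more self-contained argument and makes the boundary issue explicit, whereas the paper simply appeals to the definition of the VCM pattern and leaves the boundary implicit. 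One small misattribution in your write-up: the floor in the definition of $\tfrac{w}{2}$ is there only to keep the translation vector on the integer grid (so that the translated parallelogram still has grid points as vertices), not to resolve the boundary ambiguity; the boundary is handled, as you correctly conclude in the end, by the usual half-open convention.
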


\proof
$\mathcal{P}_{A}$ is nothing other than the VCM pattern translated by the vector $\frac{u}{2}+\frac{v}{2}$. By definition, inside VCM pattern, there is no color reuse.
\endproof

\begin{theorem}\label{theo:dom}
Any node among $\mathcal{P_A}$ is necessarily dominated by the dominating tree rooted at $A$.
\end{theorem}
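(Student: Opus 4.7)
The plan is a strong induction on the graph distance $d(N,A)$ from a node $N\in\mathcal{P}_A$ to the aggregator $A$. The crucial ingredient is Lemma~\ref{lemma:diffColors}: inside $\mathcal{P}_A$ every node carries a distinct color, so the color-uniqueness constraint \textbf{C1} cannot block any candidate dominator drawn from $\mathcal{P}_A$.

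For the base case, $A$ is placed in the tree by the initialization of Algorithm~\ref{algo:domTree} and all its neighbors are marked dominated. For the inductive step, I would assume every node of $\mathcal{P}_A$ at distance at most $d-1$ has been successfully dominated by the time Algorithm~\ref{algo:domTree} reaches it, and consider a node $N\in\mathcal{P}_A$ at distance $d$ when it is handled by the inside-phase loop (nodes being sorted by increasing hop-distance to $A$). If $N$ already lies in $dominatedNodes$ there is nothing to prove. Otherwise, fix a grid-shortest path from $N$ to $A$ and let $N'$ be the vertex immediately after $N$ on this path, so that $N'$ is a neighbor of $N$ at distance $d-1$. By convexity of $\mathcal{P}_A$ and the fact that both $N$ and $A$ lie inside it, the path, and in particular $N'$, can be chosen inside $\mathcal{P}_A$. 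The induction hypothesis then guarantees that $N'$ is already dominated.

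It then remains to verify that $N'$ is an admissible dominator candidate for Algorithm~\ref{algo:domNode}. During the inside phase only nodes of $\mathcal{P}_A$ enter $\mathcal{T}_A$ (the test on line~7 of Algorithm~\ref{algo:domTree} rejects external neighbors), and Lemma~\ref{lemma:diffColors} forces these inserted nodes to have pairwise different colors. Consequently, either $N'$ already belongs to $\mathcal{T}_A$ and can be used as a dominator directly, or $N'$ is dominated but not in the tree while its color is still absent from $treeColors$, so it can be inserted and used. In either case the algorithm returns a non-null dominator, and $N$ becomes dominated, closing the induction.

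The main obstacle in this plan is the geometric step that picks a neighbor inside $\mathcal{P}_A$: the VCM construction sizes $\mathcal{P}_A$ so that its internal subgraph (under the unit-disk model with range $R$) is connected and contains the whole color palette, so such a neighbor always exists, but writing this out rigorously requires some case analysis on how grid vertices sit inside the parallelogram and, at the boundary, possibly picking a neighbor off the strictly-shortest path while still making progress toward $A$.
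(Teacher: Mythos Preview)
Your approach is essentially the paper's: the paper argues by contradiction (take the closest undominated node $M$ in $\mathcal{P}_A$, invoke Lemma~\ref{lemma:domNeigh} to obtain a dominated neighbor $D$, then use Lemma~\ref{lemma:diffColors} to conclude that $D$ is an admissible dominator), which is just the minimal-counterexample form of your induction with the shortest-path construction replaced by a direct appeal to Lemma~\ref{lemma:domNeigh}. The one difference is that the paper does not explicitly address the geometric point you flag --- that the dominated neighbor can be taken inside $\mathcal{P}_A$ --- so your write-up is, if anything, more careful there.
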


\proof
By contradiction, let us assume that there is one or several nodes in $\mathcal{P}_{A}$ that are not dominated. Let $M$ be the undominated node that is the closest to the aggregator. 
According to Lemma~\ref{lemma:domNeigh}, node $M$ has at least one neighbor $D$ that is dominated. Because all nodes in $\mathcal {P_A}$ (hence in the dominating tree $\mathcal{T}_{A}$) have different colors (Lemma~\ref{lemma:diffColors}), $D$ could have been selected as a dominator for $M$. Hence the contradiction. 
\endproof

\begin{corollary}\label{coro:closest}
Any grid node is necessarily dominated by the tree rooted at the closest aggregator and possibly by trees rooted at other aggregators. 
\end{corollary}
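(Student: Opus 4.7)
The plan is to combine Theorem~\ref{theo:dom} with the tiling structure of the plane by the parallelograms $\mathcal{P}_A$. The first step is to observe that the family $\{\mathcal{P}_A\}_A$, indexed by aggregators, tiles the plane. This is because the aggregators sit on the lattice generated by $u$ and $v$, and each $\mathcal{P}_A$ is the fundamental domain with vertices $0,u,v,u+v$ re-centered at $A$ through the translation by $-(\tfrac{u}{2}+\tfrac{v}{2})$. Consequently, every grid node $M$ belongs to some $\mathcal{P}_A$.

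Next, I would argue that this $A$ is precisely the closest aggregator to $M$. Since $\mathcal{P}_A$ is symmetric around $A$ (vertices at $\pm\tfrac{u}{2}\pm\tfrac{v}{2}$ relative to $A$), the displacement $M-A$ can be written as $\alpha u + \beta v$ with $-\tfrac{1}{2}\le\alpha,\beta<\tfrac{1}{2}$. Any other aggregator is of the form $A+\alpha'u+\beta'v$ with $(\alpha',\beta')\neq(0,0)$ integer, so its displacement from $M$ has strictly larger lattice coordinates in at least one direction, yielding a strictly larger distance. Once $A$ is identified as the closest aggregator containing $M$ in its parallelogram, an immediate application of Theorem~\ref{theo:dom} gives that $M$ is dominated by $\mathcal{T}_A$, which establishes the first part of the corollary.

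For the \emph{possibly by other aggregators} part, I would use the second phase of Algorithm~\ref{algo:domTree}, which extends domination to nodes lying outside $\mathcal{P}_{A'}$ for any aggregator $A'$. By Remark~1, the tree rooted at $A'$ is the translate of $\mathcal{T}_A$ by $\vec{AA'}$, so nodes geographically close to $A'$ but outside $\mathcal{P}_{A'}$ can also be linked to $\mathcal{T}_{A'}$ during this outside phase. Therefore, in addition to being dominated by the tree of its closest aggregator, $M$ may also be dominated by trees rooted at neighboring aggregators.

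The main obstacle is the geometric identification of ``the aggregator $A$ such that $M \in \mathcal{P}_A$'' with ``the Euclidean-closest aggregator.'' For a general parallelogram lattice, the Voronoi cell is a hexagon rather than a parallelogram, and a point near a long edge of $\mathcal{P}_A$ could in principle be closer to a neighboring aggregator. I expect that the proof must either interpret ``closest'' as ``the aggregator whose parallelogram contains $M$,'' or invoke specific geometric properties of the VCM generator vectors (which are chosen to be nearly orthogonal and of comparable length for optimal color count) to rule out this pathology. In either interpretation, the rest of the argument is an almost immediate consequence of Theorem~\ref{theo:dom}.
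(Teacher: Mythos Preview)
Your approach is essentially the paper's: invoke Theorem~\ref{theo:dom} on the parallelogram $\mathcal{P}_A$ containing the node, assert that this $A$ is the closest aggregator, and note that the outside phase of Algorithm~\ref{algo:domTree} permits domination by other aggregators as well. The paper's proof is two sentences and handles the obstacle you flag by simply declaring ``by construction, an aggregator $A$ is the closest aggregator to nodes in $\mathcal{P}_A$''; your Voronoi-versus-parallelogram concern is therefore legitimate but is not addressed in the paper either---it effectively treats ``closest'' as synonymous with ``owner of the containing parallelogram.''
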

\proof
The result is true because of Theorem~\ref{theo:dom} and because by construction, an aggregator $A$ is the closest aggregator to nodes in $\mathcal{P_A}$. In addition, a tree    
rooted at $A_i$ can dominate nodes belonging to some $\mathcal{P}_{A_j}$ for $i$ and $j$ different positive integers.
\endproof

\subsection{Results: Examples of Dominating Trees}
Figure~\ref{fig:tree} depicts the dominating tree rooted at the aggregator of coordinates $(0,0)$. The radio range is set to $3\times $the grid step. Thick lines present the tree links while the thin lines are the links between a node and the nodes it dominates. We observe that the obtained tree dominates a larger area: almost $16$ color patterns.
\begin{figure}[!h]
\centering
\includegraphics[width=3.in]{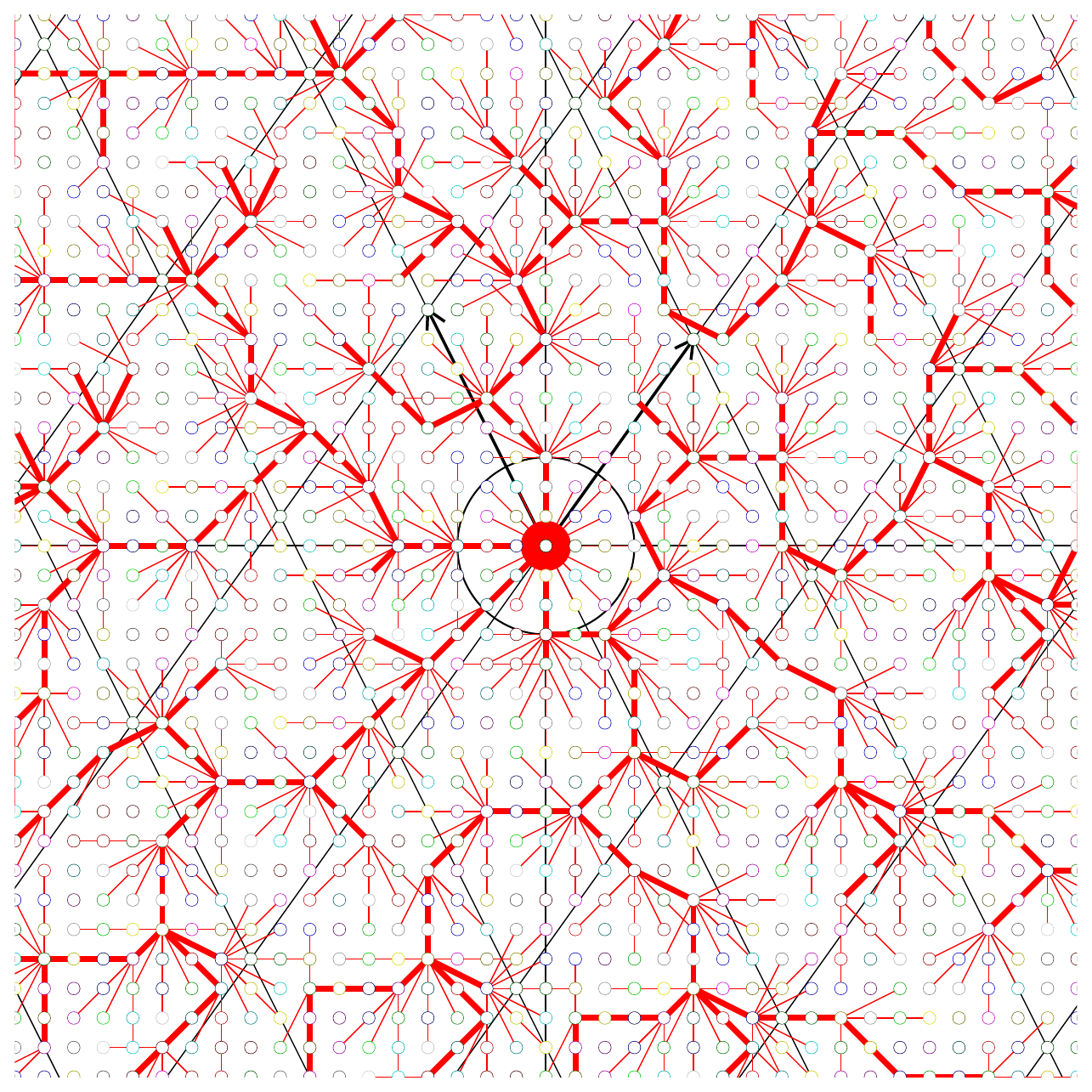}
\caption{The dominating tree obtained for $3$ hop coloring and radio range $=3 \times$ the grid step.\label{fig:tree}}
\end{figure}
~~ \\
Figure~\ref{fig:twoTrees} depicts two trees rooted at aggregators of coordinates $(0,0)$ and $(4,3)$. 
\begin{figure}[!h]
\centering
{\includegraphics[width=3.in]{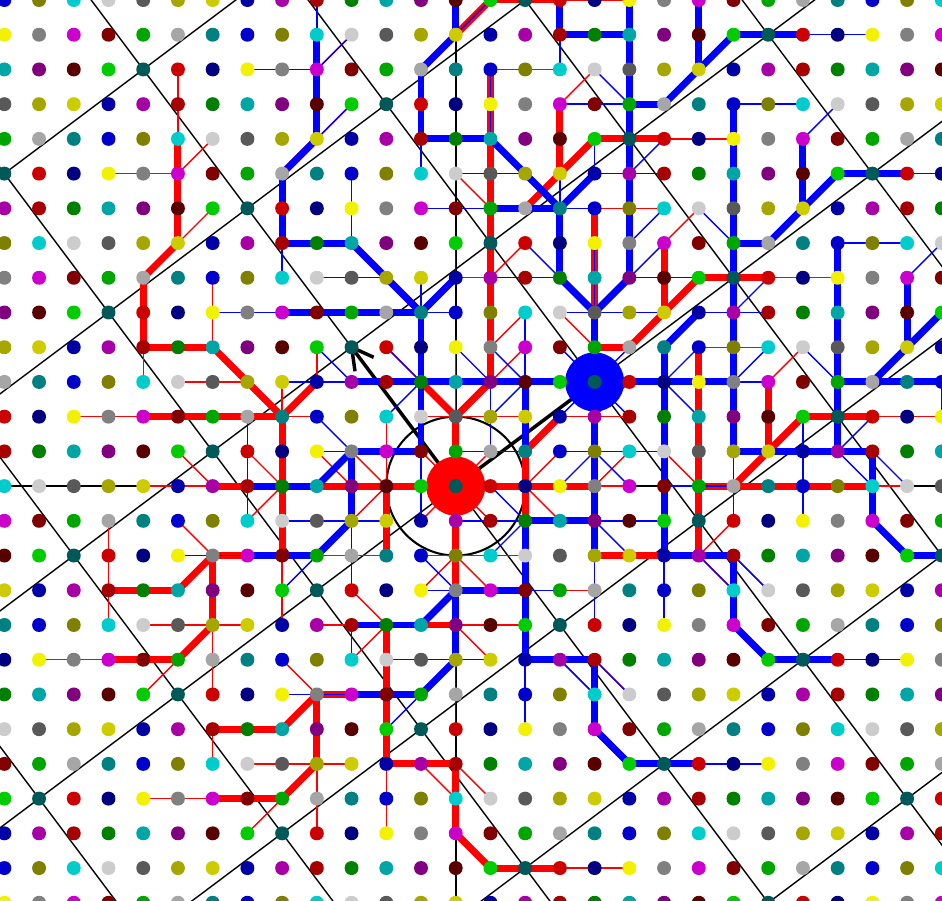}}%
\caption{Two dominating trees for $3$ hop coloring and radio range $=2 \times$ the grid step.\label{fig:twoTrees}}
\end{figure}

Notice that the built trees cover a large area of the grid which answers the objective \textbf{\textit{O3}} of DomTree problem.

\subsection{Routing Over the Dominating Tree}
Any node that is dominated by a tree, including the tree rooted at the closest aggregator, can reach the root of this tree. Its next hop to reach this aggregator is:
\begin{itemize}
\item Either its parent in the tree, if it belongs to the tree.
\item Or its dominator relative to this tree, otherwise.
\end{itemize}

\section{CO-Routes: Color Ordering for $Routes$}\label{sec:ordering}
Our objective is to have \textit{Color Ordering} of $Routes$. 

\subsection{Method of CO-Routes}\label{sec:methOrder}
\begin{method}[CO-Routes]\label{meth:orderColor}

The method is based on the following steps:
\begin{enumerate}
\item \textit{Step1: Nodes of a tree}:\\
Consider the tree $\mathcal{T}_A$ rooted at the aggregator $A$ of coordinates $(0,0)$. We first start by ordering colors of nodes belonging to $\mathcal{T}_A$.
During the construction procedure of the dominating tree (line~\ref{line:addT} in Algorithm~\ref{algo:domNode}), the color of the first node added to the tree is associated with slot $1$, the second (node or set of nodes) is associated with slot $2$, etc. With this method, nodes with deepest levels in the tree are scheduled in the latest slots of the schedule. 
\item \textit{Step2: Remaining nodes}: \\
The remaining grid nodes are sorted according to the increasing number of hops to the root of $\mathcal{T}_A$ and in case of equality the geographic distance is considered as a second criteria.
Then, the colors of these nodes are associated with time slots starting with the last slot of the first step.
\end{enumerate}
\end{method}

\begin{remark}
Method~\ref{meth:orderColor} only orders colors. Hence,  no new colors than those of VCM are used and no color conflict are created.
\end{remark}

\begin{lemma}\label{lemma:insideFirst}
According to Method~\ref{meth:orderColor}, for any aggregator $A$, colors of nodes inside $\mathcal{P}_A$ are ordered before nodes outside.
\end{lemma}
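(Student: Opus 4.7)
The plan is to trace the chronological order in which colors receive slots under Method~\ref{meth:orderColor}, exploiting the two-phase structure of Algorithm~\ref{algo:domTree} (the ``inside'' loop running before the ``outside'' loop) together with Lemma~\ref{lemma:diffColors} (every color appears exactly once in $\mathcal{P}_A$). Since Step~1 of CO-Routes assigns slots in the exact order in which tree nodes (or sets of same-coloured tree nodes) are linked to $\mathcal{T}_A$, and since the inside loop of Algorithm~\ref{algo:domTree} is executed in its entirety before the outside loop, I aim to show that every new colour--slot pair created during the inside phase is attached to an inside node, hence to an ``inside colour''.

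Concretely, I would first observe that the test ``if $inside = \mathrm{True}$ and $neighbor \notin \mathcal{P}_A$ then continue'' inside \textsc{Dominate-Node} forces every dominator added during the inside phase to lie in $\mathcal{P}_A$. By Lemma~\ref{lemma:diffColors} these dominators carry pairwise distinct colours, so the inside phase of Step~1 produces a sequence of distinct slot assignments each triggered by a node of $\mathcal{P}_A$. The auxiliary procedure \textsc{Add-Nodes} only attaches additional nodes whose colour is \emph{equal} to that of the current dominator; since a slot is attributed to a colour, these additions do not create any new slot. I would then argue that the outside loop can only introduce a new colour--slot pair for a colour whose unique inside representative was not linked during the inside phase, and any such pair is produced strictly after every slot issued during the inside phase. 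Finally, Step~2 sorts the remaining (non-tree) nodes by increasing hop-count to $A$, with Euclidean distance as tie-breaker; because $\mathcal{P}_A$ is the parallelogram centred at $A$, every leftover node inside $\mathcal{P}_A$ is at least as close to $A$ (in both hops and Euclidean distance) as any leftover node outside $\mathcal{P}_A$, so inside residuals still obtain earlier slots than outside residuals at this stage.

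The main obstacle is semantic rather than technical, namely the interplay between \textsc{Add-Nodes} and the inside/outside distinction: an inside dominator may trigger the linking of outside nodes sharing its colour, and, symmetrically, an outside dominator may retroactively link an inside node of the same colour into the tree. I will therefore have to state explicitly the convention that a slot is attached to a \emph{colour} (not to an individual node), so that whenever an inside and an outside node share a colour the slot is determined by the \emph{first-added} representative of that colour. Once this convention is pinned down, the three phases (inside loop, outside loop, Step~2) concatenate in the required order and the claim follows by straightforward bookkeeping.
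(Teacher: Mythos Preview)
Your core approach matches the paper's exactly: exploit the two-phase structure of Algorithm~\ref{algo:domTree} (inside loop runs to completion before the outside loop) together with the fact that Method~\ref{meth:orderColor} assigns slots in the chronological order in which colours are linked to the tree. The paper's own proof is a two-sentence hand-wave stating precisely this; you have simply fleshed it out, and your handling of the \textsc{Add-Nodes} subtlety (same-colour additions do not create a new slot) is a genuine clarification the paper glosses over.

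One caveat: your Step~2 argument contains a false geometric claim. It is \emph{not} true that every residual node inside $\mathcal{P}_A$ is at least as close to $A$ (in hops or Euclidean distance) as every residual node outside: a node near a corner of the parallelogram can be farther from $A$ than a node just outside the midpoint of an edge (e.g.\ with the $R=3$ vectors $u_1=(5,7)$, $u_2=(-4,8)$, the outside point $(5,0)$ is closer to the origin than the inside corner near $(0,7)$). Fortunately this over-reach is unnecessary: the paper's proof, and the only downstream use of the lemma (Lemma~\ref{lemma:otherTree}), concern only \emph{tree} nodes, i.e.\ Step~1 of Method~\ref{meth:orderColor}. So your Step~1 analysis already suffices, and you should simply drop the Step~2 paragraph rather than try to repair it.
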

\proof
Colors of nodes linked first to the tree are ordered firstly. Since the dominating tree construction starts by dominating nodes inside $\mathcal{P}_A$, we have the result.
\endproof
\subsection{Results about the Delays Obtained}
\begin{theorem}\label{theo:treeNode}
Nodes belonging to any tree $\mathcal{T}_{A}$ reach $A$ in one cycle.
\end{theorem}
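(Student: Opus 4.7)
The plan is to show that along the unique upward tree-path from any $v \in \mathcal{T}_A$ to $A$, the temporal slots assigned in the Routes period form a strictly increasing sequence, so that the one-hop delay formula of Eq.~\ref{eq:one-hop-delay} never triggers its $+S$ wrap-around term and the telescoped route delay is strictly below one STDMA cycle. Setting notation, I would write this path as $\pi = (v = v_0, v_1, \ldots, v_k = A)$, with $v_{i+1}$ the parent of $v_i$ in $\mathcal{T}_A$.

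The central step I would prove is a structural invariant, established by induction on the iteration count of Algorithm~\ref{algo:domTree}: whenever a node $u$ is inserted into $\mathcal{T}_A$ --- either at line~\ref{line:addT} of \textsc{Dominate-Node} or line~\ref{line:addT2} of \textsc{Add-Nodes} --- the node recorded as its tree-parent (namely its previously assigned dominator) is a strictly older member of $\mathcal{T}_A$. This holds because a node can only be recorded as somebody else's dominator at the very moment it itself enters $\mathcal{T}_A$, so the dominator insertion always precedes any dependent insertion. Iterating this invariant along $\pi$ yields that $v_k = A, v_{k-1}, \ldots, v_0 = v$ join $\mathcal{T}_A$ in exactly that order --- the root first and $v$ last. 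A subtlety to handle is that \textsc{Add-Nodes} may batch-link several equal-coloured nodes simultaneously, so I would verify that inside any such batch every newly added node's parent already belonged to $\mathcal{T}_A$ before the batch began.

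Combining this with CO-Routes (Method~\ref{meth:orderColor}), which assigns successive slot indices to colours exactly in the order their nodes are linked to the tree, and with its two stated corollaries (``the slot granted to a node increases with the node level in the tree'' and ``any dominated node has a slot after its dominator''), the structural fact above implies that along $\pi$ the temporal slot positions satisfy $s_{v_0} < s_{v_1} < \cdots < s_{v_k}$. Substituting into Eq.~\ref{eq:one-hop-delay} then gives $D{(v_i, v_{i+1})} = s_{v_{i+1}} - s_{v_i}$ for every hop, so by Eq.~\ref{eq:delay} the route delay telescopes to
\[
D(\pi) \;=\; \sum_{i=0}^{k-1}\bigl(s_{v_{i+1}} - s_{v_i}\bigr) \;=\; s_{v_k} - s_{v_0} \;<\; S,
\]
strictly less than one cycle, which is the statement of the theorem. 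The main obstacle will be the insertion-order invariant itself --- carefully threading through the interleaving of \textsc{Dominate-Node} and the batch behaviour of \textsc{Add-Nodes}, and correctly accounting for the implicit initial membership of the root $A$; once that is settled, the rest of the argument is a straightforward telescoping sum.
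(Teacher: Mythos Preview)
Your overall plan is the paper's plan: establish that along any branch the ancestors are linked to $\mathcal{T}_A$ before the descendants, translate this via Method~\ref{meth:orderColor} into a monotone slot sequence, and conclude ``no slot misordering, hence one cycle''. The paper does this by invoking Theorem~\ref{theo:trueAlgo} (Constraints~C1 and~C2) for the branch-ordering fact and then stops; you instead prove the insertion-order invariant directly by induction on the interleaving of \textsc{Dominate-Node} and \textsc{Add-Nodes}, and finish by telescoping Eqs.~\ref{eq:one-hop-delay}--\ref{eq:delay}. That is a more explicit but equivalent route.

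There is, however, a direction error at the crucial step. From your own premises --- $A$ joins first, $v$ joins last, and Method~\ref{meth:orderColor} hands slot~$1$ to the first-linked colour (cf.\ its corollary ``nodes with deepest levels \ldots\ are scheduled in the latest slots'') --- what actually follows is $s_{v_k} < s_{v_{k-1}} < \cdots < s_{v_0}$, the \emph{reverse} of the chain $s_{v_0} < \cdots < s_{v_k}$ you write. With the ordering your premises give, every hop $v_i \to v_{i+1}$ on the route toward $A$ has $s_{v_{i+1}} < s_{v_i}$, so Eq.~\ref{eq:one-hop-delay} fires its $+S$ branch at each step and the telescoped delay is $kS + s_{v_k} - s_{v_0}$, roughly $k$ cycles rather than one. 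The paper's own proof makes the same assertion (``$N$ will be scheduled after all its ascendants'') and then jumps to ``no slot misordering'' without writing the formulas, so the tension is hidden there; but for your explicit telescoping to yield $D(\pi) < S$ you need the opposite slot assignment --- farthest node from the aggregator in the earliest slot, aggregator in the latest --- which is exactly what the paper's Definition of Color Ordering (and the analogous CO-Highways construction) prescribes. You should align your argument with that definition rather than with the literal wording of Method~\ref{meth:orderColor}, and then the inequality chain and the telescoping go through as you intend.
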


\proof
Let $N$ be any node in $\mathcal{T}_{A}$. By Method~\ref{meth:orderColor}, $N$  will be scheduled after all its ascendants since they are linked to the tree before node $N$. Moreover, from Theorem~\ref{theo:trueAlgo}, no two nodes have the same color among these ascendants. 
Because of Constraint~C2., the ordering of colors is consistent on all branches. Hence, no slot misordering is produced and no more than one cycle is needed by any node $N$ to reach this aggregator $A$. 
\endproof
Theorem~\ref{theo:treeNode} is essential. Indeed, generally the whole activity period in the STDMA cycle is followed by an inactivity period. Hence, delivering data in a single cycle avoids the latency caused by this inactivity period.

\begin{lemma}
All nodes that reach any aggregator $A$ in one cycle are necessarily dominated by the tree $\mathcal{T}_A$.
\end{lemma}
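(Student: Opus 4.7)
The plan is to prove the statement by contrapositive: assume $N$ is not dominated by $\mathcal{T}_A$, i.e., $N \notin \mathcal{T}_A$ and no neighbor of $N$ belongs to $\mathcal{T}_A$, and deduce that no routing path from $N$ to $A$ can be traversed within a single STDMA cycle.

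First, I would translate the one-cycle requirement into a purely combinatorial condition on slots. By the one-hop delay formula of Equation~\ref{eq:one-hop-delay} and the route-delay expression of Equation~\ref{eq:delay}, a path $N = v_1 \to v_2 \to \cdots \to v_k = A$ delivers in one cycle if and only if the slots of the successive transmitters $v_1, v_2, \ldots, v_{k-1}$ are visited in the STDMA cycle in the same cyclic order as they appear along the path, i.e., no slot misordering occurs on any hop; this is the very property exploited in the proof of Theorem~\ref{theo:treeNode}. Since $v_1 = N$ has no tree neighbor, both $v_1$ and $v_2$ lie outside $\mathcal{T}_A$, while $v_k = A \in \mathcal{T}_A$, so there exists a smallest index $j \ge 2$ with $v_j \notin \mathcal{T}_A$ and $v_{j+1} \in \mathcal{T}_A$.

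Next, I would show that the single hop $v_j \to v_{j+1}$ is itself misordered, which suffices to break the one-cycle condition on the entire path. The argument relies on Method~\ref{meth:orderColor} (CO-Routes): in Step~1 the colors of nodes belonging to $\mathcal{T}_A$ receive the earliest slots of the cycle in the order they are appended by Algorithms~\ref{algo:domTree} and~\ref{algo:addNode}, and in Step~2 all remaining colors are placed strictly after those Step~1 slots. Therefore $v_{j+1} \in \mathcal{T}_A$ gets a Step~1 slot, so $s_{v_{j+1}}$ lies in the early part of the cycle.

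The main obstacle I anticipate is ruling out that $v_j$ inherits a Step~1 slot through a remote same-color tree node, since VCM reuses colors periodically and a non-tree node may a priori share its color with a distant node of $\mathcal{T}_A$. I would discharge this via Algorithm~\ref{algo:addNode}: as soon as the first tree node of a given color $c$ is inserted, every already-dominated grid node of color $c$ that still has an undominated neighbor is immediately attached to $\mathcal{T}_A$. Combined with the distance-ordered traversal of Algorithm~\ref{algo:domTree}, a short case analysis on the moment at which $v_j$ is processed shows that the only scenario in which $v_j$'s color ends up in $\mathcal{T}_A$ also forces $v_j$ itself into the tree, contradicting $v_j \notin \mathcal{T}_A$. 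Hence $v_j$'s color is a Step~2 color, $s_{v_j} > s_{v_{j+1}}$, the hop $v_j \to v_{j+1}$ is misordered, and the whole path from $N$ to $A$ cannot deliver in one cycle, which completes the contrapositive and proves the lemma.
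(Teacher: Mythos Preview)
Your contrapositive strategy is a reasonable idea, but the step you flag as ``the main obstacle'' is genuinely a gap, and the ``short case analysis'' you invoke does not close it. Concretely, take your boundary node $v_j\notin\mathcal{T}_A$ with $v_{j+1}\in\mathcal{T}_A$. Since $v_j$ is a neighbour of the tree node $v_{j+1}$, $v_j$ is \emph{dominated} by $\mathcal{T}_A$. Now suppose the colour $c$ of $v_j$ was inserted into \texttt{treeColors} (via some far-away dominator of colour $c$) at an iteration \emph{before} $v_j$ became dominated. At that moment, Algorithm~\ref{algo:addNode} scans all grid nodes of colour $c$, but it attaches only those that are \emph{already} in \texttt{dominatedNodes}; $v_j$ is not, so it is skipped. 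Afterwards, colour $c$ is permanently in \texttt{treeColors}, so line~\ref{line:noDom} of Algorithm~\ref{algo:domNode} prevents $v_j$ from ever becoming a dominator, and \textsc{Add-Nodes} is never called again for colour $c$. Thus $v_j$ ends up dominated, outside the tree, yet carrying a Step~1 (tree) colour --- exactly the scenario you claimed is impossible. Worse, in this scenario $v_{j+1}$ was added to the tree \emph{after} colour $c$, so by Method~\ref{meth:orderColor} the slot of $v_{j+1}$ comes \emph{after} the slot associated with $c$; the hop $v_j\to v_{j+1}$ is then \emph{not} misordered, and your argument collapses at precisely the boundary you chose.

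The paper avoids this difficulty by not isolating a single boundary hop at all. Its proof is direct: it starts from the observation that a one-cycle path has monotone slots, and then walks the path backwards from $A$, arguing that each node in turn must have been dominated by its successor and subsequently linked to $\mathcal{T}_A$ at the very iteration at which its colour was ordered. In other words, the paper shows the entire path is (part of) a tree branch, rather than trying to produce a contradiction at the first tree/non-tree interface. If you want to salvage the contrapositive route, you would need to propagate information along the whole prefix $v_1,\ldots,v_j$ (not just analyse the single hop $v_j\to v_{j+1}$), which essentially reconstructs the paper's inductive argument in reverse.
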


\proof
Assume that a node can reach the aggregator in one cycle. This means that there is a path to the root with only increasing slots numbers. Algorithm~\ref{algo:domTree} adds nodes to the tree one after the other, and orders their colors at the same time.
Iteratively, each node of the path, in the reverse order, will be successively dominated by the next-hop and later linked to the tree. Therefore, the path will be (part of) a branch of the tree. 
\endproof



\begin{lemma}\label{lemma:otherTree}
For any aggregator $A$, for any node $N$ $\in \mathcal{P_A}$, 
$N$ $\notin \mathcal{T}_A$ and $N \in \mathcal{T}_{A_1}$ where $A_1$ is another aggregator, then $N$ is scheduled after its dominator relative to the tree rooted at $A$.
\end{lemma}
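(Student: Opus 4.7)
The plan is to show $s_N > s_D$, where $s_D, s_N$ denote the slots assigned to the colors $c_D, c_N$ of the dominator $D$ and of $N$ respectively, by comparing the moments at which $c_D$ and $c_N$ first enter $treeColors$ during Algorithm~\ref{algo:domTree}: by Method~\ref{meth:orderColor} Step~1, slots are handed out to tree colors in exactly the order these colors appear in $\mathcal{T}_A$. By Theorem~\ref{theo:dom}, $N$ does admit a dominator $D \in \mathcal{T}_A$, and since $D$ is a 1-hop neighbor of $N$ in an $h$-hop coloring with $h \ge 1$, we have $c_D \neq c_N$.

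First I would verify that $c_N$ is indeed among the colors scheduled during Step~1. By Remark~1, $\mathcal{T}_{A_1}$ is the image of $\mathcal{T}_A$ under the translation $\vec{AA_1}$; since aggregators are all VCM pattern vertices, $\vec{AA_1}$ is a VCM generator-lattice vector, and VCM coloring is invariant under such translations. Hence, as $N \in \mathcal{T}_{A_1}$, the preimage $N' := N - \vec{AA_1}$ lies in $\mathcal{T}_A$ and carries color $c_N$. So $c_N$ is indeed one of the colors present in $\mathcal{T}_A$ and receives a slot during Step~1 at some time $t_N$.

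The core step is to prove that, at the moment $t_D$ when $D$ is first inserted into $\mathcal{T}_A$ (equivalently the moment $c_D$ first enters $treeColors$), the color $c_N$ is not yet in $treeColors$. Two sub-cases exhaust the ways $N$ can have acquired $D$ as its dominator. \textbf{Direct assignment}: $N$ is the $toBeDominated$ of some call to Algorithm~\ref{algo:domNode} in which $D$ is selected; the initial guard at lines~\ref{algo-line:nodeom1}--\ref{algo-line:nodeom2} would have returned NULL if $c_N$ were already in $treeColors$, so $c_N \notin treeColors$ at that step, and a fortiori at $t_D$. \textbf{Indirect assignment}: $D$ is set as the dominator of $N$ inside Algorithm~\ref{algo:addNode}, triggered by the addition to the tree of some node $X$ with $c_X = c_D$; in this loop, $D$ is in $dominatedNodes$ and the explicit neighbor check on $neigh = N$ demands $c_N \notin treeColors$. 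Moreover, in this sub-case, $c_D$ first enters $treeColors$ precisely when $X$ is added (if $c_D$ were already present, Algorithm~\ref{algo:domNode} would have rejected $X$ as dominator), so that very check happens at $t_D$ itself. In both sub-cases $c_N$ enters $treeColors$ only at some strictly later time $t_N > t_D$, so Step~1 of Method~\ref{meth:orderColor} yields $s_N > s_D$, as desired.

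The delicate part I expect is the bookkeeping around how and when $D$ becomes $N$'s dominator, and in particular the identification of $t_D$ with the triggering time $t_X$ in the indirect sub-case; once the possible execution paths of Algorithms~\ref{algo:domNode} and~\ref{algo:addNode} are enumerated, the conclusion falls out of the explicit color-conflict checks present in the pseudo-code, together with the translation-invariance argument that placed $c_N$ in Step~1 in the first place.
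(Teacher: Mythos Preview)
Your argument is correct, but it proceeds quite differently from the paper's. The paper's proof is short and structural: it invokes Lemma~\ref{lemma:insideFirst} (the two-phase \emph{inside/outside} construction). Since $N\in\mathcal{P}_A$ but $N\in\mathcal{T}_{A_1}$ with $A_1\neq A$, the translate $N'=N-\vec{AA_1}\in\mathcal{T}_A$ lies outside $\mathcal{P}_A$; because the only node of color $c_N$ inside $\mathcal{P}_A$ is $N$ itself and $N\notin\mathcal{T}_A$, the color $c_N$ could only have entered $treeColors$ during the \emph{outside} phase. The dominator $D$, on the other hand, is in $\mathcal{T}_A\cap\mathcal{P}_A$ (Principle~5), so $c_D$ was added during the \emph{inside} phase. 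Lemma~\ref{lemma:insideFirst} then gives $s_D<s_N$ immediately.

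Your route bypasses the inside/outside partition entirely: you locate the exact iteration in which $D$ is assigned as $N$'s dominator, enumerate the only two code paths (Algorithm~\ref{algo:domNode} line~23 or Algorithm~\ref{algo:addNode} line~8), and read off from the guards that $c_N\notin treeColors$ at that moment while $c_D$ has just been inserted. This is more operational and arguably more rigorous---it makes explicit why the Step~1 ordering respects the dominator relation, a fact the paper leaves somewhat implicit. The cost is the case analysis on the two assignment paths and the bookkeeping identifying $t_D$ with the triggering time in the indirect case. Both proofs share the same translation-invariance step to place $c_N$ among the Step~1 colors; your version spells it out more carefully. One minor point: the phrase ``a fortiori at $t_D$'' is slightly loose, since $treeColors$ has grown between the guard check and $t_D$; what actually justifies it is that the only color added in between is $c_D\neq c_N$.
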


\proof
According to Lemma~\ref{lemma:insideFirst}, the colors of nodes inside $\mathcal{P_A}$ is ordered before the part outside. Thus, the color of $N$ is ordered after all nodes in $\mathcal{P}_{A_1}$. This means that the color of $N$ is ordered after all nodes in $\mathcal{P}_A$ (because $\mathcal{T}_A$ is a copy of $\mathcal{T}_{A_1}$). Hence, $N$ has a slot after its dominator relatively to $\mathcal{T}_A$.

\endproof

\begin{theorem}\label{theo:oneCycleTree}
Given any aggregator $A$, nodes in $\mathcal{P_A}$ reach the aggregator $A$ in one cycle.
\end{theorem}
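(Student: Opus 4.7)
The plan is to proceed by case analysis on the position of $N \in \mathcal{P}_A$ with respect to the dominating tree $\mathcal{T}_A$, reducing both cases to Theorem~\ref{theo:treeNode}.

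The first case is $N \in \mathcal{T}_A$. Here Theorem~\ref{theo:treeNode} applies directly: every node that belongs to $\mathcal{T}_A$ reaches the root $A$ in at most one cycle along tree edges, so no further work is needed.

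The second case is $N \in \mathcal{P}_A \setminus \mathcal{T}_A$. By Theorem~\ref{theo:dom}, $N$ is still dominated by $\mathcal{T}_A$, so it has a neighbor $D \in \mathcal{T}_A$ that serves as its dominator, and the routing rule of Section~\ref{sec:aggregators} sends $N$'s packet to $D$. The full routing path is therefore $N \to D \to P_1 \to \ldots \to A$, where $D \to P_1 \to \ldots \to A$ is a path in $\mathcal{T}_A$. Theorem~\ref{theo:treeNode} applied at $D$ shows that the tree suffix $D \to \ldots \to A$ is traversed within a single cycle, so the only thing to check is that prepending the single hop $N \to D$ preserves this property. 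For this I would invoke Lemma~\ref{lemma:otherTree} together with Lemma~\ref{lemma:insideFirst}: the two-step slot assignment of Method~\ref{meth:orderColor} orders the colors of tree nodes first, and consequently $N$'s slot is placed after the slot of its dominator $D \in \mathcal{T}_A$. This places the $N \to D$ transition in exactly the position required to extend the tree path by one hop without exceeding a single scheduling cycle, and combining this with the suffix delay gives the result.

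The main obstacle is Case 2: one must ensure that the prepended hop $N \to D$ does not push the aggregate delay past one cycle. The essential ingredient is the ``inside-first'' ordering of Method CO-Routes, by which nodes of $\mathcal{P}_A$ that belong to $\mathcal{T}_A$ receive their slots as they are added to the tree in Step~1, while any remaining node of $\mathcal{P}_A$ (including $N$) is scheduled in Step~2, which places $N$'s slot after that of $D$ and its ancestors. A secondary subtlety is that Lemma~\ref{lemma:otherTree} is formally stated under the extra hypothesis that $N$ belongs to some other tree $\mathcal{T}_{A_1}$; the same argument using Lemma~\ref{lemma:insideFirst} and Step~2 of Method~\ref{meth:orderColor} extends to an arbitrary $N \in \mathcal{P}_A \setminus \mathcal{T}_A$, and this mild extension should be stated explicitly in the write-up for completeness.
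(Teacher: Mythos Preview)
Your proposal is correct and follows essentially the same route as the paper. The paper's proof makes the case split explicit as three cases rather than two: (1)~$N\in\mathcal{T}_A$; (2)~$N\notin\mathcal{T}_A$ and $N$ lies in no tree, handled directly by Step~2 of Method~\ref{meth:orderColor}; and (3)~$N\notin\mathcal{T}_A$ but $N\in\mathcal{T}_{A_1}$ for some other aggregator, handled by Lemma~\ref{lemma:otherTree}. You fold (2) and (3) into a single ``$N\in\mathcal{P}_A\setminus\mathcal{T}_A$'' case and then recover the distinction in your discussion of obstacles, which is exactly the right refinement---note in particular that when $N\in\mathcal{T}_{A_1}$ its color is ordered in Step~1 (not Step~2), so Lemma~\ref{lemma:otherTree} is genuinely needed there and is not merely a special instance of the Step~2 argument.
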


\proof
Let $N$ be a node in $\mathcal{P_A}$. From Lemma~\ref{theo:dom}, this node has necessarily one dominator which is its next hop towards the aggregator $A$. Three cases are possible:
	\begin{enumerate}
	\item $N \in \mathcal{T}_A$: the result comes from Theorem~\ref{theo:treeNode}. 
	\item $N \notin \mathcal{T}_A$ and does not belong to any tree. According to Method~\ref{meth:orderColor}, the color of $N$ is ordered after nodes in the tree, and hence after its dominator. This dominator is able to reach the sink in a single cycle (see Theorem~\ref{theo:treeNode}). 
	\item $N \notin \mathcal{T}_A$ but belongs to a tree $\mathcal{T}_{A_1}$ rooted at another aggregator $A_1$. According to Lemma~\ref{lemma:otherTree}, $N$ is scheduled after its dominator, itself able to reach the sink in a single cycle according to Theorem~\ref{theo:treeNode}.
	\end{enumerate}	
\endproof

\section{$Highways$ Construction and Color Ordering}\label{sec:highways}
$Highways$ are paths between aggregators and the data sink. Data are routed from one aggregator to another in the direction of the sink. These paths are called $Highways$ because they are fast; they comprise a small number of nodes and slots granted to these nodes are ordered according to the appearance order of nodes in these paths. In this section, we detail these aspects.

\subsection{$Highways$ Construction and Routing over $Highways$}\label{sec:highCon}

The construction of $Highways$ is based on the following steps:

\begin{enumerate}
\item Considering the grid composed of aggregators (that is a grid where vertices are the aggregators), except aggregators on the grid border, other aggregators have $4$ neighboring aggregators. That is, if we consider the aggregator $A_0$ placed at the center of the grid, this aggregator has $4$ neighboring aggregators $A_u$, $A_v$, $A_{-u}, A_{-v}$ placed respectively at nodes of coordinates $u$, $v$, $-u$ and $-v$ where $u$ and $v$ are the VCM vectors. 
\item We build $4$ shortest paths between the aggregator $A_0$ and each of its neighboring aggregators $A_u$, $A_v$, $A_{-u}, A_{-v}$. For this reason, we speak about $4$ $Highways$: $H_u$ on the direction of $u$ vector (between $A_0$ and $A_u$), the second one denoted $H_v$ in the direction of $v$ vector, the third one denoted $H_{-u}$ is in the direction of $-u$ vector and the fourth one denoted $H_{-v}$ in the direction of $-v$ vector.
\item Two different $Highways$ should not share the same nodes in order to balance the routing load among different nodes.

Figure~\ref{fig:highway-R3} illustrates the $Highways$ built by ORCHID for the node $(0,0)$ for $R=3$.

\begin{figure}[h!]
\centering
\includegraphics[width=2.7in]{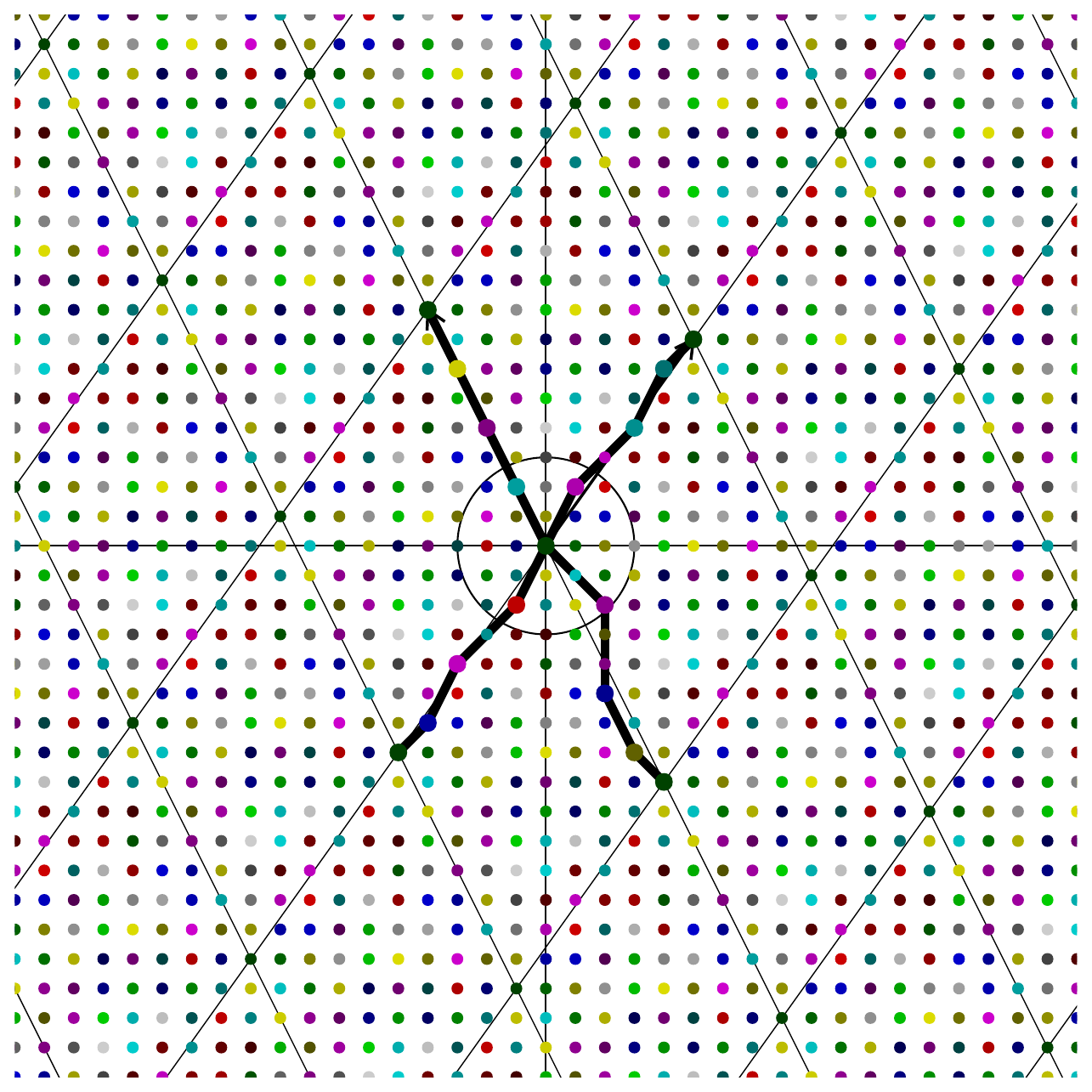}
\caption{Highways built by ORCHID for $R=3$.
\label{fig:highway-R3}}
\end{figure}

\item The $Highways$ built for the aggregator $A_0$ are repeated for all other aggregators. Notice that aggregators that are in the border of the grid do not need to use all of these $Highways$.
\item The previous steps allow each aggregator to have $Highways$ in four directions. These $Highways$ are used for routing as follows: We use a geographic routing like~\cite{geoRou} to select the closest neighboring aggregator to the sink. The path followed to reach this aggregator is the $Highway$ computed in step $2$.
\end{enumerate}

\begin{remark}
Each aggregator has $Highways$ in $4$ directions. This ensures a routing flexibility especially in the following scenarios: the position of the sink is unknown beforehand, the sink is mobile or when there are multiple sinks. Furthermore, it increases robustness against routing failures.
\end{remark}

\begin{remark}
Notice that building $Highways$ for further directions means activating more nodes in $H$ period. 
Hence, the number of built $Highways$ for any aggregator is subject to a compromise between delays and energy consumed per node. 
\end{remark}

\subsection{CO-Highways: Color Ordering for $Highways$}
After building $Highways$, we now order the colors of nodes in these $Highways$ in a specific part of the STDMA cycle dedicated to $Highways$. Indeed, for each $Highway$, we order the color of intermediate nodes of this $Highway$ such that the color of the $i^{th}$ node on this path is associated with the time slot number $i$ in the part of the STDMA cycle dedicated to $Highways$. 

\begin{theorem}
With $Highways$, no more than one cycle is required to any aggregator to reach one of its neighboring aggregators.
\end{theorem}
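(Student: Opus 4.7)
The plan is to work directly from the CO-Highways construction in Section 8.2 and the definition of one-hop delay in Equation~(\ref{eq:one-hop-delay}). Fix an aggregator $A$ and let $A'$ be one of its four neighboring aggregators (in direction $u$, $v$, $-u$, or $-v$). By the Highway construction of Section~\ref{sec:highCon}, there is a shortest path $H = (v_0, v_1, \dots, v_k)$ with $v_0 = A$ and $v_k = A'$. By CO-Highways, the color of $v_i$ is assigned to slot~$i$ of the Highways sub-cycle, so $s_{v_i} = i$.

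First, I would verify that the assignment ``color of $v_i\mapsto$ slot $i$'' is well-defined, i.e.\ that no two nodes along $H$ share a VCM color. Since consecutive nodes on $H$ are within one communication hop (range~$R$) of each other, and since VCM is an $h$-hop coloring with $h\geq 1$, adjacent pairs in $H$ necessarily have distinct colors. More generally, I would argue from the shortest-path property and the structure of VCM's periodic pattern that the total span of $H$ lies within a portion of the grid on which VCM assigns pairwise distinct colors; the length of the Highways sub-cycle is chosen precisely to match this number of distinct colors.

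Second, with distinct slots $s_{v_0} < s_{v_1} < \dots < s_{v_k}$ (namely $s_{v_i}=i$), Equation~(\ref{eq:one-hop-delay}) gives a one-hop delay of exactly $s_{v_{i+1}} - s_{v_i} = 1$ slot along each link of $H$ (the branch with ``$S+s_j-s_i$'' never applies since slots strictly increase along the path). Summing along the path using Equation~(\ref{eq:delay}) yields
\[
D(H) \;=\; \sum_{i=0}^{k-2} D(v_i,v_{i+1}) \;=\; k-1,
\]
which is at most the length of one Highways sub-cycle. Hence a packet emitted at $A$ arrives at $A'$ within one sub-cycle, which is what was to be shown.

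\textbf{Main obstacle.} The delicate point is Step~1: establishing that the nodes of $H$ all carry distinct VCM colors, and therefore that a single Highways sub-cycle suffices to house one slot per hop. This is not automatic from VCM's $h$-hop property alone once $k$ exceeds $h$; it has to be justified from the specific geometry of VCM's generator vectors $u,v$ and the fact that Highways follow the shortest paths between \emph{adjacent} aggregators (whose separation $|u|$ or $|v|$ is precisely the period of the VCM pattern). I would make this precise by invoking the definition of the VCM pattern as a fundamental parallelogram of the color lattice: nodes on a shortest path from $A$ to an adjacent aggregator $A'$ correspond to distinct lattice representatives and hence distinct colors.
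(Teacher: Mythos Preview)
Your argument is correct and follows the same line as the paper's proof, which is far terser: the paper simply states that ``Nodes on $Highways$ have colors that are sorted according to the order of nodes in $Highways$. So, each node is scheduled before its next hop,'' and stops there. In particular, the paper does not engage at all with what you call the ``main obstacle''; it takes the well-definedness of CO-Highways as part of the construction rather than something to be verified.

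One small correction to your Step~1: the endpoints $v_0=A$ and $v_k=A'$ are both aggregators and therefore carry the \emph{same} VCM color (they differ by exactly one generator vector of the color lattice). So the colors along $H$ are not all pairwise distinct. This does not damage the conclusion, because the destination $A'$ only needs to \emph{receive} from $v_{k-1}$ in the current sub-cycle; its own transmission slot belongs to the next $Highway$ step. What you actually need---and what your fundamental-parallelogram argument does support---is that $v_0,v_1,\dots,v_{k-1}$ (equivalently $v_1,\dots,v_k$) have pairwise distinct colors, so that CO-Highways can assign strictly increasing slots to the transmitting nodes along the path. With that adjustment, your computation via Equations~(\ref{eq:one-hop-delay}) and~(\ref{eq:delay}) goes through exactly as written.
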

\proof
Nodes on $Highways$ have colors that are sorted according to the order of nodes in $Highways$. So, each node is scheduled before its next hop. Hence the theorem.
\endproof

\section{Orchestration in a STDMA cycle}\label{sec:orchestration}
The key idea of the proposed architecture is that each packet is transmitted first from its source node to the closest aggregator, then this packet is transmitted from one aggregator to another via $Highways$ until it reaches the data sink. 
In this section, we describe how to carry out this orchestration. 
\subsection{Overview}
Figure~\ref{fig:globalTDMA} depicts the global STDMA cycle obtained. This cycle is composed of two periods:
\begin{figure}[H]
\centering
\includegraphics[width=1.85in]{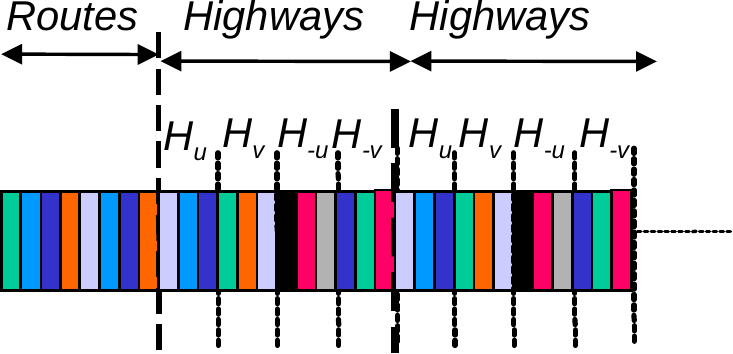}
\caption{Global STDMA cycle.\label{fig:globalTDMA}}
\end{figure}

\begin{enumerate}
\item The first period $R$ corresponds to the schedule 
of $Routes$. In this period, each node is awake during its slots and during the slots of its children in the dominating tree. Let $|R|$ be the number of slots of this period.
\item The second period $H$ is composed of colors of the $Highways$. This period is divided itself into $4$ sub-periods: $H_u$, $H_v$, $H_{-u}$ and $H_{-v}$. Each period $H_w$ corresponds to a $Highway$ between two aggregators optimized according to the direction of vector $w$ (see Section~\ref{sec:highCon}). The alternation of the $4$ directions in $H$ cycle favors access fairness between aggregators that follow different directions to reach the sink. 
The period $H$ is repeated a number of times until the farthest aggregator from the sink reaches this sink. In this period, each node belonging to any $Highway$ is awake during its slots and during the slots of nodes for which it is a next hop. 
\end{enumerate}

\subsection{Further CO-Highways Optimizations}\label{sec:highway-opt}
In the cycle for $Highways$, $H_u$, $H_v$, $H_{-u}$, $H_{-v}$, we observe
that one slot is allocated to aggregators at the beginning of each $H_k$ cycle where $k \in\{u,v,-u,-v\}$
. Some of these are unneeded, such as the first slot of $H_u$ 
that immediately follows the last slot of the $Routes$, also dedicated for
aggregators.
By further reordering the $Highway$ sequence as $H_u$, $H_{-u}$, $H_v$, $H_{-v}$,
the first slots of $H_{-u}$ and $H_{-v}$ can also be 
removed, as they never provide progress for routing\footnote{The one of $H_v$ is still required so as to be able to route both along $u$ and $v$ in the same highway cycle.}. Hence, if the highway $H_u$ has 3 intermediate nodes, the relative cycle should contain only 3 slots.

\section{Analytical Evaluation of ORCHID Performance} \label{sec:results}
The objective of this section is to estimate the number of slots of the global STDMA cycle of ORCHID and compare the  energy consumed in ORCHID and in IRCO 
(see Section~\ref{sec:IRCO}) when $R \rightarrow \infty$.
First, we introduce the following notations: \\
$\bullet$ The radio range is $R$ and the network is comprised of nodes of the grid, within the disk of size defined by $L$ (with radius = $L R$). \\
$\bullet$ The periodic coloring is defined with VCM vectors $u_1$, $u_2$; their angle
  is $\alpha$, and the number of colors is asymptotically equal to $\theta R^2(1+O(\frac{1}{R}))$ (Property~\ref{theo:nbColor}). \\
$\bullet$ We have proved in~\cite{VCM} that an asymptotically optimal (in number of colors) h-hop coloring produced by VCM yields 
  when  $R \rightarrow \infty$ values asymptotically equal to: 
  $\theta = \frac{\sqrt{3}}{2}h^2$, 
  $\alpha=\frac{\pi}{3}$, $|u_1| = |u_2| = hR$
 
The determination of the number of nodes in a disk of radius $R$ is known as the \textit{Gauss Circle Problem}\footnote{http://en.wikipedia.org/wiki/Gauss\_circle\_problem}. This number is equal to $\pi R^2
+o(R)$. In the following, we asymptotically approximate this number by $\pi R^2$.
\subsection{Cycle Length}\label{sec:analysis}
The STDMA cycle of ORCHID comprises a number of slots estimated by Theorem~\ref{theo:totalSlots}. 
\begin{theorem}\label{theo:totalSlots}
With VCM coloring, the global STDMA cycle comprises a number of slots estimated by $\theta R^2 + 4(h+1)\frac{4\sqrt{3}L}{3h}$~\footnote{without the optimization of Section~\ref{sec:highway-opt}.}.
\end{theorem}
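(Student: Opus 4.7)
The plan is to decompose the global STDMA cycle as $|R|+k\,|H|$, where $|R|$ is the length of the $Routes$ period, $|H|$ is the length of one $Highways$ sub-cycle, and $k$ is the number of $H$ repetitions required for the farthest aggregator to reach the sink, and then to estimate each of the three quantities separately.

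For $|R|$, the $Routes$ period lists each VCM color exactly once, so Property~\ref{theo:nbColor} gives $|R|\sim \theta R^{2}$, accounting for the first term. For $|H|$, the $Highways$ period concatenates the four sub-periods $H_{u_1},H_{u_2},H_{-u_1},H_{-u_2}$; two neighbouring aggregators sit asymptotically at Euclidean distance $|u_1|=|u_2|=hR$, so a shortest path between them uses $h$ hops and is made of $h+1$ nodes, and since CO-Highways assigns one slot per node of each highway, each sub-period has $h+1$ slots and $|H|=4(h+1)$.

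For $k$, the key observation is that in its un-optimised form a full $H$ cycle lets a packet progress by at most one aggregator in each of the four lattice directions $\pm u_1,\pm u_2$. Hence an aggregator at lattice coordinates $(p,q)$, i.e.\ at position $p\,u_1+q\,u_2$, reaches the sink in at most $|p|+|q|$ repetitions. We bound the worst case $\max(|p|+|q|)$ over aggregators lying inside the disk of radius $LR$ by the separable estimate $\max|p|+\max|q|$, which reduces the problem to bounding $\max|p|$. Using $\alpha=\pi/3$, so that $u_1\!\cdot\! u_2=(hR)^{2}/2$, the constraint $|p\,u_1+q\,u_2|\le LR$ becomes $p^{2}+pq+q^{2}\le (L/h)^{2}$; minimising the quadratic form over $q$ for fixed $p$ (optimum $q=-p/2$, value $3p^{2}/4$) yields $|p|\le 2L/(h\sqrt{3})=\tfrac{2\sqrt{3}\,L}{3h}$. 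By the $u_1\leftrightarrow u_2$ symmetry the same bound holds for $|q|$, so $k\le \tfrac{4\sqrt{3}\,L}{3h}$, which is exactly the second factor of the stated estimate. Summing, $|R|+k\,|H|\le \theta R^{2}+4(h+1)\cdot\tfrac{4\sqrt{3}\,L}{3h}$.

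The main obstacle is the estimate for $k$: the separable bound $\max|p|+\max|q|$ on $\max(|p|+|q|)$ is not tight, since the true worst-case sum is only $2L/h$, attained along the direction $u_1-u_2$; nevertheless $\tfrac{4\sqrt{3}\,L}{3h}$ is the natural direction-by-direction upper bound produced by the quadratic-form argument above, and it is the one that matches the claimed formula. Lattice-discreteness of $(p,q)\in\mathbb{Z}^{2}$ and disk-boundary effects contribute only lower-order corrections, which are absorbed into the asymptotic estimate together with the $O(1/R)$ correction of Property~\ref{theo:nbColor}.
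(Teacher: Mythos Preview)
Your decomposition $|R|+k\,|H|$ and the estimates $|R|\sim\theta R^{2}$ and $|H|=4(h+1)$ match the paper's proof exactly. The difference is entirely in how the repetition count $k$ (called $\lambda$ in the paper) is obtained.

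The paper argues purely geometrically: it takes the worst-case Euclidean distance between any two aggregators in the disk, namely the diameter $2LR$, and divides by the minimum geographic-routing progress per full highway cycle, which it states as $\min(|u_1|,|u_2|)\sin\alpha$. With $|u_1|=|u_2|=hR$ and $\alpha=\pi/3$ this gives $\lambda=\dfrac{2LR}{hR\,\sqrt{3}/2}=\dfrac{4\sqrt{3}L}{3h}$ in one line, with no quadratic-form analysis at all. Your route through lattice coordinates and the constraint $p^{2}+pq+q^{2}\le(L/h)^{2}$ is genuinely different but reaches the same number.

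Two remarks on your derivation of $k$. First, there is a small non-sequitur: from ``one full $H$ cycle allows at most one aggregator step in each of the four directions'' the natural conclusion is that $\max(|p|,|q|)$ repetitions suffice (one $u_1$-step and one $u_2$-step per cycle), not $|p|+|q|$; the latter would follow from one step \emph{total} per cycle. Since $\max(|p|,|q|)\le|p|+|q|$ your bound still holds, but the premise should be stated consistently with the conclusion. Second, you implicitly place the sink at the centre (so aggregators lie within radius $LR$), whereas the paper dimensions $\lambda$ for any pair of aggregators (diameter $2LR$); you recover the same $\tfrac{4\sqrt{3}L}{3h}$ only because the factor of $2$ you save on distance is then spent on your separable overbound $\max|p|+\max|q|$. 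The paper's Euclidean-progress argument is therefore both shorter and independent of where the sink sits.
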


\begin{proof}
The global STDMA cycle of ORCHID is composed of $Routes$ that asymptotically contains $\theta R²$ slots and $Highway$ that should be repeated a number of times that we denote $\lambda$. 
$\lambda$ should be such that one aggregator is able to reach the farthest other aggregator in the disk of radius $LR$. The largest \emph{geographic} distance between any two nodes corresponds to the diagonally opposite nodes with $2 L$. The minimum (worst-case) progress in along one direction with geographic routing is $\min(|u_1|,|u_2|) \sin(\alpha)$ per full $Highway$ cycle
 ($H_u$, $H_v$, $H_{-u}$, $H_{-v}$). Consequently, (when $R \rightarrow \infty$):
\begin{equation}
\lambda = \frac{2LR}{min(|u_1|,|u_2|) \sin(\alpha)} =\mathrm{(for~VCM)} \frac{4 \sqrt{3} L}{3h}
\end{equation} 
Now, we need to compute the number of slots of a given $Highway$. Remember that a $Highway$ is a path between two given neighbor aggregators. 
Asymptotically when $R \rightarrow \infty$ (and thus $|u_1|$, $|u_2|$ also 
grow), the number of hops approaches the distance, and is $\lfloor \frac{max(|u_1|,|u_2|) }{R}\rfloor + 1$.
For VCM, asymptotically, they are exactly $h+1$ hops away (see \cite{VCM}). 
This completes the proof.
\end{proof}

\begin{corollary}\label{coro:cycles}
Asymptotically, ORCHID with VCM coloring needs $1$ cycle of length $\theta R^2$ and $4 \times \frac{4 \sqrt{3} L}{3h}$ cycles of length $(h+1)$.
\end{corollary}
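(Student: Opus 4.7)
My plan is to read off the corollary directly from the decomposition used in the proof of Theorem~\ref{theo:totalSlots}, since the corollary is essentially a restatement of that result expressed in terms of the number and length of constituent sub-cycles rather than the total slot count.

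First I would recall the structure of the global STDMA cycle as described in Section~\ref{sec:orchestration}: it is partitioned into a single $Routes$ period followed by a $Highways$ period. The $Routes$ period uses each VCM color exactly once, so by Property~\ref{theo:nbColor} its length is asymptotically $\theta R^2$. This gives the first summand and the first factor of the claim: exactly one cycle of length $\theta R^2$.

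Next I would analyze the $Highways$ period. By construction (Section~\ref{sec:highCon}) it is the concatenation of four sub-cycles $H_u$, $H_v$, $H_{-u}$, $H_{-v}$, each corresponding to a shortest $Highway$ between neighboring aggregators. In the proof of Theorem~\ref{theo:totalSlots} it was established that asymptotically each such shortest path contains exactly $h+1$ hops (from \cite{VCM}), so each sub-cycle has length $h+1$. It was further shown that the whole $H$ pattern must be repeated $\lambda = \frac{4\sqrt{3}L}{3h}$ times in order for the farthest aggregator to reach the sink. Hence the total number of sub-cycles appearing in the $Highways$ period is $4\lambda = 4 \cdot \frac{4\sqrt{3}L}{3h}$, each of length $h+1$, yielding the second factor of the corollary.

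There is no real obstacle here: the two claims of the corollary are exactly the two terms produced in the proof of Theorem~\ref{theo:totalSlots} before they were multiplied and summed into a single slot count. The only minor care required is to distinguish ``full $H$ iterations'' (of which there are $\lambda$) from ``sub-cycles of length $h+1$'' (of which there are $4\lambda$), so as to match the form $4 \times \frac{4\sqrt{3}L}{3h}$ cycles of length $h+1$ stated in the corollary.
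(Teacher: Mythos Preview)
Your proposal is correct and matches the paper's treatment: the corollary is stated immediately after Theorem~\ref{theo:totalSlots} with no separate proof, so it is indeed meant to be read off directly from the two terms produced in that proof. Your care in distinguishing the $\lambda$ full $H$ iterations from the $4\lambda$ sub-cycles of length $h+1$ is exactly the right observation to extract the stated form.
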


Notice here that the $Highway$ cycle is shorter than the $Routes$ cycles. We will find the same result with simulation results in Section~\ref{sec:slotOrchSimul}.

\subsection{Energy Consumption}\label{sec:energyCon}
The aim of this section is to evaluate the benefits of ORCHID from energy point of view. 
\newcommand{\cE}{\mathcal{E}}
Notice that generally with radio duty cycling scheduling, the sensor node consumes energy: during its slot if it is transmitting and during the slots of its potential transmitters. For simplicity, we will assume that such active node (during a slot), receiving or transmitting, will require an energy equal to $\cE$.
From results in Section~\ref{sec:analysis} and in Section~\ref{sec:model}, we can estimate the total energy consumed by all nodes in ORCHID and IRCO. \\

In Property~\ref{prop:eqModel}, we have evaluated the average delay in number of slots needed to traverse one radio range which is equal to $\frac{3}{2}\theta + \frac{3}{4}\pi$. In a disk of radius $LR$, the farthest two nodes are of a distance $2LR$. Hence, to route between any two nodes
in this disk, the total number of slots needed is: $ 2L  (\frac{3}{2}\theta + \frac{3}{4}\pi)$. Thus, a number of cycles equal to $ 2L  (\frac{3}{2}\theta + \frac{3}{4}\pi) \times \frac{1}{\theta R^2}$.

We assume that the density defined as the number of nodes per surface unit is equal to $1$. Hence, the total number of nodes in the disk of radius $LR$ is equal to: $\pi L^2R^2$. Each node is awake one slot per cycle for transmission and its neighbors ($\pi R^2 - 1$ neighbors) are awake during this slot for reception. Hence, the total energy consumed in IRCO is:
\begin{equation}
E_\mathrm{IRCO} = 2L\times(\frac{3}{2}\theta+\frac{3}{4}\pi)\times (\pi R^2)\times (\pi R^2 L^2) \times \frac{1}{\theta R^2} \times \cE 
\end{equation}
Recall that $\theta=\frac{\sqrt{3}}{2}h^2$, hence we have:
\begin{equation}
E_\mathrm{IRCO} = \frac{\pi^2 \sqrt{3}(\pi + h^2 \sqrt{3}) R^2 L^3}{h^2}\cE
\end{equation}

For ORCHID, the total energy consumed is the energy consumed in the $Route$ cycle plus the energy consumed in the $Highway$ cycle.  \\
Asymptotically, the $Route$ cycle contains $\theta R^2$ slots. For each slot, all nodes having the associated color (whose number is = $\frac{1}{\theta R^2} \times \pi R^2L^2$) are awake as well as their neighbors ($\pi R^2 - 1$). \\ 
From Corollary~\ref{coro:cycles}, $Highway$ cycles are repeated $4 \times \frac{4 \sqrt{3} L}{3h}$, each of length $h+1$ slots. During each $Highway$ slot, $2$ nodes are active (the transmitter and its next hop). Of course, all nodes having the same color as these two nodes (with a number = $ \frac{1}{\theta R^2} \times \pi R^2 L^2$) are also active, hence:
\begin{equation}
E_\mathrm{ORCHID} = \left( \frac{1}{\theta R^2} \times \pi R^2 L^2 \times \theta R^2 \times \pi R^2 + 4 \times \frac{4 \sqrt{3} L}{3h} \times (h+1) \times 2  \times    \frac{1}{\theta R^2} \times \pi L^2 R^2 \right) \cE
\end{equation}
Hence,
\begin{equation}
E_\mathrm{ORCHID} = (\pi^2R^4L^2 + \frac{32 \pi \sqrt{3} (h+1)}{3 h \theta}L^3)\cE
\end{equation}

As a result, when $R$ is fixed and when $L \rightarrow \infty$: most of
the cost from $E_\mathrm{ORCHID}$ comes from the $Highways$ (right part),
and the ratio
between the  two methods converges to 
$E_\mathrm{ORCHID}/E_\mathrm{IRCO} \rightarrow \frac{a}{R^2}$ where $a$ is a constant, illustrating the gains of ORCHID for higher densities.

\section{Experimental Results}\label{sec:exp-res}

We developed Python programs implementing ORCHID and IRCO. 
We compare these two methods in terms of delays and energy consumption.
\newcommand{\OO}{$\mathcal{O}$}
We used a topology with nodes of the grid inside the disk of radius $L=300$. 
The scenario is that information sent by any node in this topology reaches the aggregator \OO, the center of the disk.
\footnote{They could be extended to the case where every node reaches one aggregator.}.

\subsection{Number of Slots of ORCHID and IRCO}\label{sec:slotOrchSimul}
\subsubsection{How the Number of Slots is Computed?} 
\noindent $\bullet $ {\bf ORCHID}: 
We first find $\lambda$, the number of $Highways$ repetition. Then, the number of slots is obtain by adding this to $\lambda *$ size of $highways$ cycle  $+$ the size of the $Route$ cycle (= the number of VCM colors). \\
$\bullet $ {\bf IRCO with Shortest-Delay Path}: Path delays are expressed as in Equation~\ref{eq:delay}.

\subsubsection{Results}
We run experiments with various radio ranges ($1$ experiment par radio range). Table~\ref{tab:orchid-result} gives the worst case results (in terms of number of slots). Hence, this table gives an idea about the dimensioning of the global STDMA cycle. 

\begin{table}[!h]
\centering
\begin{scriptsize}
\begin{tabular}{|c|c|c|c|c|c|c|}\hline
\multicolumn{4}{|c|}{VCM parameters ($Route$)} & IRCO &
    \multicolumn{2}{c|}{ORCHID} \\ \hline
Radio range & $u_1$ & $u_2$ & nb colors. & max nb VCM cycles  & nb highway slots &  max nb highway cycles\\ \hline
$2$& $(4, 3)$ & $(-3, 4)$& $25$ & $62$ & $13$ & $60$\\ \hline
$3$& $(5, 7)$ & $(-4, 8)$& $68$ & $16$ & $13$ & $39$\\ \hline
$4$& $(8, 8)$ & $(-3, 11)$& $112$ & $9$ & $13$ & $30$\\ \hline
$5$& $(15, 3)$ & $(4, 14)$& $198$ & $5$ & $13$ & $23$\\ \hline

\end{tabular} 
\end{scriptsize}
\caption{Comparison of results for sample grid networks with various
radio range\label{tab:orchid-result}}
\end{table}
%

From Table~\ref{tab:orchid-result} we can deduce two results. \textit{First}, as highlighted in Section~\ref{sec:energyCon}, in ORCHID, the number of slots produced by the repetition of the $Highways$ is smaller than the number of slots in the $Route$ cycle.
For instance, for radio range $=4$, $Route$ cycle comprises $112$ slots and the $Highway$ cycle should be repeated $30$ times and comprises $13$ slots (with optimisation of Section~\ref{sec:highway-opt}, otherwise $16$): hence a total of $390$ slots. 
\textit{Second}, ORCHID has a smaller number of 
slots than IRCO. For instance, for radio range =$4$, while ORCHID uses $390+112=502$ slots, IRCO uses $112 \times 9=1008$ slots.


\begin{figure}[!h]
\centering
\includegraphics[width=0.7\linewidth]{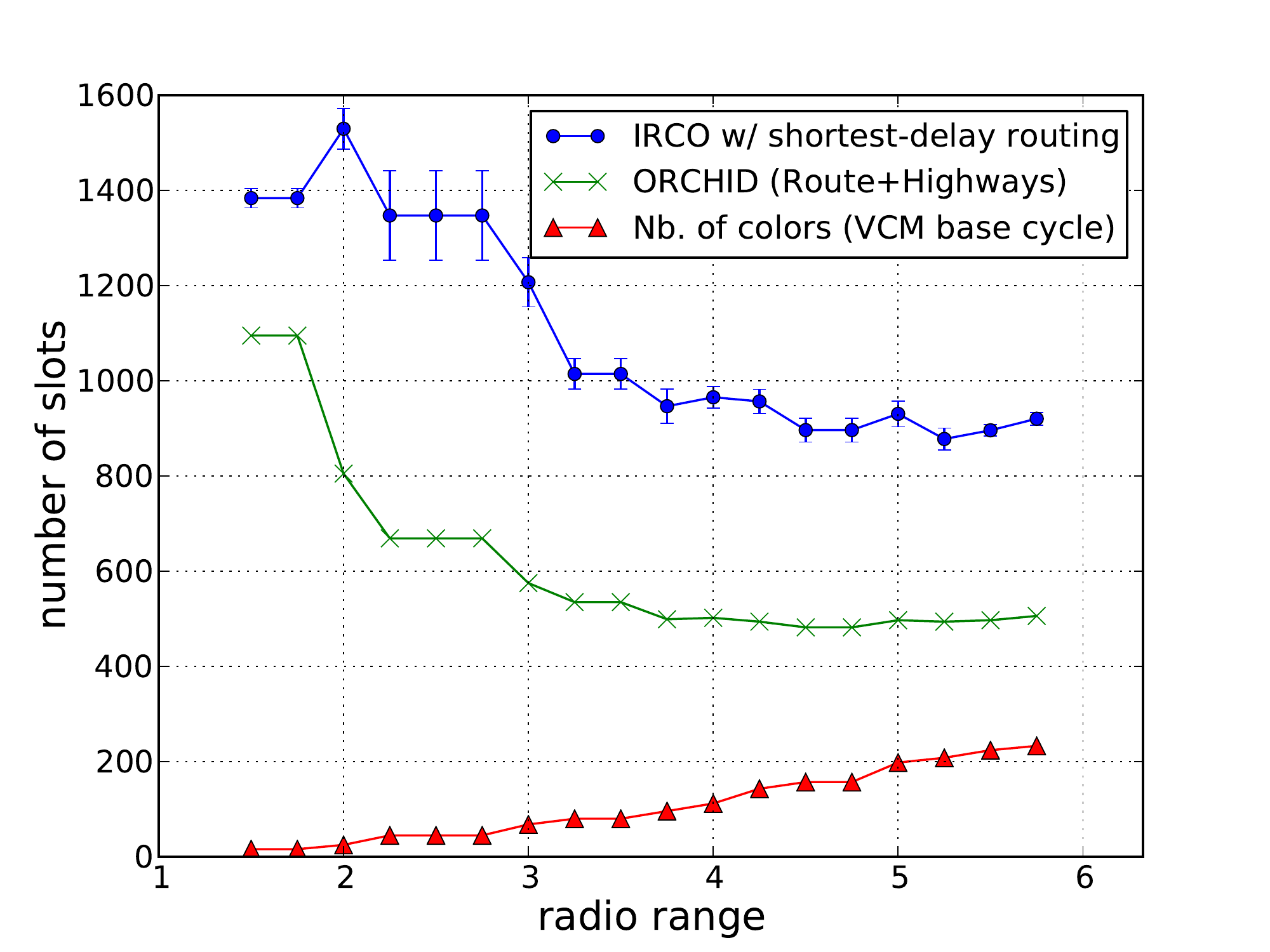}%
\caption{Total number of slots necessary for reaching center node in one cycle}\label{fig:orchid-result}
\end{figure}

Figure~\ref{fig:orchid-result} represents graphically the same information (average of $60$ simulations for IRCO): it is the total
number of slots necessary to guarantee that every node of the disk of radius $LR$ reaches the center aggregator \OO.

As illustrated in Figure~\ref{fig:orchid-result}, ORCHID is noticeably more efficient than IRCO. This highlights the efficiency of ORCHID scheduling. 

Notice also that as the radio range increases, the number of colors in the VCM cycle becomes the main part of the ORCHID cycle. For IRCO, this part would suffice to cover a large set of the nodes which means that this cycle is repeated only a few times in these topologies. Consequently, the margin of improvement is restricted in this case.

Note also that in actual WSN networks it is unlikely that a ``all-to-all'' route computation would be performed, hence another routing
such as IRCO with greedy routing would be implemented, at the
cost of worst performance than shortest-delay path. This means that ORCHID is suitable for WSNs.

\subsection{Energy Consumption in ORCHID}
\subsubsection{How Energy Consumption is Computed?}
The energy evaluation is computed as follows:
we count $1$ unit energy consumed 
for every slot where one node is listening to neighbors or is transmitting.
In a VCM cycle ($Route$ cycle), potentially every node should be listening
to every neighbor (as it might be used as a relay by these neighbors). In a $Highway$ cycle, in each slot, only nodes belonging to a $Highway$ can be transmitting and only their predefined destination should be listening.

\subsubsection{Results}
\begin{figure}[!h]
\centering
\includegraphics[width=0.7\linewidth]{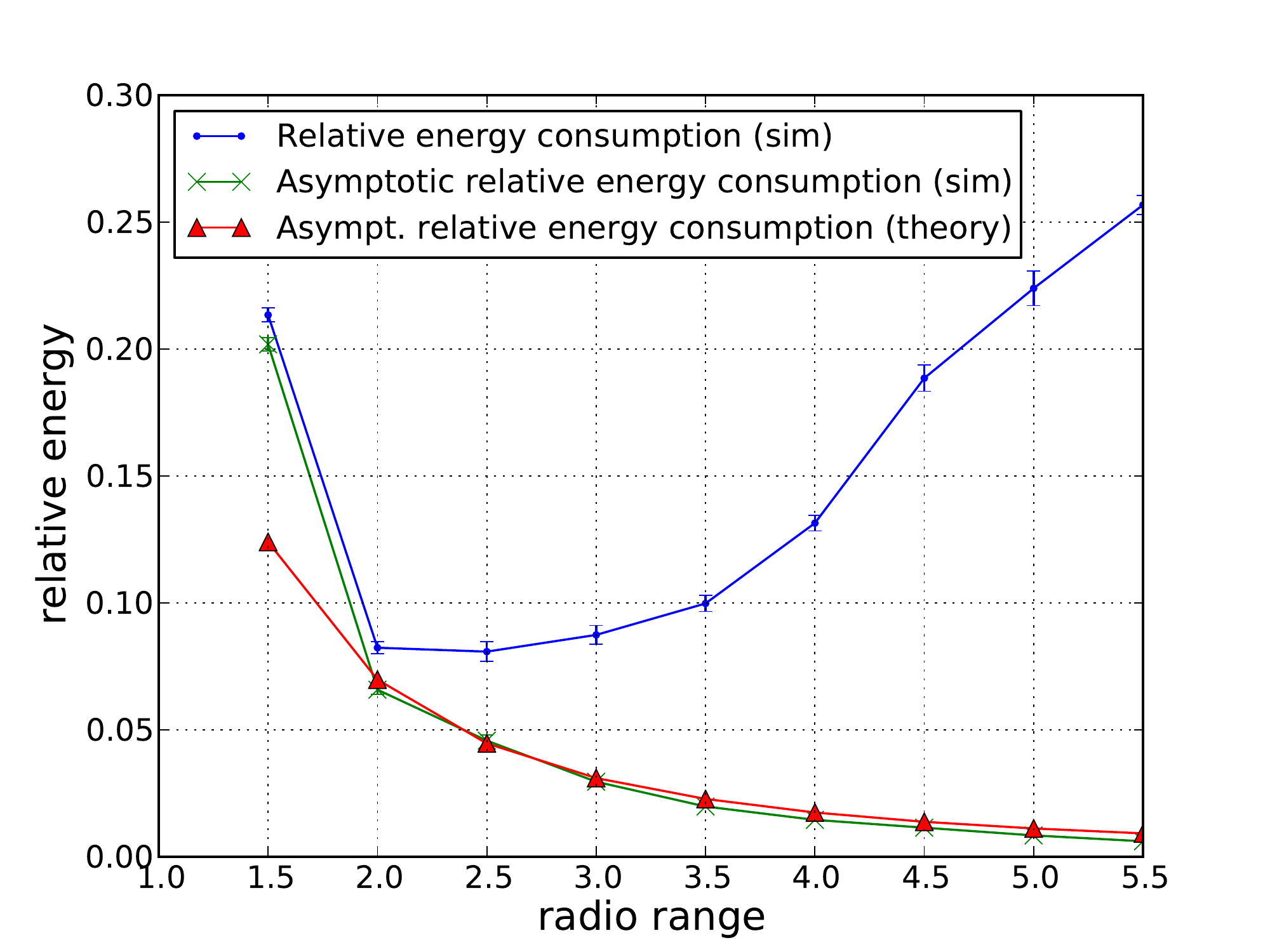}%
\caption{Ratio between energy needed by ORCHID, and by IRCO}\label{fig:energy}
\end{figure}
Although ORCHID requires less cycles for guaranteeing the same routing to one aggregator, another essential feature of ORCHID is the energy-efficiency. This is illustrated by Figure~\ref{fig:energy}: the energy consumption is evaluated for both methods, with schedule parameters  found as in Figure~\ref{fig:orchid-result}. 

In Figure~\ref{fig:energy}, we show the ratio between the energy 
consumed with ORCHID and the energy consumed by IRCO. It is represented by the curve ``Relative energy consumption'': it appears that this ratio is between 
$0.1$ and $1/3$: this means that ORCHID consumes $3$ to $10$ times less
energy than repeating VCM, a dramatic improvement.

Moreover, as detailed previously (Section~\ref{sec:energyCon}), the asymptotic energy gain of ORCHID is 
even better: this is because the density of the $Highway$ is much lower
than the density of the network (especially when the radio range increases).
Hence it is very beneficial to replace repetitions of the VCM cycles by
repetitions of the $Highway$ cycles. In Figure~\ref{fig:orchid-result},
we extrapolate the simulation results for estimating the ratio when
the radius of the disk grows to infinity. Under this condition, in ORCHID, $Highways$ must be repeated several times. Hence, the $Route$ cycle would be a vanishingly low part of the energy cost compared to $Highways$. For this reason, we compare the energy consumed because of the repetition of $Highways$ compared to the repetition of $\mathrm{E\_{IRCO}}$.
This is illustrated by the curve ``Asymptotic relative energy consumption'':
as shown, energy consumption in ORCHID is $2$ orders of magnitude lower
than VCM cycle repetition for range greater than $4$.



\subsection{Number of Reachable Aggregators in one Cycle per Node}\label{sec:reachAgg}
Figure~\ref{avgNbAgg} illustrates the average number of aggregators that any node can reach in a single cycle. 
This figure shows that this number is large. For instance, for a radio range $R=4.5$, 400 aggregators are reachable in one cycle. Consequently, a node can route through any of these aggregators in one cycle. This is useful in case of multiple sinks or mobile sink. In these scenarios, the closest aggregator to a given sink is not necessarily the best aggregator to reach another sink, or the same sink in a new location. Hence, having many possible aggregators that a node can reach rapidly (in one cycle)    is a real advantage of ORCHID.
This is also useful in case of link failure with a given aggregator. To conclude, ORCHID supports sink mobility and some link failures.

\begin{figure}[h!]
\centering
\includegraphics[width=3.5in]{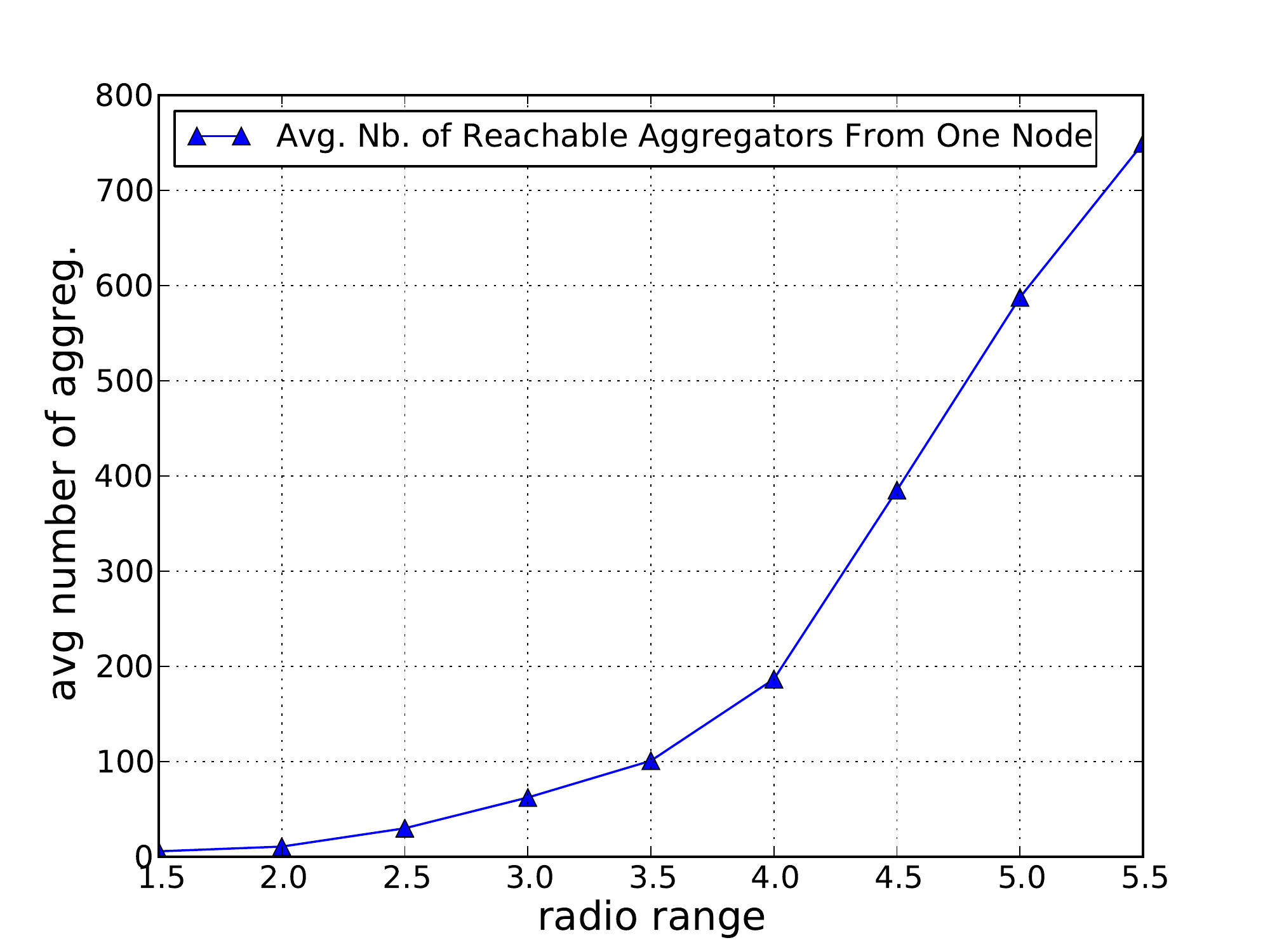}
\caption{Average number of reachable aggregators per node.
\label{avgNbAgg}}
\end{figure}




\clearpage
\part{Estimate of Normalized Delay per Range with Random Color Ordering}\label{part:model}

In this part, we design a model for the determination of the normalized delay per range when the radio range~$\rightarrow \infty$. 
We consider the same system model of Section~\ref{sec:genAssum} and in addition, we make the assumption that the colors are randomly associated with slots. 

\section{Model} \label{sec:modelProof}\label{sec:model}

We follow the approach commonly used in geographic routing studies,
assuming that a route from a source to a destination is followed.
Then the expected progress towards the destination is computed on one step
at an intermediate relay node (as in \cite{kleinrock} for instance). 
Greedy routing is assumed: but here, unlike geographic routing, the
selected next hop is not the one that is the closest to the destination, 
but the one that minimizes the ``normalized delay per range'' (as in \cite{CMAC}).

Let $N$ be this intermediate node. Without loss of generality, we assume that its coordinates are $(0,0)$ and its color is $0$. We denote $n$ the radio range.
Let $\mathcal{N}_{u}$ the set of neighbors of $N$ with x-coordinate $u$ where $u$ in an integer $\in [0,n]$. Let $y(u)$ be the maximum y-coordinate of these nodes. 
We have $y(u) = \lfloor\sqrt{n^2-u^2}\rfloor$. Also, asymptotically, we can write: 
\begin{equation}\label{eq:y}
y(u) = \sqrt{n^2-u^2} + O(1) 
\end{equation}
%
The colors of these nodes are denoted $c_{u,-y(u)}, \ldots, c_{u,1}, c_{u,2},\ldots, c_{u,y(u)}$. Figure~\ref{fig-model} illustrates an example. For clarity reasons, we only present the colors of nodes in the quarter of the disk. 

\begin{figure}[h!]
\centering
\includegraphics[width=2.7in]{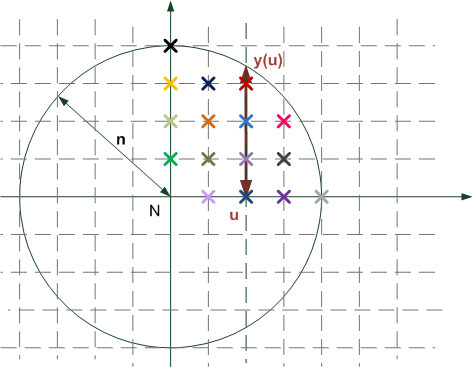}
\caption{Figure illustrating the model used.\label{fig-model}}
\end{figure}

We assume the following approximations: 1) that the destination lies on the x-axis
(towards infinity), 2) that the colors are i.i.d (independent and identically distributed) random real values 
(not integer) uniformly distributed in the interval  $[\alpha, \beta]$.
We select as next hop the neighbor with coordinates
$(u,i)$ that minimizes the ``normalized delay per range'', that is the quantity 
$\frac{c_{u.i}}{\frac{u}{n}}$.

Let $Z \triangleq \max\limits_{(u,i) \in N_u} \frac{c_{u.i} n}{u}$ be this quantity.
It is a random variable, and we show that: 
\begin{property}\label{prop:model}
For $n \rightarrow \infty$, the distribution of $Z$ is approximated by a shifted exponential distribution that has expected value of $\frac{\alpha\pi}{4}+\frac{3(\beta-\alpha)}{2n^2}$.
\end{property}




\section{Computation of the Normalized Delay Per Range}
Let now prove Property~\ref{prop:model}. We introduce intermediate random variables: \\
$\bullet$ \textit{$Y_{u,i} \triangleq \frac{c_{u,i} n}{u}$ is the
 normalized delay per range when selecting neighbor $(u,i)$ 
 with color $c_{u,i}$.}
 Because $c_{u,i}$ is uniformly distributed in $[\alpha,\beta]$, 
    its cumulative distribution function is 


\begin{eqnarray}\label{eq:Yui}
F_{Y_{u,i}}(x) = Pr( Y_{u,i}\le x) = \sigma(\frac{\frac{xu}{n}-\alpha}{\beta-\alpha}) \label{eq:FY}
\end{eqnarray}
with $\sigma$ ``bounding function'', defined as follows:   $\sigma(x) = x$ for $0 \le x \le 1$, $\sigma(x) = 0$ for $x<0$, and $\sigma(x) = 1$ for $x > 1$. \\
$\bullet$ \textit{$X_u \triangleq \min\limits_{i=-y(u), \ldots, +y(u)}~Y_{u,i} $}
is the normalized  delay per range when choosing about the best neighbor with x-coordinate $u$.
Its distribution verifies:
\begin{eqnarray}
F_{X_u}(x)  =   {Pr(X_u \le x)}  
 =  1 -  (1-Pr(Y_{u,0} \le x)) ^{(2y(u)+1)} \label{eq:FX}
\end{eqnarray} \\
$\bullet$ The random variable $Z$  can also be expressed as a minimum 
of minima: $Z= \min\limits_{u \in \{0, 1, \ldots n\}}~X_{u}$. Using Equations~(\ref{eq:FY}) and (\ref{eq:FX}), it is possible to express 
its cumulative distribution $F_Z(x)$ as:
\begin{eqnarray}
F_{Z}(x)  &=   {Pr(Z \le x)}  
 =  1 - \prod\limits_i
(1-Pr(X_{u} \le x)) \notag \\ 
\log(1-F_{Z}(x)) &=\sum_{u=1}^{n}(2y(u)+1)\log(1-\sigma(\frac{\frac{xu}{n}-\alpha}{\beta-\alpha}))
\end{eqnarray}

\newcommand{\tZ}{\tilde{Z}}

If we approximate $\log(1-a) \approx -a$ for $a$ near $0$ (also an upper bound in all cases), and
if we ignore bounding effects of $\sigma$, we can approximate $Z$ 
by $\tZ$, with a close distribution function (when $n \rightarrow \infty$):%
\begin{eqnarray}
\log(1-F_{\tZ}(x))
= -\sum\limits_{u=1}^{n}(2y(u)+1)\frac{\frac{xu}{n}-\alpha}{\beta-\alpha} \notag
\end{eqnarray}
Using Equation~\ref{eq:y}, we have:
\begin{eqnarray}
\log(1-F_{\tZ}(x)) = -\frac{2}{\beta-\alpha} \left( \frac{x}{n}\sum_{u=1}^{n}{u\sqrt{n^2-u^2}} - \alpha \sum_{u=1}^{n}{\sqrt{n^2-u^2}} +O(n) \right) \notag
\end{eqnarray}

Now we introduce the following approximation:

\begin{equation*}
\sum_{u=0}^{n}{\sqrt{n^2-u^2}} = \frac{\pi n^2}{4} + O(n)
\end{equation*}
\begin{equation*}
\sum_{u=0}^{n}{u\sqrt{n^2-u^2}} = \frac{n^3}{3} + O(n^2)
\end{equation*}
Indeed, these are sum of (bounded) functions, respectively: non-decreasing 
and with one unique maximum (increasing then decreasing).
It is easy to bound them with integrals:
\begin{eqnarray*}
 (\int_0^n \sqrt{n^2-u^2}~\mathrm{d}u) -n  & \le & \sum_{u=0}^{n}{\sqrt{n^2-u^2}} \le (\int_0^n \sqrt{n^2-u^2}~\mathrm{d}u)+n   ~~(\mathrm{since}~\sqrt{n^2-u^2} \le n) \\
( \int_0^n u\sqrt{n^2-u^2}~\mathrm{d}u) - 3 n^2 & \le & \sum_{u=0}^{n}{u\sqrt{n^2-u^2}} \le (\int_0^n u\sqrt{n^2-u^2}~\mathrm{d}u) + 3 n^2  ~~(\mathrm{since}~u\sqrt{n^2-u^2} \le n^2)
\end{eqnarray*}
Hence,

\begin{eqnarray}
\log(1-F_{\tZ}(x)) = -\frac{2n^2}{\beta-\alpha} \left(  \frac{x}{3} - \alpha \frac{\pi}{4} + O(\frac{1}{n})\right)
  \mathrm{(bounding~sums~by~integrals)} \notag
\end{eqnarray}

\newcommand{\ttZ}{\widehat{Z}}
Further ignoring the $O(\frac{1}{n})$, we approximate $\tZ$ by $\ttZ$ with:
\begin{eqnarray}\label{eq:PZFinal}
Pr(\ttZ \le  x ) = 1- exp\left[ -\frac{2n^2}{3(\beta-\alpha)} (x - \frac{3\alpha\pi}{4})\right]
\end{eqnarray}
Equation~\ref{eq:PZFinal} meaningful for $x\ge\frac{3\alpha\pi}{4}$.
Equation \ref{eq:PZFinal}, corresponds to a variable $\ttZ-\frac{3\alpha\pi}{4}$ exponentially distributed with parameter $\lambda=\frac{2n^2}{3(\beta-\alpha)}$.
Therefore, the expectation of $\ttZ$, $E(\ttZ)$ is:%
\begin{equation}
E(\ttZ)=\frac{3\alpha\pi}{4}+\frac{1}{\lambda}=\frac{\alpha\pi}{4}+\frac{3(\beta-\alpha)}{2n^2}  \label{eq:EZ}
\end{equation}
$\ttZ$ is an approximation of $Z$ and hence  Property~\ref{prop:model}.

\section{Experimental Validation of Asymptotic Expression for Normalized Delay per Range}
In the previous section, an asymptotic expression was obtained for the
normalized delay per range. 
The goal of the present section is compare this asymptotic expression
with actual results obtained by simulations of the exact modeled system:
this experimentally validates the convergence to the asymptotic value, and also some of the approximations that were made.
\begin{figure}[!h]
\centering
		\subfigure[Cumulative distribution in the form:~$\log(1-F(x))$]{\includegraphics[width=3in]{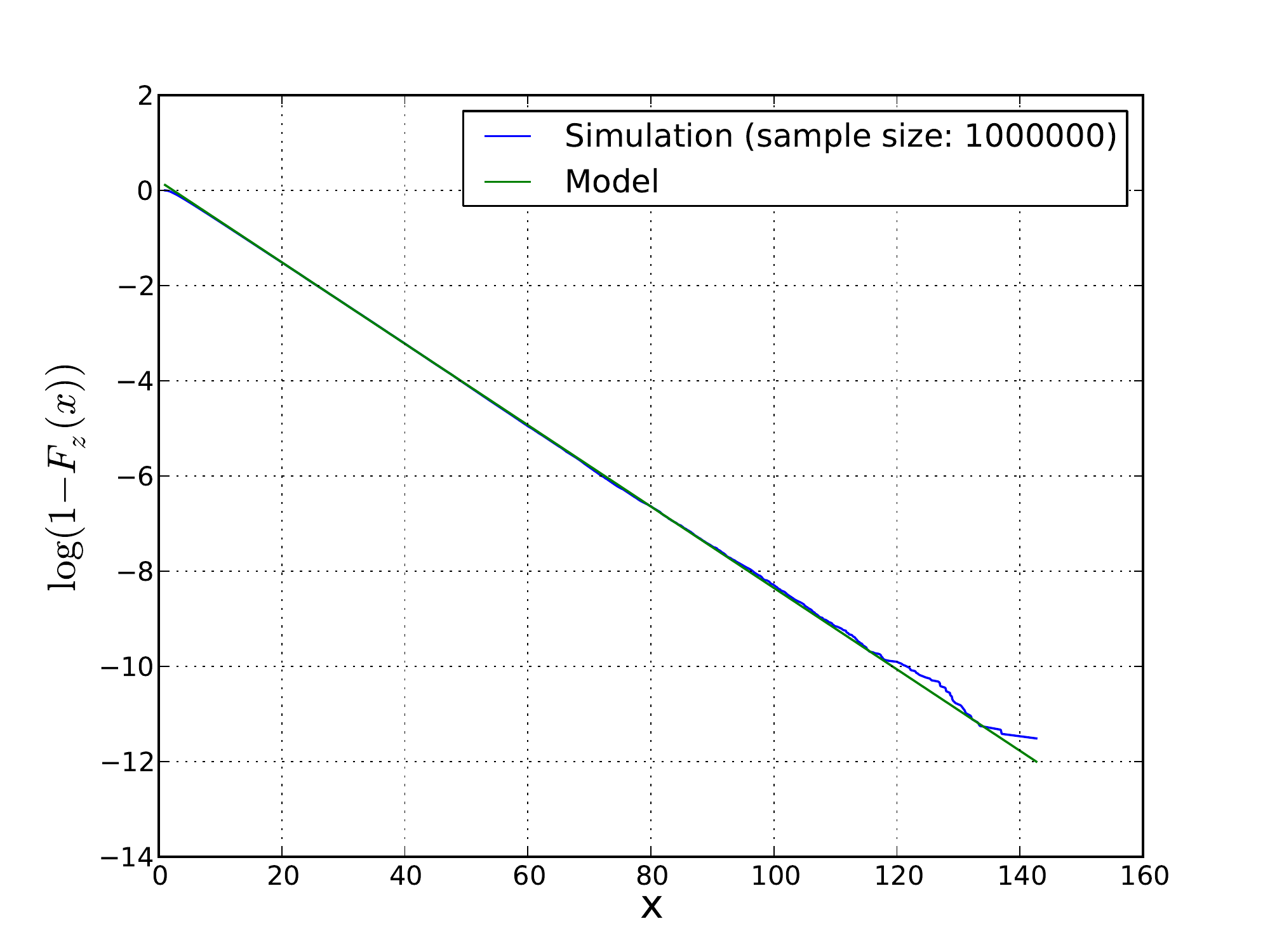}\label{fig:model-sim-distrib}}%
		\subfigure[Convergence to asymptotic expression (magnified: y-axis does not start from 0)]{\includegraphics[width=3in]{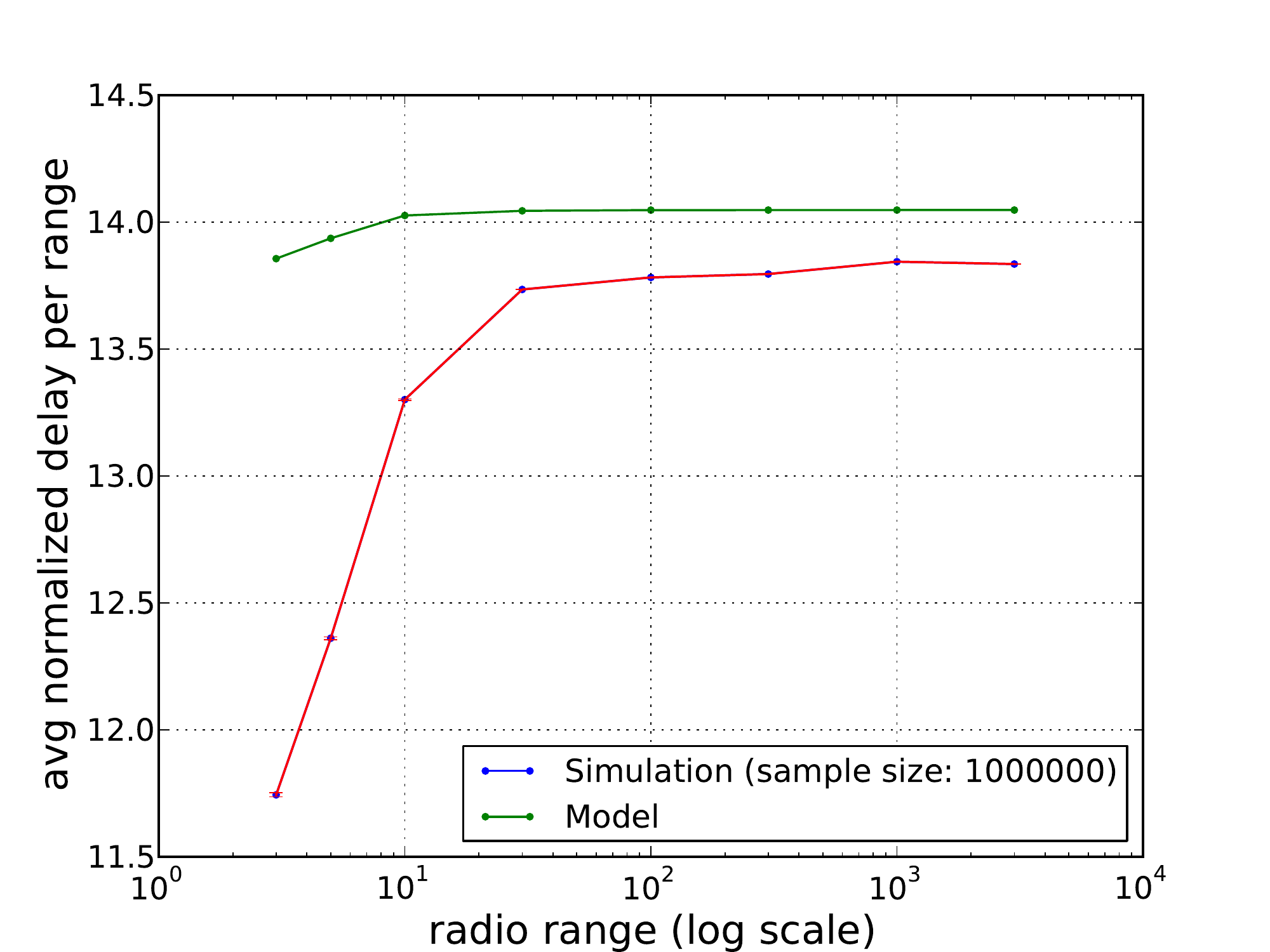}\label{fig:model-delay-vs-range}}%
		\caption{Experimental Validation of Model.\label{fig:model-validation}}
\end{figure}
\newpage
Figure~\ref{fig:model-sim-distrib} represents the results of 
computing the average value of normalized delay per range 
for a radio range of $1000$ 
(from one million random draws). The number of colors is equal to 
$\lfloor\theta n^2 \rfloor$ where $\theta = \frac{\sqrt{3}}{2}h^2$ ; 
this is asymptotic number of colors of VCM for a $3$-hop coloring, 
see~\cite{VCM}. As shown in the figure, parameters of the
exponential distribution match very closely the actual cumulative distribution.

The complementary Figure ~\ref{fig:model-delay-vs-range} represents 
the convergence of the simulation results to the asymptotic expression.
The chosen y-axis scale enhances the differences, but for a radio range equal 
to $3000$ we witness a difference of only $1.5 \%$ between model and simulations.



\begin{figure}[h!]
\centering
\includegraphics[width=3.3in]{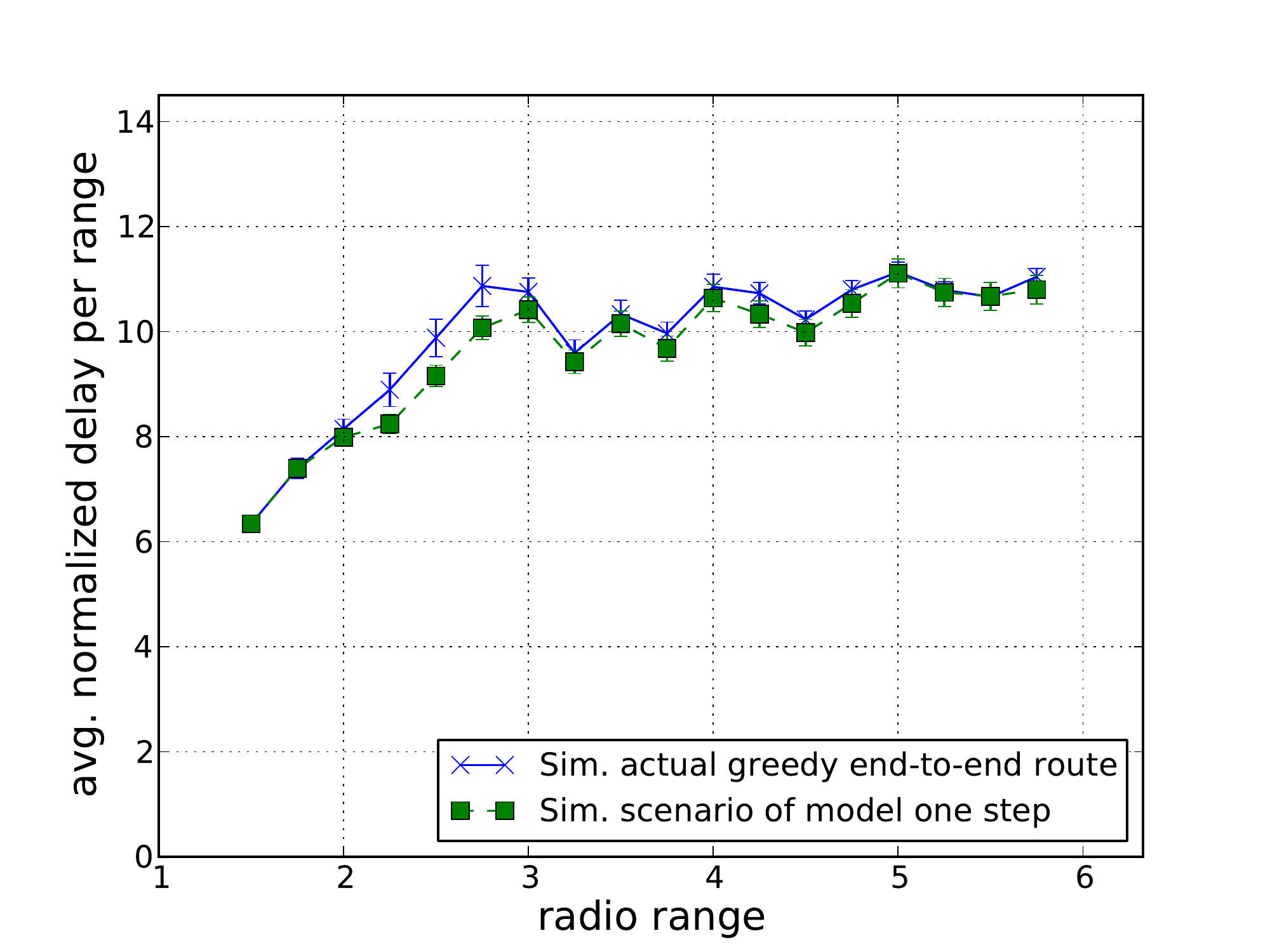}
\caption{Normalized delay in greedy routing and model.}\label{fig:delay-sim-model}
\end{figure}

Figure~\ref{fig:delay-sim-model} illustrates the average normalized delay per range computed by exactly simulating the scenario of the model, and also the simulations of Section~\ref{sec:exp-res} with greedy routing. 

\begin{enumerate}
\item ``Sim. scenario of model one-step'' represents the 
simulation of the scenario described in the model for one step (see Section~\ref{sec:modelProof}). Colors are assimilated to real random numbers.
\item ``Sim. actual greedy end-to-end route'' represents the simulation of the greedy end-to-end route. Recall that colors are randomly ordered in the cycle and that this greedy routing uses the same heuristic as the model (minimize ratio delay per progress). We determine the end-to-end greedy path and the corresponding end-to-end delay. The average normalized delay per range is then obtained by dividing this end-to-end delay by the number of hops of the path according to the definition. 
\end{enumerate}
The observed normalized delay is similar for both methods. This result validates the method and scenario adopted by our model to compute the average normalized delay per range. 

\newpage
\part{ORCHID for General Graphs under SINR Interference Model}\label{sec:orchid++}
So far, we have considered a grid network with a protocol-based interference model. In this part, we will discuss how to use ORCHID for general graphs under the SINR interference model (realistic physical model). 

\section{From General Graphs to Grids}\label{sec:realORCHID}
We assume that we have a general graph of a given density. We assume also that this graph is fully connected. Our aim is to profit from ORCHID benefits, in particular from VCM. Indeed, as previously highlighted, VCM needs no message exchange between nodes to perform grid coloring. Hence, to be able to apply grid coloring, our methodology is to map a cell grid over the general graph. This grid must contain all nodes. Its grid step (the cell width) must be $\le$ the transmission range of nodes. It is normalized to $1$. 
\begin{remark}
Notice that each cell may contain more than or equal to zero nodes. However, we assume that at each cell there is at least one node. 
In a future work, we will consider the case of empty cells and compensate the arising routing holes by local repair. Section~\ref{sec:reachAgg} has illustrated the following property: the number of aggregators reachable by one node in one cycle is large, and increases with radio range (or equivalently with density, through rescaling). Thus one simple local repair strategy would be to select a different route with a different aggregator.
\end{remark}  
Given this cell grid, we use ORCHID as follows: \\
$\bullet$ Instead of coloring nodes individually, our idea consists in coloring the cells using \textit{VCM++}: a modified version of VCM that is based on the SINR interference model (see section~\ref{sec:vcm-sinr}). Hence, each cell has a color computed by VCM++. Cells having the same color are cells at the lattice formed by the generator vectors of VCM++. \\ 
$\bullet$ Nodes inside one cell are assigned different colors. \\
$\bullet$ For each cell, we select a specific node called the \textit{cell representative}. It is the closest node to the center of the cell. Nodes inside a given cell transmit data to their cell representative. Hence, each cell has a slot called \textbf{super-slot} associated with its color. This slot itself is divided into \textbf{sub-slots} reserved to the colors of nodes inside the corresponding cell.
Figure~\ref{fig:vcm++} illustrates an example.
\begin{figure}[H]
\centering
\subfigure[Cell Grid]{\includegraphics[width=1.8in]{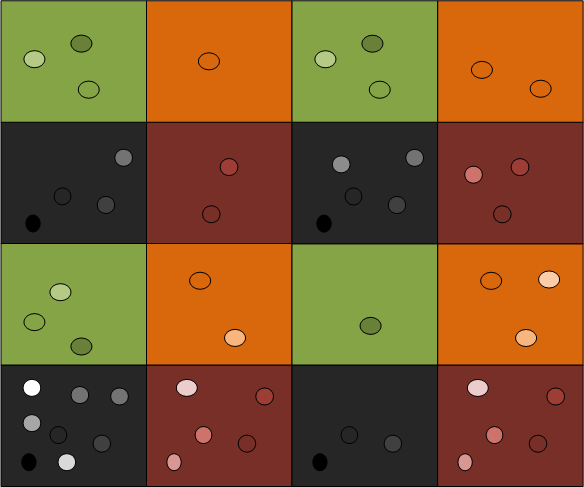}}
\subfigure[Routes Cycle]{\includegraphics[width=3.2in]{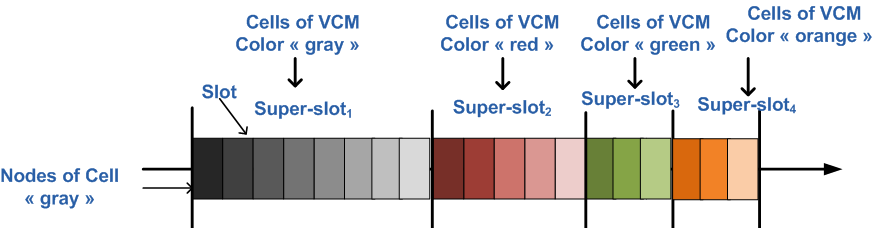}}\label{fig:cycle-new}
\caption{An illustrative example of a colored cell grid and the cycle obtained.\label{fig:vcm++}}
\end{figure}
In this figure, we have $4$ colors, hence $4$ super-slots. The most populated cell with color orange contains $3$ nodes, hence the super-slot orange is divided into $3$ sub-slots. \\

\section{Adapting ORCHID Routing}
Concerning routing, we build the hierarchical architecture of ORCHID as follows:
\begin{enumerate}
\item We consider the cells occupying the lattice $\mathcal{L}(u_1,u_2)$ where $u_1$ and $u_2$ are the generator vectors of VCM++. \textit{Aggregators} are representative of these cells.
\item \textit{Dominating Trees} are rooted at these aggregators. 
 Notice here that given two aggregators, dominating trees rooted at these aggregators are not necessarily similar because the initial general graph is no longer regular.
\item We use the same algorithm as in ORCHID to determine \textit{Highways} between any two given aggregators.
\end{enumerate}  

Furthermore, we keep \textit{CO-Ordering} of ORCHID to order the colors of nodes in $Routes$ and $Highways$. The global cycle is formed by $Routes$ period (see Figure~\ref{fig:vcm++} as an example) followed by $Highway$ period. 

In the following, we provide some details about the coloring method VCM++.

\section{VCM++: Adapting VCM to Color Cells using SINR Interference Model}\label{sec:vcm-sinr}
The physical interference model is defined as follows~\cite{gupta-kumar}.
\begin{definition} \label{def:sinr}
Let $P$ the common transmission power at all nodes. Let ${(X_k)}$, $k$ integer, be the set of nodes transmitting simultaneously. Let the constant $N$ be the background noise. A transmission from a node $X_i$ is successfully received by a node $X_j$ if:
\begin{equation}
\frac{\frac{P}{|X_i-X_j|^ \alpha}}{N+\sum_{k \ne i}{\frac{P}{|X_k-X_j|^ \alpha}}} \ge \beta
\end{equation}
where $|X_i-X_j|$ denotes the distance between any two nodes $X_i$ and $X_j$ and $\beta$ a constant threshold.
\end{definition}

Many works dealt with the scheduling in lattice based on SINR interference model (e.g. \cite{STDMA2,haenggi}). 
In our work, concurrent transmissions are at the lattice generated by VCM++ vectors. The aim of this section is to modify the method of vector search of VCM in case SINR interference model is considered. Let $\mathcal{L}(u_1,u_2)$ be the lattice generated by VCM++ generator vectors $u_1$ and $u_2$:
$\mathcal{L}(u_1,u_2) = \{(au_1+bu_2) | (a,b) \in \ZZ^2\}$.
For any vector $u$, we define the cell $c(u)$ as: $c(u) =\{ u+(x,y) | (x,y) \in [0,1]^2\}$. $u$ is called the coordinate of the cell. 
When we apply VCM++ to color the cells, all nodes in all cells of coordinates $w$ such that $w \in \mathcal{L}(u_1,u_2)$ will transmit at the same super-slot. 
Considering a transmission from node $(0,0)$ to node with coordinates  $z$, the interference at node of coordinate $z$ in the worst case is:
\begin{equation}
I(z) = \sum_{w\in \mathcal{L}(u_1,u_2)\setminus\{(0,0)\}}{\max_{t\in c(w)} P |t-z|^{-\alpha}}
\end{equation}
The associated SINR $S_{min}(w)$ computed at any cell $c(v)$ upon a reception from the cell $c(0)$ is:
\begin{equation}
S_{min}(v) = \min_{\substack{t_u\in c(0)\\ t_v\in c(v)}} \frac{P |t_u-t_v|^{-\alpha}}{I(v)+N}
\end{equation}

The transmission from $c(0)$ to $c(v)$ is successful if and only if: $S_{min}(v) \ge \beta $. 
 Then, to determine the successful transmissions from $c(0)$, we search the largest disk of radius $D_{u_1,u_2}$ where every cell can receive successfully from $c(0)$. \\
 Hence,  
$D_{u_1,u_2} = \mathrm{max} \{ |v|: v \in \ZZ ^2 \mathrm{~,~} \forall w \in \ZZ ^2:|w|>|v| \implies S_{min}(w) < \beta \}.$

Finally, we can select $u_1$ and $u_2$ as:
\begin{equation}
u_1,u_2 = \argmax_{(u_1,u_2)}{\frac{\pi D_{u_1,u_2}^2}{\mathrm{det(u_1,u_2)}}}
\end{equation}
With this formula, we ensure a tradeoff between the number of successful transmissions and the total number of colors.

\clearpage
\part{Conclusion}\label{part:conclusion}\label{sec:conclusion}
This report addressed the delay optimization issue in STDMA systems. In particular, we focused on delays induced by the misordering of time slots which is less often treated in literature. 
To deal with this problem, we first started by evaluating the delays when the slots are randomly ordered in the cycle. Hence, we evaluated the normalized delay per range  
with simulation and with a stochastic model.
Then, we proposed a solution called ORCHID that joins routing and scheduling. 

The challenge was to build routes and then to order nodes transmissions according to their appearance order in these routes. This ensures that data are delivered in a single STDMA cycle, which is very delay efficient. We performed analytical and experimental performance analysis of ORCHID. Results show that ORCHID uses smaller number of slots and is more energy efficient than shortest-delay path routing.

We also provided a method to adapt ORCHID to general graphs under the SINR model. The key idea was to map a cell grid over the initial graph. Considering the assumption that any cell can be empty is a future work.

%
%

\clearpage
\section*{Acknowledgment}
{This work has been partly supported by DGA/ASTRID/ANR-11-ASTR-0033.}

\clearpage
\part{Annex}
\begin{appendix}
In this annex, we provide figures illustrating some results. 

Figure~\ref{domTree-R2} illustrates a dominating tree built by ORCHID for $R=2$.
\begin{figure}[h!]
\centering
\includegraphics[width=2.7in]{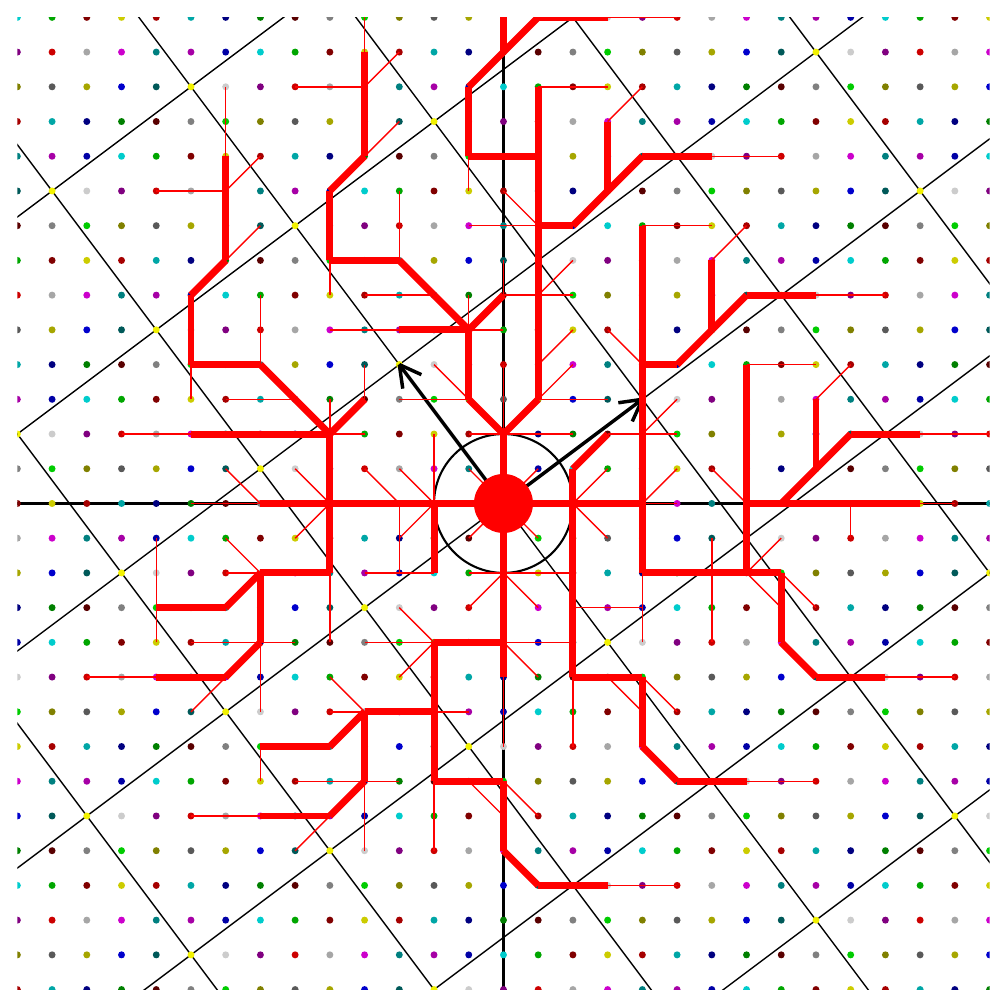}
\caption{The dominating tree generated by ORCHID 
for R=2.\label{domTree-R2}}
\end{figure}

Figure~\ref{twoDomTree-R2} illustrates two dominating trees rooted at $(0,0)$ and $(4,3)$ for $R=2$.

\begin{figure}[h!]
\centering
\includegraphics[width=2.7in]{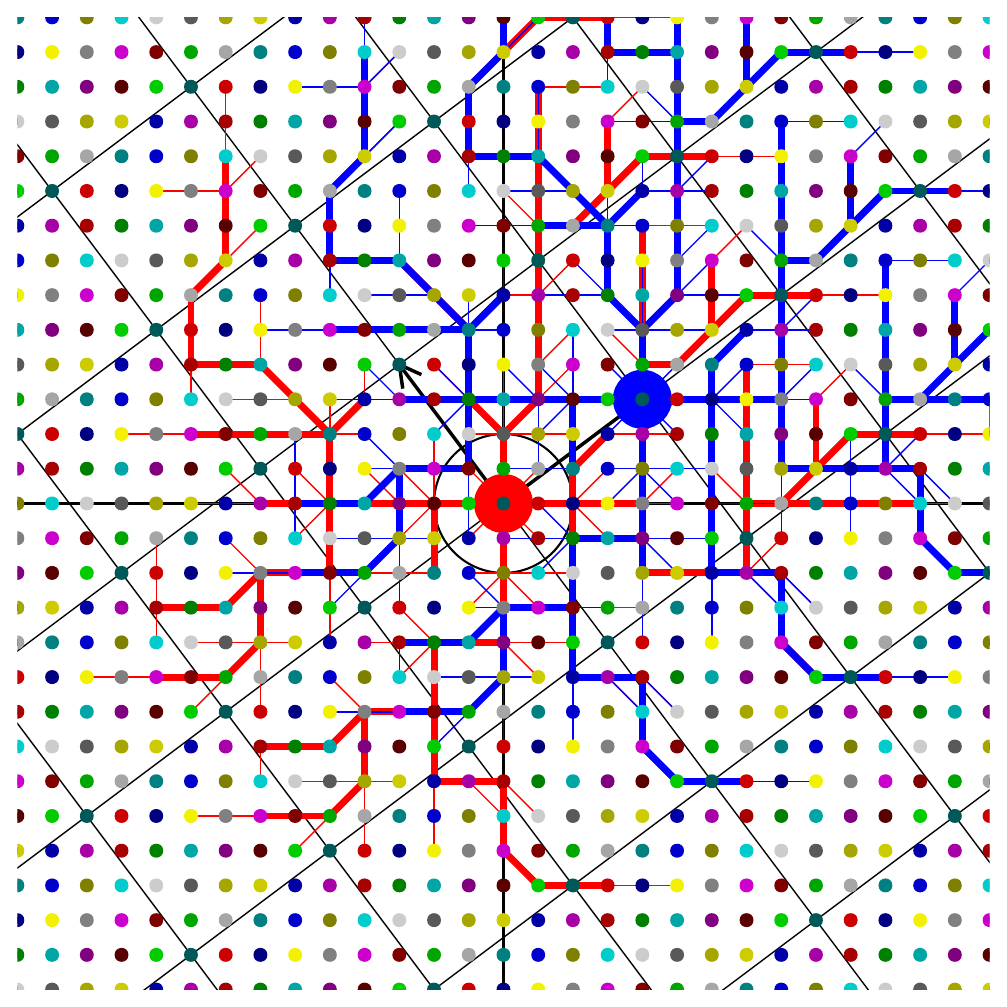}
\caption{Two dominating trees generated by ORCHID for R=2.\label{twoDomTree-R2}}
\end{figure}


Figure~\ref{allHighways-R2} illustrates the highways built by ORCHID for $R=2$ and $R=3$. 

\begin{figure}[h!]
\centering
\subfigure [R=2]{\includegraphics[width=2.7in]{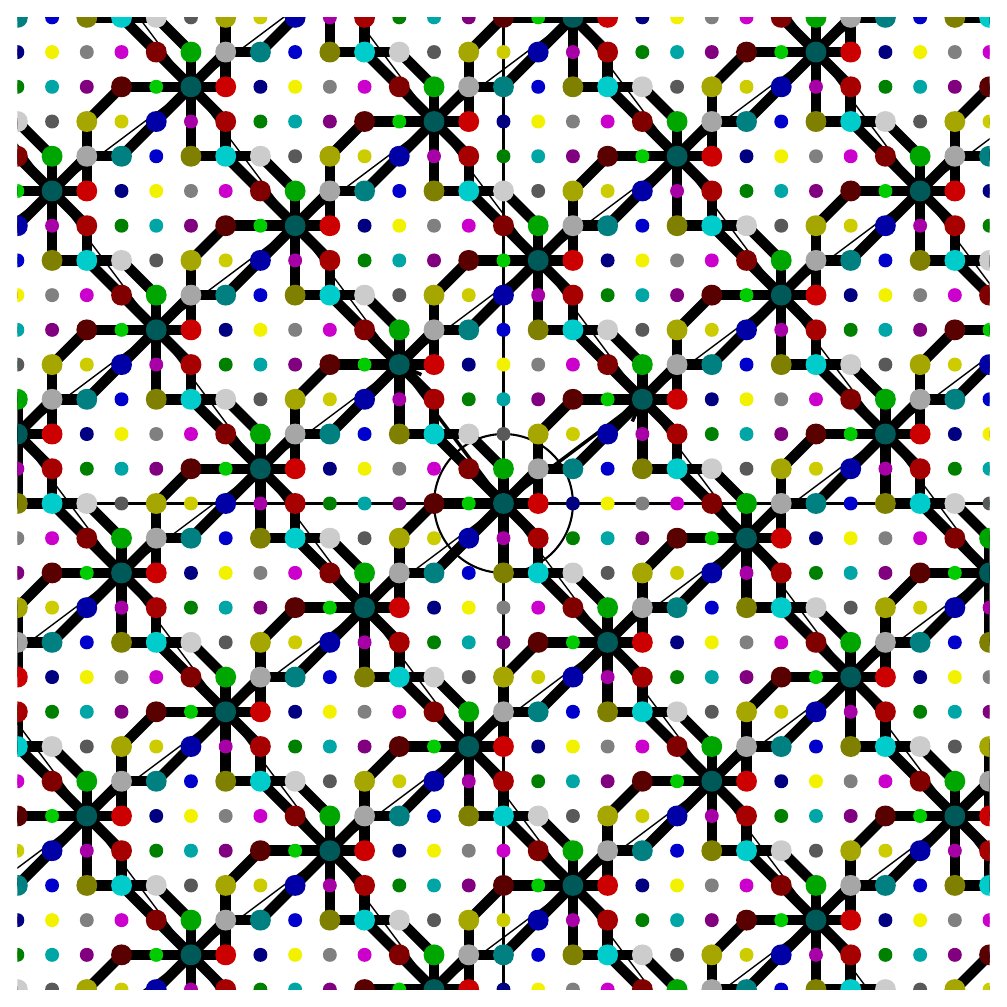}}
\subfigure [R=3]{\includegraphics[width=2.7in]{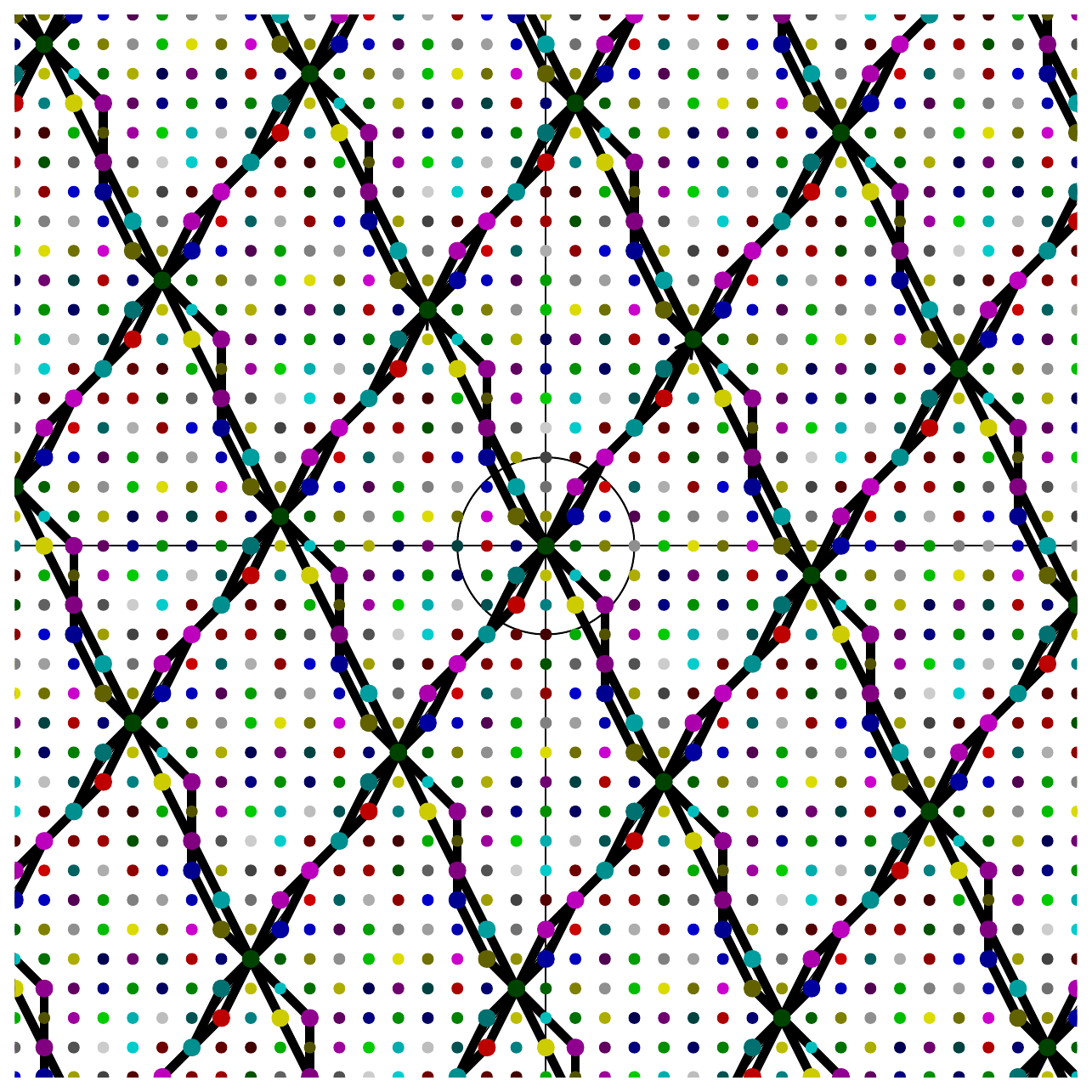}}
\caption{Highways built by ORCHID.
\label{allHighways-R2}}
\end{figure}

\newpage
Figure~\ref{short-path-delay-R2} illustrates the shortest-delay path between the node $(11,11)$ and the sink at $(0,0)$, and also the greedy routing path from $(-11,-11)$ to the same sink. The numbers in the circles indicate the color
of each node.

\begin{figure}[h!]
\centering
\includegraphics[width=3.5in]{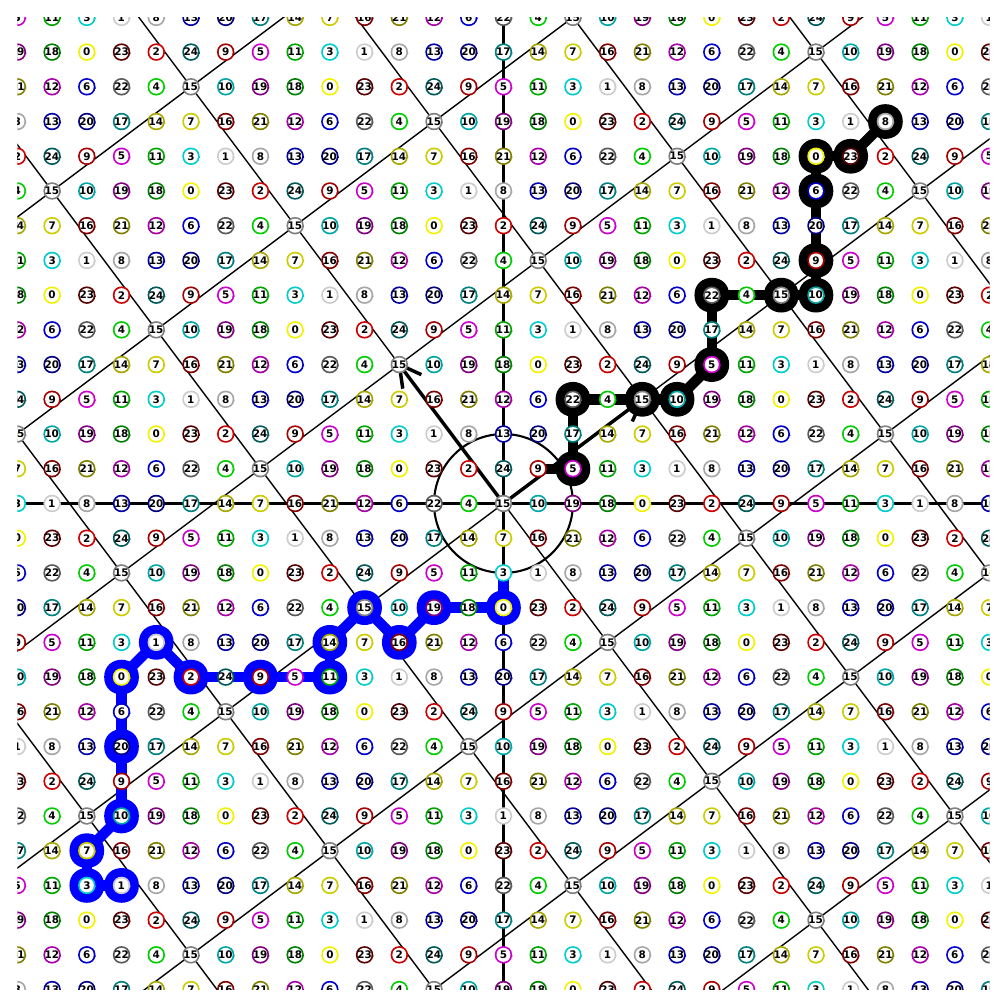}
\caption{Examples of paths computed for $R=2$.
\label{short-path-delay-R2}}
\end{figure}

\end{appendix}

\end{document}